\newtcolorbox[auto counter]{mybox}[2]{
enhanced,
breakable,
label=#1,
colback=blue!5!white,
colframe=blue!75!black,
fonttitle=\bfseries,
title=Box \thetcbcounter: #2
}
\newtheorem{theorem}{Theorem}
\newtheorem{lemma}{Lemma}
\newtheorem{corollary}{Corollary}
\newtheorem{definition}{Definition}
\newtheorem{proposition}{Proposition}
\newcommand{\id}{\mathbb{I}}
\newcommand{\wt}{\mathrm{wt}}
\newcommand{\wts}{\mathrm{wt_s}}
\newcommand{\red}[1]{{\color{red} #1}}
\newcommand{\blue}[1]{{\color{blue} #1}}
\newcommand{\comments}[1]{}
\definecolor{mintcyan}{RGB}{180, 240, 220}
\begin{document}
%\preprint{APS/123-QED}
\title{State complexity and phase identification in adaptive quantum circuits}
%\date{}
\author{Guoding Liu}
\thanks{Both authors contributed equally to this work}
\author{Junjie Chen}
\thanks{Both authors contributed equally to this work}
\author{Xiongfeng Ma}
\email{xma@tsinghua.edu.cn}
\affiliation{Center for Quantum Information, Institute for Interdisciplinary Information Sciences, Tsinghua University, Beijing, 100084 China}

\begin{abstract}
Adaptive quantum circuits, leveraging measurements and classical feedback, significantly expand the landscape of realizable quantum states compared to their non-adaptive counterparts, enabling the preparation of long-range entangled states and topological phases at constant depths. However, the ancilla overhead for preparing arbitrary states can be prohibitive, raising a fundamental question: which states can be efficiently realized with limited ancilla and low depth? Addressing this question requires a rigorous quantitative characterization of state complexity, or the minimum depth and ancillas, to realize a state in adaptive circuits.
In this work, we tackle this problem by introducing two properties of quantum states: state weight and anti-shallowness, connected to the correlation range and correlation strength within a state, respectively. We prove that these quantities are bounded under limited circuit resources, thereby providing rigorous bounds on the approximate complexity of state preparation and gate implementation. Illustrative examples include the GHZ state, W state, QLDPC code states, and the Toffoli gate.
Besides complexity, we show that states within the same quantum phase, defined by a set of quantum states connected with constant-depth circuits, must share the same scaling of weight or anti-shallowness. This establishes these quantities as indicators of quantum phases and their essential roles in many-body physics.
\end{abstract}

\maketitle

\section{Introduction}
Adaptive quantum circuits, featuring measurements and classical feedback, enable capabilities far beyond purely unitary or non-adaptive operations. Using information extraction, dynamic control, and conditional evolution, adaptive strategies significantly enhance quantum tasks such as state preparation~\cite{Piroli2021adaptive,verresen2022efficientlypreparingschrodingerscat,Yan2025Variational,Malz2024MPS,piroli2024approximatingmanybodyquantumstates}, gate implementation~\cite{Sun2023Preparation,zi2025constantdepthquantumcircuitsarbitrary}, circuit sampling~\cite{cao2025measurementdrivenquantumadvantagesshallow}, error correction~\cite{Shor1995code,gottesman1997stabilizer,nielsen2010quantum}, and measurement-based quantum computing~\cite{Raussendorf2003MBQC,Briegel2009MBQC}. Remarkably, even shallow adaptive circuits can generate long-range entangled states~\cite{Lu2022LRE,Zhu2023LRE,Tantivasadakarn2023Finite,Iqbal2024Topological}, which are central to topological phases~\cite{Chen2010topological} and error-correcting codes~\cite{bravyi2024entanglement}. Adaptivity thus expands the landscape of states and gates accessible by quantum circuits.

It has been shown that shallow adaptive circuits can prepare any quantum state exactly~\cite{zi2025constantdepthquantumcircuitsarbitrary}, but this generally requires an exponentially large number of ancillary qubits. In practice, both circuit depth and ancilla count are constrained, which limits the set of states that can actually be realized. For example, in geometrically constrained systems, the information scrambling speed is limited when only a restricted number of measurements are allowed~\cite{friedman2023locality}. One would expect that low-depth, few-ancilla adaptive circuits have only limited state-preparation power. This naturally raises the question: what kinds of quantum states can be realized under limited resources? The situation is particularly unclear in all-to-all systems where any two qubits are directly connected.

Addressing this question requires a careful characterization of state complexity in adaptive circuits. Unlike the non-adaptive setting, where complexity is fully captured by the minimum depth needed to generate a given state~\cite{Brandao2021Complexity,Daniel2017complexity,Liu2018complexity,anshu2022Bounds}, state complexity in adaptive circuits involves two resources: circuit depth and the number of ancillas. Characterizing complexity in this context therefore necessitates identifying intrinsic properties of quantum states that faithfully capture the space and time costs of adaptive preparation, and establishing quantitative relations between these properties and the required resources. Such a framework is essential for uncovering and leveraging the true power of adaptive circuits.

The study of circuit complexity is closely connected to quantum phases~\cite{Chen2010topological,baiguera2025quantumcomplexitygravityquantum}, which are defined as sets of quantum states that can be transformed into one another via shallow circuits, i.e., constant-depth circuits. States within the same phase share similar properties, including entanglement structure~\cite{Chen2010topological,Wen2013LRE,Leone2021quantumchaos}, error robustness~\cite{Kitaev2003anyons,Yi2024AQEC}, and symmetry properties~\cite{Sachdev1999phase}. Thus, it is important to distinguish states from different phases. Notably, while states in the same phase exhibit the same scaling of circuit complexity, the converse does not necessarily hold. Distinguishing quantum states with the same circuit complexity but belonging to different phases thus provides a more refined characterization of quantum states. A quantitative study relies on developing phase indicators: if the scaling of these indicators differs between two states, they necessarily belong to different quantum phases.

In this work, we address these questions by studying state complexity and phase indicators in adaptive circuits, summarized in Fig.~\ref{fig:summary}. We introduce two key properties of quantum states: state weight and anti-shallowness, and show that they are both related to the correlation function. Importantly, we prove that they are phase indicators, and they provide lower bounds on approximate circuit complexity in adaptive circuits: given their values, one can determine the necessary circuit depth and ancilla count for adaptive state preparation within a small error.

\begin{figure}[htbp]
\centering
\includegraphics[width=15cm]{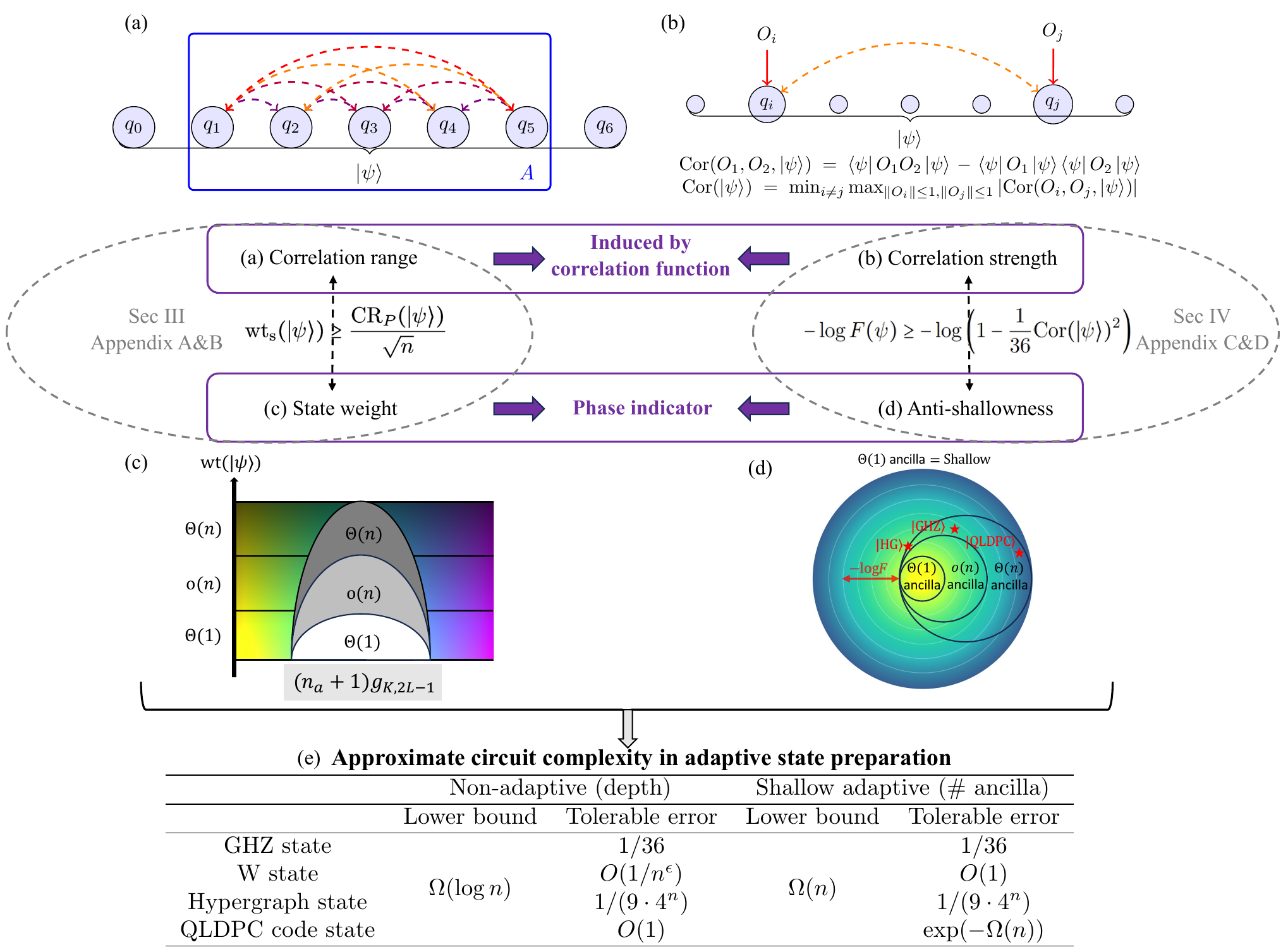}
\caption{
Diagram of main results. We introduce two properties of quantum states, state weight and anti-shallowness, serving as phase indicators, since their distinct scaling behaviors distinguish different quantum phases. We further relate state weight to correlation range and anti-shallowness to correlation strength by proving the two inequalities shown in the figure. The results of state weight are shown in Section~\ref{sc:tradeoff} and Appendices~\ref{appendsc:genestabilizer} and~\ref{appendsc:weight}. The results of anti-shallowness are shown in Section~\ref{sc:shallow} and Appendices~\ref{appendsc:antishallow} and~\ref{appendsc:qlpdc}.
(a) Diagram of correlation range. The correlation range is defined as the maximal region in which all qubits exhibit nonzero correlations.
(b) Diagram of correlation strength. The correlation strength is the minimum magnitude of correlation functions across all qubit pairs.
(c) State weight and associated time-space trade-off relations. For a state prepared by depth-$L$ adaptive circuits with $n_a$ ancillas and $K$-bounded fan-in gates, the state weight satisfies $\wt(\ket{\psi}) \leq (n_a+1) g_{K,2L-1}$.
(d) Anti-shallowness. The anti-shallowness $-\log F$ of any shallow-adaptive-circuit state cannot exceed the number of ancillas, with different colors denoting different magnitudes of anti-shallowness. Notably, although good QLDPC states, GHZ states, and hypergraph states all require $\Omega(n)$ ancillas for preparation, their anti-shallowness scales as $\Theta(n)$, constant, and exponentially small, respectively. We also prove that shallow adaptive circuits with only $\Theta(1)$ ancillas can generate only trivial states, leading to the label ``$\Theta(1)$ ancilla $=$ Shallow" in the diagram.
(e) Examples of quantum states with approximate circuit complexity lower bounds that follow from our indicators. The circuit complexity includes two types. One is the circuit depth in non-adaptive circuits, and the other is the number of ancillary qubits in shallow adaptive circuits. We also show the tolerable error to make the lower bounds hold. That is, preparing quantum states with infidelity smaller than the tolerable error still requires $\Omega(\log n)$ depth or $\Omega(n)$ ancillas. As a remark, both state weight and anti-shallowness are always positive and upper-bounded by $O(n)$.}
\label{fig:summary}
\end{figure}

The definition of state weight is inspired by operator weight and the stabilizer formalism~\cite{gottesman1997stabilizer}. For stabilizer states, one can list the stabilizer generators and minimize their weight. Analogously, for a generic pure state $\ket{\psi}$, one may define generalized stabilizer generators~\cite{Zhang2021generalized,Martin2007generalized}, and define the state weight $\wt(\ket{\psi})$ as the minimized largest support among them. This quantity captures the range of correlations in the system. To formalize this, we introduce the correlation range, the maximal size of a region in which all qubits are mutually correlated, and prove that a large correlation range implies a large state weight.

For both state weight and correlation range, we establish time–space trade-offs in adaptive circuits: if a state is prepared by depth-$L$ adaptive circuits with $n_a$ ancillas and $K$-bounded fan-in gates, their scaling cannot exceed $(n_a+1)g_{K,2L-1}$, as shown in Fig.~\ref{fig:summary}(c). Here, $g_{K,2L-1}$ is the lightcone size of depth-$(2L-1)$ circuits. As an example, it equals to $K^{2L-1}$ for all-to-all connected circuits. This bound is further strengthened for stabilizer states and Clifford adaptive circuits. Such trade-offs allow us to estimate the approximate circuit complexity of typical quantum states. For example, preparing an $n$-qubit GHZ or W state with constant infidelity necessarily requires $\Omega(n)$ ancillas. Moreover, the complexity of states translates into the complexity of gates. For instance, the complexity of the hypergraph state $CZ_n\ket{+}^{\otimes n}$~\cite{Rossi2013Hyper,Qu2013hypergraph,Chen2024magicofquantum} implies a lower bound on $n$-qubit Toffoli gate compiling, which requires $\Omega(\log n)$ circuit depth or $\Omega(n)$ ancillary qubits for realization.

The second indicator, anti-shallowness, is defined as $-\log F(\psi)$~\cite{bravyi2024entanglement,Wei2003Geometric,Orus2014Geometric}, where $F(\psi)$ is the maximal fidelity between $\ket{\psi}$ and shallow-circuit states. Anti-shallowness quantifies how far $\ket{\psi}$ deviates from the trivial phase and reflects its robustness against noises~\cite{bravyi2024entanglement}. Unlike state weight, which reflects the range of correlations, anti-shallowness is related to their magnitude. Particularly, we introduce the global correlation strength, the minimum correlation value among all qubit pairs, and show that it provides a lower bound on anti-shallowness.

Focusing on shallow adaptive circuits, attractive due to their low depth and resilience to decoherence~\cite{Tantivasadakarn2023Finite,Iqbal2024Topological}, we prove that the anti-shallowness of any state prepared in this regime cannot exceed the number of ancillas. This bound is tight: good quantum low-density parity-check (QLDPC) code states~\cite{Panteleev2022goodQLDPC} on $t$ qubits can be prepared with $\Theta(t)$ ancillas, achieving $\Theta(t)$ anti-shallowness scaling. Thus, shallow adaptive circuits with $\Theta(n)$ ancillas suffice to realize the maximum possible scaling $\Theta(n)$. This result yields an approximate circuit complexity lower bound for good QLDPC code states, which require $\Omega(n)$ ancillas for shallow preparation with exponentially small infidelity. As a phase indicator, anti-shallowness further distinguishes states by their noise robustness. For instance, although good QLDPC states, GHZ states, and hypergraph states all require $\Omega(n)$ ancillas for preparation, their anti-shallowness scales as $\Theta(n)$, constant, and exponentially small, respectively, as illustrated in Fig.~\ref{fig:summary}(d).

The paper is organized as follows. Section~\ref{sc:pre} covers the necessary preliminaries. In Section~\ref{sc:tradeoff}, we introduce the state weight and correlation range, establish time-space trade-off relations in adaptive circuits, and show the approximate circuit complexity for permutation-invariant states in an adaptive setting. Section~\ref{sc:shallow} introduces anti-shallowness, connects its range with the number of ancillas, and provides circuit complexity for good QLDPC code states. This section also presents protocols for preparing QLDPC code states. We conclude with a discussion in Section~\ref{sc:discussion}.

\section{Preliminary}\label{sc:pre}
\subsection{Notations and adaptive circuits}
In this part, we introduce basic concepts and notations in this work. We use $m$ and $n$ to denote the number of qubits, where $m$ is the total number of qubits, and $n$ is typically the number of qubits in the target state. We also denote $n_a = m-n$ as the number of ancillary qubits. The Hilbert space of $n$ qubits is $\mathcal{H} = \bigotimes_{i=1}^n \mathcal{H}_i$, where each $\mathcal{H}_i$ is a 2-dimensional Hilbert space with computational basis $\{\ket{0}, \ket{1}\}$. The basis of $\mathcal{H}$ is then $\{\bigotimes_{i=1}^n \ket{a_i}, a_i\in \{0, 1\}\}$, and any state in $\mathcal{H}$ can be written as a linear combination of these basis states. For convenience, we write $\ket{a}\ket{b} = \ket{a} \otimes \ket{b}$ and $\ket{ab}$ interchangeably, and denote $\ket{0}^{\otimes n}$ as $\ket{0^n}$.

A quantum gate $U$ is a unitary operation on $\mathcal{H}$. If $U$ acts nontrivially on only $K$ qubits, we call it a $K$-bounded fan-in gate. Formally, $U = U_A \otimes \id_{\bar{A}}$, where $A \subseteq [n] \coloneqq \{1,\ldots,n\}$ with $|A| = K$, and $U_A$ cannot be further decomposed. Typically, $K$ is a small constant, reflecting physical constraints. Similarly, a Hermitian operator or observable $O$ can also be written as $O = O_A \otimes \id_{\bar{A}}$, where $A$ is the \textit{support} of $O$, and we denote $\mathrm{supp}(O) = A$. We also
denote the weight of $O$ as $\wt(O) = |A|$.

For state preparation tasks, unless stated otherwise, we assume the initial state is $\ket{\mathbf{0}} \coloneqq \ket{0^m}$. A quantum circuit with only unitary operations prepares a state by applying a sequence of unitary gates ${U_1, U_2, \ldots, U_D}$ to $\ket{\mathbf{0}}$. Each $U_i$ is a layer of $K$-bounded fan-in gates acting on different qubits, and the full unitary is $U = U_D \cdots U_1$. The circuit depth of $U$ is $D$, and the circuit complexity of a state is the minimum depth required to prepare it using such a circuit. A shallow circuit is the case when $D$ is constant.

To enhance circuit power, one can introduce adaptive circuits~\cite{Piroli2021adaptive}. An adaptive quantum circuit includes unitary gates, mid-circuit measurements, classical processing of measurement outcomes, and classically controlled unitary gates based on previous outcomes. Without loss of generality, measurements are in the computational basis, and measured qubits are discarded. Thus, each measurement consumes an additional ancillary qubit. Notably, both quantum gates and measurements should be taken into account for circuit depth. For instance, the sequential measurements in measurement-based quantum computing would generally lead to a non-constant circuit depth.

The adaptive circuit is expressed as $U' = E_L \cdots E_1$, where each layer $E_i$ may include $K$-bounded fan-in unitary gates or measurements with classical control, possibly depending on all previous outcomes. Within each layer, unitary operations and measurements must act on distinct qubits. As long as operations, including gates and measurements, target different qubits, they can be executed in parallel, thereby reducing the circuit depth. The time cost is defined as the circuit depth needed to perform all operations in parallel. Like before, the depth is $L$, and $L$ is constant for shallow adaptive circuits. Note that $U'$ is generally not unitary, and the dimensions of input and output may differ. For clarification, we refer to circuits with only unitary gates as non-adaptive circuits to distinguish them from adaptive ones. A diagram of the adaptive circuit is shown in Fig.~\ref{fig:Genericadapt}.

\begin{figure}[!htbp]
 \flushleft
 \subfigure[]{
 \begin{minipage}[b]{0.48\textwidth}\label{fig:Genericadapt}
  \begin{quantikz}[row sep=0.2cm, column sep=0.2cm]
  \lstick{$\ket{0}$} & \gate{H} & \ctrl{2} & \qw  & \qw & \qw  & \qw & \qw & \qw & \gate{Y} & \qw \rstick{$\ket{\psi}$} \\
  \lstick{$\ket{0}$} & \gate{H} & \ctrl{1}  & \qw & \meter{} & \cw & \cwbend{1}  \\
  \lstick{$\ket{0}$} & \qw & \targ{} & \gate{T} & \qw & \qw & \gate{S} & \meter{} & \cw & \cwbend{-2} \\
  % \gate[style={fill=blue!20}]{X}
  \end{quantikz}
 \end{minipage}
 }
 \subfigure[]{
 \begin{minipage}[b]{0.48\textwidth}\label{fig:Clifadapt}
  \begin{quantikz}[row sep=0.2cm, column sep=0.2cm]
  \lstick{$\ket{0}$} & \gate{H} & \ctrl{1} & \qw & \meter{} & \cw & \cw & \cwbend{1} \\
  \lstick{$\ket{0}$} & \qw & \targ{} & \ctrl{1} & \qw & \qw & \gate{S} & \gate{X} & \qw \rstick{$\ket{\psi}$}  \\
  \lstick{$\ket{0}$} & \gate{H} & \qw & \targ{} & \meter{} & \cw & \cwbend{-1}   \\
  % \gate[style={fill=blue!20}]{X}
  \end{quantikz}
 \end{minipage}
 }
\label{fig:adapt}
\caption{Diagram of adaptive circuits. Figure (a) shows a generic adaptive circuit comprising any quantum gate with a bounded fan-in of 3. Figure (b) shows a Clifford adaptive circuit, where only classically controlled Clifford gates and Pauli measurements are allowed.}
\end{figure}
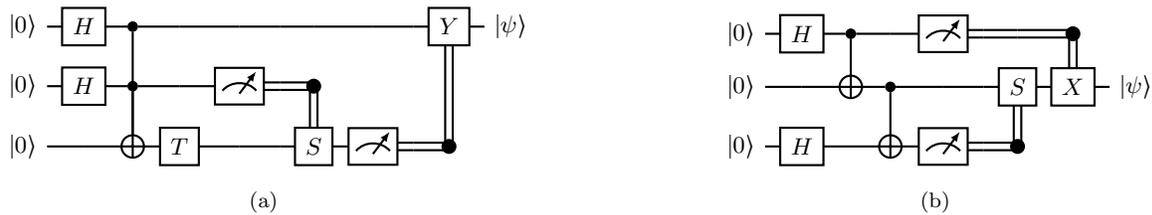

In this work, we normally consider the adaptive state preparation task of using an $m$-qubit system to prepare an $n$-qubit state. The process involves applying a constant number of quantum gate layers $U_1, U_2, \dots, U_{L'}$, interleaved with $n_a$ mid-circuit measurements. The goal is to obtain an $n$-qubit output state. Without loss of generality, in an all-to-all system, we assume the $i$-th measurement is performed on the $(n+i)$-th qubit, yielding a computational-basis outcome $s_i \in \{0,1\}$ and leaving behind the corresponding state $\ket{s_i}$. Let $\mathbf{s} = s_1s_2\cdots s_{n_a}$ denote the full string of measurement outcomes. The total circuit depth is $L$, accounting for both unitary layers and measurements, with $L' \leq L$ denoting the number of unitary layers. The final state $\ket{\psi}$ can be expressed as:
\begin{equation}\label{eq:adapt_state}
\ket{\psi} \propto (\id_n\otimes \bra{\mathbf{s}}) U_{L'}\cdots U_2U_1\ket{0^m},
\end{equation}
where $\id_n$ is the identity on the first $n$ qubits. Each layer $U_i$ consists of $K$-bounded fan-in gates and may depend on the outcomes of prior measurements. This expression is valid because the measured qubits are discarded immediately. For shallow circuits, both $L$ and $L'$ are constants. When $m = n$, this reduces to the standard non-adaptive circuit case.

\subsection{Stabilizer formalism}
Below, we introduce the concepts of the Pauli group, the Clifford group, and the stabilizer state.
The Pauli group on $n$ qubits is defined as
\begin{equation}
\mathsf{P}_n = \{\pm 1, \pm i\}\times \{\id, X, Y, Z\}^{\otimes n},
\end{equation}
where $\id$ is a 2-dimensional identity operator, and $X, Y, Z$ are three Pauli matrices,
\begin{equation}
X = \begin{pmatrix}
0 & 1\\
1 & 0
\end{pmatrix},
Y = \begin{pmatrix}
0 & -i\\
i & 0
\end{pmatrix},
Z = \begin{pmatrix}
1 & 0\\
0 & -1
\end{pmatrix}.
\end{equation}
The $n$-qubit Clifford group $\mathsf{C}_n$ is the normalizer group of $\mathsf{P}_n$,
\begin{equation}
\mathsf{C}_n = \{C |\forall P\in \mathsf{P}_n, CPC^{-1}\in \mathsf{P}_n\}.
\end{equation}

An $n$-qubit stabilizer state, $\ket{\psi}$, is defined as the common eigenstate of $n$ independent and commutative Pauli operators, $\{S_1, S_2, \cdots, S_n\}$ while all the eigenvalues are $1$, i.e., $S_i\ket{\psi} = \ket{\psi}$. Independence means no $S_i$ can be written as a product of the others. Since all $S_i$ must have eigenvalue $+1$, they must belong to the set ${\pm 1} \times \{\id, X, Y, Z\}^{\otimes n}$ but cannot be $-\id^{\otimes n}$. The group $\mathbb{S}_{\psi}\equiv\langle S_1, S_2, \cdots, S_n\rangle$ generated by these $n$ Pauli operators is called the stabilizer group of $\ket{\psi}$, and $\mathcal{S}_{\psi}\equiv\{S_1, S_2, \cdots, S_n\}$ are named stabilizer generators. A stabilizer state can always be represented as a summation of all elements in the stabilizer group,
\begin{equation}
\ketbra{\psi}=\prod_{S\in \mathcal{S}_{\psi}}\frac{\id+S}{2}=\frac{1}{2^n}\sum_{S\in \mathbb{S}_{\psi}}S.
\end{equation}

When the gates are restricted to the Clifford case, we call the (adaptive) quantum circuit the (adaptive) Clifford circuit. It is worth mentioning that each Clifford gate, including the classically controlled Clifford gates, must have a bounded fan-in of $K$. The Clifford circuit enjoys the property of classical simulability~\cite{gottesman1998heisenberg,Aaronson2004stabilizer} and hence is favored in quantum information studies. We depict the diagram of the adaptive Clifford circuit in Fig.~\ref{fig:Clifadapt}. Notably, any $n$-qubit stabilizer state can be prepared using $O(n^2)$ two-qubit Clifford gates, and the depth can be compressed to $O(\log n)$ with ancillary qubits~\cite{Aaronson2004stabilizer}.

Another important concept in this work is the QLDPC code. An $[[n, k, d]]$ QLDPC code space is defined by the common eigenspace of $n-k$ independent and mutually commutative projectors $\{\Pi_1, \Pi_2, \cdots, \Pi_{n-k}\}$,
\begin{equation}
\mathcal{C} = \{\ket{\psi}\in \bigotimes_{j=1}^n\mathcal{H}_j |\forall i, \Pi_i\ket{\psi} = \ket{\psi}\}.
\end{equation}
The LDPC condition requires that each projector $\Pi_i$ has weight $\wt(\Pi_i) \leq s = O(1)$, and each qubit appears in at most $s$ projectors. The maximum weight of the projectors, or $s$, is named the sparsity of the code. The code distance $d$ represents the tolerable error of code space $\mathcal{C}$. This is manifested by the Knill-Laflamme conditions~\cite{Knill1997KLcondition}. For any observable $O$ with $\wt(O)<d$,
\begin{equation}
\Pi_{\mathcal{C}} O \Pi_{\mathcal{C}} = c(O)\Pi_{\mathcal{C}},
\end{equation}
where $\Pi_{\mathcal{C}}$ is the projector to code space $\mathcal{C}$ and $c(O)$ is a quantity related to $O$. A good QLDPC code requires $k = \Theta(n)$ and $d = \Theta(n)$.

Similarly, an $[[n, k, d]]$ stabilizer code is defined as the common $+1$ eigenspace of $n-k$ independent and mutually commutative Pauli operators $\{S_1, S_2, \cdots, S_{n-k}\}$. The Pauli operators in $\{S_1, S_2, \cdots, S_{n-k}\}$ are called stabilizer generators. A code can be both a QLDPC code and a stabilizer code as long as $\wt(S_i) = O(1)$, and each qubit appears in support of $O(1)$ stabilizer generators.

\section{State weight and time-space trade-off in adaptive state preparation}\label{sc:tradeoff}
In this section, we formally define the concept of state weight and illustrate its role in phase classification in Section~\ref{ssc:stateweight}. Using this quantity, we then derive time-space trade-off relations for adaptive state preparation. As a preliminary step, we analyze the non-adaptive case first, which also provides a key tool for proving the adaptive results. The main findings for adaptive circuits are presented in Section~\ref{ssc:adt}. Beyond these general results, we establish tighter bounds for preparing stabilizer states using non-adaptive and adaptive Clifford circuits, introducing the concept of stabilizer weight, a variant of state weight tailored to stabilizer states. For adaptive circuits, we identify a trade-off between circuit depth and the number of ancillary qubits, showing that the depth advantage of adaptive circuits scales with the number of ancillas. This highlights a fundamental trade-off between ``time" and ``space" resources and shows that the state weight characterizes the essential resources required to prepare a given state.

In Section~\ref{ssc:corrrange}, we introduce the correlation function and define the correlation range of a quantum state. The correlation range is the maximal size of a region where any two distinct subsets of qubits within this region have correlations above a threshold. We demonstrate a close relationship between the weight and the correlation range of a state, providing a physical intuition of state weight. Similar to state weight, the correlation range can be used to analyze the necessary time and space resources to prepare a quantum state. Based on weight and correlation range, we are able to evaluate the approximate circuit complexity of quantum states in an adaptive setting. We summarize the circuit complexity of typical permutation-invariant states in Section~\ref{ssc:PIstate}.

The previous four subsections focus on circuits with all-to-all connectivity, extending the results in Ref.~\cite{friedman2023locality}, which requires locality constraints. This extension is essential for tasks with such connectivity, like the implementation of QLDPC codes~\cite{Panteleev2022goodQLDPC,leverrier2022quantumtannercodes,dinur2022goodquantumldpccodes}. Our results also apply to the case with geometric constraints, which are discussed in Section~\ref{ssc:geometric}. A summary of the trade-off relations derived in this section is provided in Table~\ref{table:bound}.

\begin{table}[!ht]
\centering
\caption{Summary of trade-off relations for generic and stabilizer state preparation. The circuit has depth $L$, with $m$ initial qubits, $n$ final qubits, and $n_a = m - n$ ancillary qubits. The gate fan-in is bounded by $K$, and the light cone size $g_{K, L}$ equals $K^L$ for all-to-all connected circuits. The quantities $\wt(\ket{\psi})$ and $\wts(\ket{\psi})$ denote the state weight and stabilizer weight, respectively. The term $\mathrm{CR}(\ket{\psi})$ denotes the correlation range of $\ket{\psi
}$. These trade-off relations demonstrate that ancillary qubits enable a linear expansion of the light cone, underscoring their role in enhancing state preparation efficiency.}
\resizebox{\textwidth}{!}{
\begin{tabular}{cccc}\hline
& Non-adaptive & Adaptive & Clifford adaptive \\
Generic state & $g_{K,L}\geq \wt(\ket{\psi}), g_{K,L}\geq \mathrm{CR}(\ket{\psi})$ & $(n_a+1)g_{K,2L-1}\geq \wt(\ket{\psi}), (n_a+1)g_{K,2L-1}\geq \mathrm{CR}(\ket{\psi})$ & N/A \\
Stabilizer state & $g_{K,L}\geq \wts(\ket{\psi})$ & $(n_a+1)g_{K,2L-1}\geq \wt(\ket{\psi})$ & $(n_a+1)g_{K,L}\geq \wts(\ket{\psi})$  \\\hline
\end{tabular}
}
\label{table:bound}
\end{table}

\subsection{State weight and phase classification}\label{ssc:stateweight}
We begin by defining the stabilizer weight for stabilizer states and then generalize this concept to the weight for arbitrary quantum states.

Given a set of stabilizer generators for a stabilizer state $\ket{\psi}$, we define the weight vector of $\mathcal{S}_\psi$ as
\begin{equation}
\vec{\wt}\left(\mathcal{S}_{\psi}\right)\equiv \left(\wt(S_1), \wt(S_2), \cdots, \wt(S_n)\right).
\end{equation}
Without loss of generality, we assume the weights are ordered non-increasingly: $\wt(S_1) \geq \wt(S_2) \geq \cdots \geq \wt(S_n)$. This convention is used throughout when referring to a weight vector. A total order between two weight vectors can thus be defined: $\vec{\wt}(\mathcal{S}_\psi) > \vec{\wt}(\mathcal{S}'_\psi)$ iff there exists $k \in [n]$ such that $\wt(S_i) = \wt(S'_i)$ for all $i < k$, and $\wt(S_k) > \wt(S'_k)$.

Note that the choice of stabilizer generators for a given stabilizer state $\ket{\psi}$ is not unique. Let $\mathcal{S}_\psi^*$ denote the set of stabilizer generators with the minimal weight vector, referred to as the minimal stabilizer generators:
\begin{equation} \label{eq:minwtvec}
\mathcal{S}^*_{\psi} = \mathop{\arg\min}_{\mathcal{S}_{\psi}} \vec{\wt}\left(\mathcal{S}_{\psi}\right),
\end{equation}
and define the stabilizer weight of $\ket{\psi}$ as $\wts(\ket{\psi}) = \wt(S_1^*)$, where $S_1^* \in \mathcal{S}^*_\psi$.

For a general quantum state $\ket{\psi}$, we first define a generalized stabilizer group~\cite{Zhang2021generalized,Martin2007generalized}. A generalized stabilizer $T$ of $\ket{\psi}$ is a Hermitian, involutory matrix with zero trace, i.e., $T = T^\dagger = T^{-1}$ and $\tr(T) = 0$, such that $T\ket{\psi} = \ket{\psi}$. A generalized stabilizer group $\mathbb{T}_\psi \equiv \langle T_1, T_2, \ldots, T_n \rangle$ is generated by $n$ independent, commuting generalized stabilizers $\mathcal{T}_\psi = \{ T_1, T_2, \ldots, T_n \}$, where all non-identity elements in $\mathbb{T}_\psi$ are generalized stabilizers of $\ket{\psi}$.

Note that not all such sets of generators yield a valid generalized stabilizer group. An additional requirement is that all products $T(\vec{t}) \equiv \prod_i T_i^{t_i}$, for $\vec{t} = (t_1, t_2, \ldots, t_n) \in \mathbb{Z}_2^n$, must also be generalized stabilizers; in particular, they must have zero trace.

This definition is equivalent to one based on state preparation via a unitary transformation. Specifically, $T_i$ is a generalized stabilizer generator of $\ket{\psi}$ if and only if there exists a unitary $U$ such that $T_i = U Z_i U^\dagger$, where $Z_i$ is the Pauli-$Z$ operator on qubit $i$. A proof of this equivalence is given in Appendix~\ref{appendsc:genestabilizer}. Consequently, $\ket{\psi} = U \ket{\mathbf{0}}$, and its density matrix can be expressed as
\begin{equation}
\ketbra{\psi} = \prod_{T \in \mathcal{T}_\psi} \frac{\id + T}{2} = \frac{1}{2^n} \sum_{T \in \mathbb{T}_\psi} T.
\end{equation}

Unlike stabilizer states, the generalized stabilizer group for a generic state is not unique, as it depends on the choice of $U$. This ambiguity must be considered when defining the set of generalized stabilizer generators with minimal weight vector:
\begin{equation} \label{eq:minwtvecgene}
\mathcal{T}^*_{\psi} = \mathop{\arg\min}_{\mathcal{T}_{\psi}} \vec{\wt}\left(\mathcal{T}_{\psi}\right).
\end{equation}
We then define the state weight as $\wt(\ket{\psi}) = \wt(T_1^*)$, where $T_1^* \in \mathcal{T}_\psi^*$.

It is straightforward to see that for any stabilizer state $\ket{\psi}$, the quantum state weight satisfies $\wt(\ket{\psi}) \leq \wts(\ket{\psi})$. However, whether equality always holds is not clear. Notably, local unitary (LU) equivalent states share the same state weight, while local Clifford (LC) equivalent stabilizer states share the same stabilizer weight. Since there exist stabilizer states that are LU-equivalent but not LC-equivalent~\cite{ji2007LULC}, this implies the possibility of stabilizer states $\ket{\psi}$ for which $\wt(\ket{\psi}) \ne \wts(\ket{\psi})$. Moreover, whether the weight of a stabilizer state should have the same scaling as its stabilizer weight also remains unknown.

Interestingly, the state weight serves as a useful indicator for distinguishing between different quantum phases. Specifically, if two quantum states have different scaling behaviors of their state weights, they must belong to different quantum phases. This follows from the fact that a shallow circuit cannot change the scaling of operator weight, which holds based on Lemma~\ref{lemma:operatorweightgrowth}.

To see this, suppose $\ket{\psi_2} = U\ket{\psi_1}$ for some shallow-circuit unitary $U$. Then we have $\wt(\ket{\psi_2}) / \wt(\ket{\psi_1}) = \Theta(1)$. The argument proceeds as follows: the set $U\mathcal{T}_{\psi_1}^* U^\dagger$ forms a valid set of generalized stabilizer generators for $\ket{\psi_2}$. Therefore, we have
$\wt(\ket{\psi_2}) \leq \wt(UT_1^*U^\dagger)$, where $\wt(\ket{\psi_1}) = \wt(T_1^*)$. Lemma~\ref{lemma:operatorweightgrowth} ensures that shallow circuits only increase operator weight by a constant factor, implying $\wt(\ket{\psi_2}) = O(\wt(\ket{\psi_1}))$. The reverse direction follows by symmetry.

Thus, shallow circuits preserve the scaling behavior of state weight, meaning that states with different weight scalings cannot be connected by such circuits. Consequently, distinct scalings of state weight imply distinct quantum phases. This criterion can also be extended to stabilizer states, where the stabilizer weight serves as a phase indicator in the same way.

Below, we present Lemma~\ref{lemma:operatorweightgrowth}, showing how operator weight grows after one layer of quantum gates.

\begin{lemma}\label{lemma:operatorweightgrowth}
Given a matrix, $O$, and a layer of $K$-bounded fan-in unitary gates, $U$, we have $\wt(UOU^{\dagger})\leq K\wt(O)$.
\end{lemma}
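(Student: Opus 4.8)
The plan is to exploit the tensor-product structure of a single circuit layer. By definition, a layer of $K$-bounded fan-in gates acts on disjoint sets of qubits, so we may write $U = \bigotimes_j U_j$, where each $U_j$ acts nontrivially only on a block $B_j \subseteq [n]$ with $|B_j| \le K$, and the blocks $\{B_j\}$ are pairwise disjoint. Let $A = \mathrm{supp}(O)$ so that $\wt(O) = |A|$, and write $O = O_A \otimes \id_{\bar{A}}$. Introduce the index set $J = \{ j : B_j \cap A \neq \emptyset \}$ of blocks that overlap the support of $O$; this set is what controls the spread of the support under conjugation.

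First I would show that conjugation cannot move support outside the overlapping blocks, i.e.\ $\mathrm{supp}(UOU^{\dagger}) \subseteq \bigcup_{j \in J} B_j$. The key observation is that any gate $U_j$ with $j \notin J$ acts on a block $B_j$ disjoint from $A$; since $O$ is the identity on $B_j$, the factor $U_j$ commutes with $O$ and hence drops out of the conjugation, $U_j O U_j^{\dagger} = O$. Collecting only the remaining factors then gives $UOU^{\dagger} = \bigl(\bigotimes_{j \in J} U_j\bigr)\, O\, \bigl(\bigotimes_{j \in J} U_j\bigr)^{\dagger}$, which is an operator acting nontrivially only within $\bigcup_{j\in J} B_j$.

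The final step is an elementary counting argument. Each block in $J$ contains at least one qubit of $A$, and the blocks are pairwise disjoint, so choosing for each $j \in J$ an element of $B_j \cap A$ defines an injection into $A$; hence $|J| \le |A| = \wt(O)$. Since $|B_j| \le K$ for every block, we conclude
\[
\wt(UOU^{\dagger}) \;\le\; \Bigl| \bigcup_{j \in J} B_j \Bigr| \;\le\; \sum_{j \in J} |B_j| \;\le\; K\,|J| \;\le\; K\,\wt(O),
\]
which is exactly the claimed bound.

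I do not expect a genuine obstacle here, as the statement is essentially a bookkeeping fact about locality. The only point demanding care is the support-containment claim of the second paragraph: one must justify rigorously that a gate acting entirely outside $\mathrm{supp}(O)$ leaves $O$ invariant under conjugation, so that the tensor structure confines all spreading of support to the blocks already meeting $A$. Once this is in hand, everything reduces to the estimate that disjoint blocks of size at most $K$, each intersecting $A$, number at most $\wt(O)$ and therefore cover at most $K\,\wt(O)$ qubits.
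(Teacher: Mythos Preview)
Your proposal is correct and follows essentially the same approach as the paper: decompose the layer into disjoint blocks, discard the blocks that do not meet $\mathrm{supp}(O)$ since they commute with $O$, and bound the remaining support by $K$ times the number of overlapping blocks, which is at most $\wt(O)$. The paper additionally retains $\mathrm{supp}(O)$ in the union (yielding the intermediate estimate $\wt(O)-|\mathcal{A}|+K|\mathcal{A}|$) to cover the case where some qubits in $\mathrm{supp}(O)$ lie outside every block, but this is a cosmetic difference that disappears once identity gates are included as singleton blocks.
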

\begin{proof}
Any unitary $U$ can be decomposed into $U = \bigotimes_{\abs{A}\leq K} U_{A}$ where $U_{A}$ only acts on the subsystem $A$. Define set $\mathcal{A} = \{A| A\bigcap \mathrm{supp}(O)\neq \emptyset\}$. Obviously, $\abs{\mathcal{A}}\leq \wt(O)$. Note that $U O U^{\dagger} = (\bigotimes_{A\in \mathcal{A}} U_{A}) O (\bigotimes_{A\in \mathcal{A}} U_{A}^{\dagger})$. Then,
\begin{equation}
\begin{split}
\wt(U O U^{\dagger})
&\leq \abs{\mathrm{supp}(O) \bigcup \cup A\in \mathcal{A}}\\
&\leq \abs{\mathrm{supp}(O)} - \abs{\mathcal{A}} + \abs{\cup A\in \mathcal{A}}\\
&\leq \wt(O) - \abs{\mathcal{A}} + \sum_{A\in \mathcal{A}}\abs{A}\\
&\leq \wt(O) - \abs{\mathcal{A}} +  K\abs{\mathcal{A}}\\
&\leq K\wt(O).
\end{split}
\end{equation}
\end{proof}

Lemma~\ref{lemma:operatorweightgrowth} implies that the growth of operator weight along the quantum circuit is bounded by the ``forward lightcone" from the initial operator, as shown in Fig.~\ref{fig:forwardlightcone}, which can be applied to generalized stabilizer generators. For any state $\ket{\psi}$ prepared by a $D$-depth unitary $U$, the set $\mathcal{T}=\{UZ_iU^{\dagger}\}_{i=1}^n$ forms a valid set of generalized stabilizer generators for $\ket{\psi}$. Based on Lemma~\ref{lemma:operatorweightgrowth}, we have
\begin{equation}\label{eq:prooftheo1}
\wt(\ket{\psi})=\wt(T^*_1)\leq\wt(T_1)=\wt(UZ_{i_1}U^{\dagger})\leq K^D.
\end{equation}
where $Z_{i_1}$ corresponds to the element $T_1$ in $\mathcal{T}$. This establishes a lower bound on the circuit depth $D$ required to prepare $\ket{\psi}$ in the non-adaptive setting.

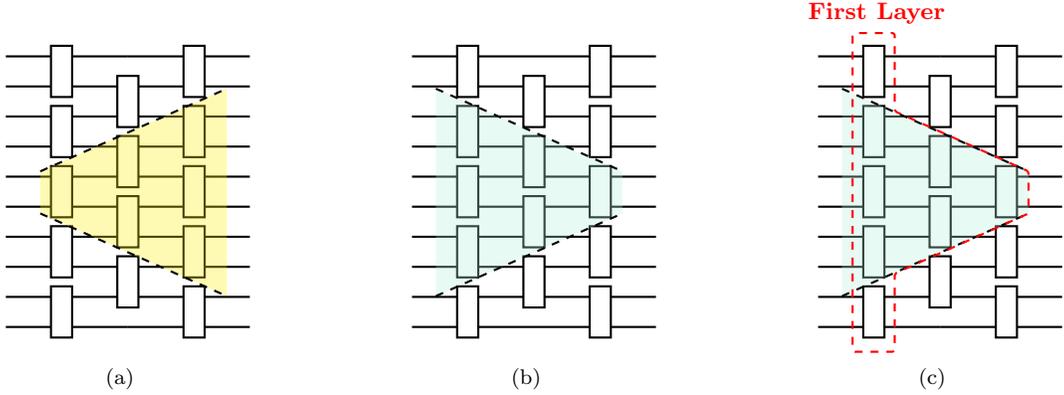
\begin{figure}[!htbp]
 \centering
  \subfigure[]{\label{fig:forwardlightcone}
  \begin{quantikz}[row sep={0.4cm,between origins}, column sep=0.6cm]
  & \gate[wires=2]{} & \qw & \gate[wires=2]{} & \qw \\
  &  & \gate[wires=2]{} &  & \qw  \\
  & \gate[wires=2]{} &  & \gate[wires=2]{} & \qw  \\
  &  & \gate[wires=2]{} &  & \qw  \\
  & \gate[wires=2]{} &  & \gate[wires=2]{} & \qw  \\
  &  & \gate[wires=2]{} &  & \qw  \\
  & \gate[wires=2]{} &  & \gate[wires=2]{} & \qw  \\
  &  & \gate[wires=2]{} &  & \qw  \\
  & \gate[wires=2]{} &  & \gate[wires=2]{} & \qw  \\
  &  & \qw &  & \qw  \\
  \end{quantikz}
  \begin{tikzpicture}[overlay,x=4mm,y=4mm]
    \draw[black, dashed, thick] (-7,1) -- ++(6.2,2.75);
    \draw[black, dashed, thick] (-7,-0.4) -- ++(6.2,-2.75);
    \begin{scope}[on background layer]
    \fill[yellow, fill opacity=0.3, draw=none] (-7,1) -- (-7,-0.4) -- (-0.8,-3.15) -- (-0.8,3.75) -- cycle;
    \end{scope}
  \end{tikzpicture}
  }
  \hspace{16mm}
  \subfigure[]{\label{fig:backwardlightcone}
  \begin{quantikz}[row sep={0.4cm,between origins}, column sep=0.6cm]
  & \gate[wires=2]{} & \qw & \gate[wires=2]{} & \qw\\
  &  & \gate[wires=2]{} &  & \qw  \\
  & \gate[wires=2]{} &  & \gate[wires=2]{} & \qw  \\
  &  & \gate[wires=2]{} &  & \qw  \\
  & \gate[wires=2]{} &  & \gate[wires=2]{} & \qw  \\
  &  & \gate[wires=2]{} &  & \qw  \\
  & \gate[wires=2]{} &  & \gate[wires=2]{} & \qw  \\
  &  & \gate[wires=2]{} &  & \qw  \\
  & \gate[wires=2]{} &  & \gate[wires=2]{} & \qw  \\
  &  & \qw &  & \qw  \\
  \end{quantikz}
  \begin{tikzpicture}[overlay,x=4mm,y=4mm]
    \draw[black, dashed, thick] (-7.35,3.75) -- ++(6.2,-2.75);
    \draw[black, dashed, thick] (-7.35,-3.15) -- ++(6.2,2.75);
    \begin{scope}[on background layer]
    \fill[mintcyan, fill opacity=0.3, draw=none] (-7.35,3.75) -- (-7.35,-3.15) -- (-1.15,-0.4) -- (-1.15,1) -- cycle;
    \end{scope}
  \end{tikzpicture}
  }
  \hspace{16mm}
  \subfigure[]{\label{fig:backwardfirst layer}
  \begin{quantikz}[row sep={0.4cm,between origins}, column sep=0.6cm]
  & \gate[wires=2]{} & \qw & \gate[wires=2]{} & \qw\\
  &  & \gate[wires=2]{} &  & \qw  \\
  & \gate[wires=2]{} &  & \gate[wires=2]{} & \qw  \\
  &  & \gate[wires=2]{} &  & \qw  \\
  & \gate[wires=2]{} &  & \gate[wires=2]{} & \qw  \\
  &  & \gate[wires=2]{} &  & \qw  \\
  & \gate[wires=2]{} &  & \gate[wires=2]{} & \qw  \\
  &  & \gate[wires=2]{} &  & \qw  \\
  & \gate[wires=2]{} &  & \gate[wires=2]{} & \qw  \\
  &  & \qw &  & \qw  \\
  \end{quantikz}
  \begin{tikzpicture}[overlay,x=4mm,y=4mm]
    \draw[black, dashed, thick] (-7.35,3.75) -- ++(6.2,-2.75);
    \draw[black, dashed, thick] (-7.35,-3.15) -- ++(6.2,2.75);
    \begin{scope}[on background layer]
    \fill[mintcyan, fill opacity=0.3, draw=none] (-7.35,3.75) -- (-7.35,-3.15) -- (-1.15,-0.4) -- (-1.15,1) -- cycle;
    \end{scope}

    \draw[red, dashed, thick, rounded corners=1pt] (-7,5.6) -- (-7,-5) -- (-5.6,-5) -- (-5.6,-2.4) -- (-1.15,-0.4) -- (-1.15,1.0) -- (-5.6, 3) -- (-5.6,5.6) -- cycle;

    \node[fit={(-9.2,8) (-3.2,4)}, text=red, font=\small\bfseries] {First Layer};
  \end{tikzpicture}
  }
\caption{(a) Forward lightcone diagram, illustrating the set of qubits that can be influenced by the initial two qubits through the circuit evolution. (b) Backward lightcone diagram, illustrating the set of qubits that can influence the final two qubits under the circuit evolution. (c) The operation on the new first layer consists of two parts: the effective operator resulting from the reverse evolution of the $m - n$ measurement operations, and the unitary gates from the original first layer that lie outside the backward lightcone.}
\label{fig:lightconediagram}
\end{figure}

For stabilizer states, this bound can be further tightened by replacing the state weight with the stabilizer weight. A full derivation is provided in Appendix~\ref{ssc:nonadt}.

\subsection{Trade-off relations in adaptive state preparation}\label{ssc:adt}
We now analyze the resource requirements for state preparation using adaptive circuits based on state weight. We begin by presenting results for the preparation of generic quantum states using general adaptive quantum circuits.

\begin{theorem}\label{theo:adalowerboundgeneric}
Suppose a depth-$L$ adaptive circuit with $K$-bounded fan-in gates and $m$-qubit initial state, $\ket{0^m}$, prepares $n$-qubit state $\ket{\psi}$, we have the following bound:
\begin{equation}\label{eq:boundadaptive}
(m-n+1)K^{2L-1}\geq\wt(\ket{\psi}).
\end{equation}
\end{theorem}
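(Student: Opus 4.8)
The plan is to fix the measurement record $\mathbf{s}$ and work in the deferred-measurement picture of Eq.~\eqref{eq:adapt_state}, so that $\ket{\psi}\propto(\id_n\otimes\bra{\mathbf{s}})U\ket{0^m}$ with $U=U_{L'}\cdots U_1$ a depth-$L'\le L$, $K$-bounded fan-in unitary. Writing $P=\id_n\otimes\ketbra{\mathbf{s}}{\mathbf{s}}$, the padded state satisfies $\ket{\psi}\otimes\ket{\mathbf{s}}\propto PU\ket{0^m}=U\tilde{P}\ket{0^m}$, where $\tilde{P}=U^\dagger P U=\prod_{i=1}^{n_a}\tfrac12(\id+(-1)^{s_i}H_i)$ and $H_i=U^\dagger Z_{n+i}U$. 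Since the generalized stabilizer group of a product state factorizes, the generalized stabilizers of $\ket{\psi}$ are exactly those generalized stabilizers of $\ket{\psi}\otimes\ket{\mathbf{s}}$ that act trivially on the ancillas. Hence it suffices to exhibit a full generating set $\{\tilde{T}_j\}_{j=1}^n$ of such operators and bound $\max_j\wt(\tilde{T}_j)$, because by the lexicographic minimality defining $\mathcal{T}^*_\psi$ in Eq.~\eqref{eq:minwtvecgene} one has $\wt(\ket{\psi})=\wt(T^*_1)\le\max_j\wt(\tilde{T}_j)$.

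I would then use Lemma~\ref{lemma:operatorweightgrowth} twice to control the two relevant lightcones. Applied to $U^\dagger$, it gives $\wt(H_i)\le K^{L'}\le K^{L}$: each reverse-evolved measurement operator is localized in the backward lightcone of its ancilla. Applied to $U$, it bounds the forward lightcone $\wt(UZ_{i}U^\dagger)\le K^{L}$ of any input generator. The generalized stabilizers of the projected state $\tilde{P}\ket{0^m}$ are generated by the $n_a$ operators $(-1)^{s_i}H_i$ together with $n$ further operators obtained by taking each input generator $Z_i$ and \emph{dressing} it with a product of the $H_i$ so that the result commutes with the projector $\tilde{P}$; conjugating by $U$ then yields candidate generators of $\ket{\psi}\otimes\ket{\mathbf{s}}$, and trimming their ancilla support, equivalently taking $(\id_n\otimes\bra{\mathbf{s}})(\cdot)(\id_n\otimes\ket{\mathbf{s}})$, produces the desired $\tilde{T}_j$ on $[n]$. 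Each $\tilde{T}_j$ is a product of at most $(n_a+1)$ building blocks, one output block from $Z_i$ and at most $n_a$ correction blocks from the measurements, and a careful accounting of the composed backward-then-forward spreading bounds the weight of every block by the depth-$(2L-1)$ lightcone $g_{K,2L-1}=K^{2L-1}$. Summing supports over the $(n_a+1)$ blocks gives $\wt(\tilde{T}_j)\le(n_a+1)K^{2L-1}=(m-n+1)K^{2L-1}$, the claimed bound.

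The main obstacle is the dressing-and-trimming step for generic, non-Clifford circuits. For stabilizer states and Clifford $U$ the operators $H_i$ are Pauli, and the construction reduces to the standard stabilizer update under Pauli measurement: one replaces each generator anticommuting with a measured $H_i$ by its product with a fixed anticommuting partner, so the support is a union of at most two lightcones and the bookkeeping is transparent. In the general case $H_i$ is only Hermitian and involutory, so anticommutation is no longer binary and the update must be carried out inside the generalized stabilizer group, ensuring at each stage that the dressed operators remain independent, mutually commuting, traceless involutions whose products are all generalized stabilizers, i.e.\ the validity conditions for a generalized stabilizer group. I expect the cleanest route is the circuit-surgery picture of Fig.~\ref{fig:backwardfirst layer}: reverse-evolve the $m-n$ measurement operators into an effective first layer, adjoin the original first-layer gates lying outside the backward lightcone of the output, and read off the generalized stabilizers of the resulting shallow non-adaptive preparation of $\ket{\psi}$. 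Verifying that this surgery reproduces $\ket{\psi}$ exactly and that the effective first layer has the claimed $(n_a+1)K^{2L-1}$ support, in particular pinning the exponent to $2L-1$ rather than $2L$ through the shared middle layer being counted once, is the part I would treat most carefully.
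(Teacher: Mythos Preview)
Your first approach---reverse-evolving the measurement projector, dressing each input $Z_i$ with products of the $H_i=U^\dagger Z_{n+i}U$ so that it commutes with $\tilde P$, then trimming the ancilla support---has the gap you yourself flag, and it is a real one. For non-Clifford $U$ the $H_i$ are generic Hermitian involutions, so there is no binary commute/anticommute dichotomy and no canonical way to ``dress'' $Z_i$ into a generalized stabilizer of $\tilde P\ket{0^m}$; worse, your trimming map $T\mapsto(\id_n\otimes\bra{\mathbf{s}})T(\id_n\otimes\ket{\mathbf{s}})$ does not preserve the involutory or traceless conditions, so even if you had a valid generator on $m$ qubits you could not conclude it restricts to one on $n$ qubits. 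Your block-counting ``at most $n_a+1$ blocks, each of weight $\leq K^{2L-1}$'' is therefore not substantiated.

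The paper's proof avoids this algebraic problem entirely by working at the level of \emph{states} rather than operators. It reverse-evolves the projection $\id_n\otimes\bra{\mathbf{s}}$ through the circuit exactly as in the circuit-surgery picture you mention at the end: split each layer $U_i=U_i^{(1)}\otimes U_i^{(2)}$ along the backward lightcone $Q_i^{(2)}$ of the measured qubits, commute the $U_i^{(1)}$ past the projection, and absorb $(\id_n\otimes\bra{\mathbf{s}})\prod_i U_i^{(2)}$ into a single operator $E$ supported on $|Q_1^{(2)}|\le(m-n)K^L$ qubits. The key observation is then that $\ket{\tilde\psi}=(U_1^{(1)}\otimes E)\ket{0^m}$ is a \emph{tensor product} of states on disjoint blocks---one block of size at most $(m-n)K^L$ from $E$, all others of size at most $K$ from the first-layer gates---so a valid generalized stabilizer generating set for $\ket{\tilde\psi}$ is obtained for free by taking any generating set on each block separately. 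No dressing, no trimming, no verification of the group axioms is needed. Each such generator has weight at most $\max(K,(m-n)K^L)$, and forward-evolving through the remaining $L-1$ layers of $\prod_{i\ge2}U_i^{(1)}$ multiplies by at most $K^{L-1}$, giving $\wt(\ket{\psi})\le K^{L-1}\max(K,(m-n)K^L)\le(m-n+1)K^{2L-1}$. Note that the $(m-n+1)$ factor arises from a single large block, not from summing $n_a+1$ pieces as in your sketch; this is also why the exponent is $2L-1$ (one backward pass of depth $L$ for $E$, one forward pass of depth $L-1$ for the remaining layers).
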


The proof of Theorem~\ref{theo:adalowerboundgeneric} is based on the method of ``backward lightcone" originating from the measurement operations, as illustrated in Fig.~\ref{fig:backwardlightcone}. We begin by analyzing the reverse evolution of the $m - n$ measurement operations to obtain an effective operator acting on the first layer of the circuit. This reverse evolution causes each measurement to spread backward through the circuit, forming a backward lightcone. The weight of the resulting effective operator is thus bounded by the size of this lightcone, which is at most $(m - n)K^L$. Consequently, the weight of the state after this new ``first layer" is also bounded by the backward lightcone size. Beyond this point, the remaining operations are bounded fan-in unitary gates. To estimate the final state's weight, we apply the forward lightcone argument or Lemma~\ref{lemma:operatorweightgrowth}, which contributes an additional expansion factor of $K^{L-1}$. The full proof is provided in Appendix~\ref{appendsc:thmpf:adalowerboundgeneric}.

Note that when $m = n$, or there is no measurement, Eq.~\eqref{eq:boundadaptive} reduces to $K^{2L-1} \geq \wt(\ket{\psi})$, which is weaker than Eq.~\eqref{eq:prooftheo1}. This looseness arises from a non-tight inequality used in the derivation. Accordingly, we apply Eq.~\eqref{eq:boundadaptive} only in the case where $m - n > 0$.

When $m - n > 0$, and we focus only on depth scaling - disregarding the difference between $K^L$ and $K^{2L-1}$ - Theorem~\ref{theo:adalowerboundgeneric} suggests that measurements on $m - n$ qubits, combined with appropriate post-selection or control, can effectively ``enlarge" the forward lightcone by a factor of $m - n$. However, it is important to emphasize that $K^L$ and $K^{2L-1}$ represent fundamentally different regimes. As no known state preparation protocol saturates the bound in Eq.~\eqref{eq:boundadaptive} exactly, we conjecture that the bound remains valid even if $K^{2L-1}$ is replaced by $K^L$. We further support this conjecture by verifying it in the case of preparing stabilizer states using adaptive Clifford circuits, as shown in the following proposition.

\begin{proposition}\label{prop:adalowerboundstab}
Suppose a depth-$L$ Clifford adaptive circuit with $K$-bounded fan-in gates and $m$-qubit initial state, $\ket{0^m}$, prepares $n$-qubit stabilizer state $\ket{\psi}$, we have the following bound:
\begin{equation}\label{eq:boundadaptivestab}
(m-n+1)K^L \geq \wts(\ket{\psi}).
\end{equation}
\end{proposition}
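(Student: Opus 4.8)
The plan is to exploit the special structure of stabilizer states to bypass the backward-lightcone detour used for Theorem~\ref{theo:adalowerboundgeneric}, which is precisely what saves the exponent from $K^{2L-1}$ down to $K^L$. Fix the measurement-outcome string $\mathbf{s}$ and work in the corresponding branch, so that by Eq.~\eqref{eq:adapt_state} we may write $\ket{\psi}\propto(\id_n\otimes\bra{\mathbf{s}})U\ket{0^m}$ with $U=U_{L'}\cdots U_1$ a fixed depth-$L'\leq L$ Clifford unitary, the classical controls being resolved once $\mathbf{s}$ is fixed. The state $U\ket{0^m}$ is a stabilizer state on all $m$ qubits whose stabilizer group consists exactly of the operators $P(\vec a)\equiv UZ^{\vec a}U^\dagger=\prod_i P_i^{a_i}$ with $P_i\equiv UZ_iU^\dagger$, and by applying Lemma~\ref{lemma:operatorweightgrowth} across the $L'\leq L$ layers we get $\wt(P_i)\leq K^{L}$ for every $i$. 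The whole argument then reduces to showing that the stabilizer group of $\ket{\psi}$ admits $n$ independent generators, each expressible as a product of at most $n_a+1$ of the $P_i$'s.

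First I would set up the descent from stabilizers of $U\ket{0^m}$ to stabilizers of $\ket{\psi}$. The element $P(\vec a)$ descends to a stabilizer of $\ket{\psi}$ on the first $n$ qubits exactly when it acts as $\id$ or $Z$ (i.e.\ has trivial $X$-content) on each of the $n_a=m-n$ measured ancillas; in that case its restriction to the first $n$ qubits, with sign fixed by $\mathbf{s}$, stabilizes $\ket{\psi}$, and this restriction has weight at most $\wt(P(\vec a))$. Writing Paulis in the binary symplectic representation, ``trivial $X$-content on ancilla $j$'' is a single $\mathbb{F}_2$-linear constraint on $\vec a$, so the admissible $\vec a$ form the kernel of a matrix $A\in\mathbb{F}_2^{n_a\times m}$ whose $i$-th column records the ancilla $X$-content of $P_i$. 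Since $A$ has at most $n_a$ rows, $\dim\ker A\geq n$, and standard stabilizer-measurement bookkeeping shows the induced map from $\ker A$ onto the stabilizer group of $\ket{\psi}$ has image equal to the full $n$-dimensional group.

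The crux is controlling the Hamming weight of the chosen $\vec a$, since $\wt\!\big(\prod_{i:a_i=1}P_i\big)\leq(\#\{i:a_i=1\})\cdot K^L$ by subadditivity of support. Here I would put $A$ into reduced row echelon form: if $r=\mathrm{rank}(A)\leq n_a$, then each of the $m-r\geq n$ free columns yields a kernel basis vector supported on that column together with at most $r$ pivot columns, hence of Hamming weight at most $r+1\leq n_a+1$. Each such $\vec a$ therefore gives a descended stabilizer of weight at most $(n_a+1)K^L$. Because these $m-r$ low-weight kernel vectors span $\ker A$ and the descent map is surjective onto the $n$-dimensional stabilizer group, I can select $n$ of them whose restrictions to the first $n$ qubits are linearly independent. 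These $n$ operators form a valid generating set $\mathcal{S}_\psi$ with every generator of weight at most $(n_a+1)K^L$, whence $\wts(\ket{\psi})=\min_{\mathcal{S}_\psi}\max_i\wt(S_i)\leq(n_a+1)K^L=(m-n+1)K^L$.

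The main obstacle is the surjectivity-and-independence step: certifying that the low-Hamming-weight kernel basis produced by the echelon form really maps to $n$ linearly independent stabilizers of $\ket{\psi}$, rather than collapsing under restriction to the first $n$ qubits, since elements of $\ker A$ that are pure ancilla-$Z$ strings descend to scalars. This is exactly where the standard theory of measuring a stabilizer state in the computational basis must be invoked, to guarantee that the descent map has image of full dimension $n$. A secondary point demanding care is the reduction to a single fixed branch $\mathbf{s}$ with all classically controlled Clifford gates resolved, so that $U$ is a genuine fixed depth-$L'$ Clifford unitary to which Lemma~\ref{lemma:operatorweightgrowth} applies.
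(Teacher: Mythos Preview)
Your proposal is correct, and the overall skeleton matches the paper's: defer measurements to the end, fix the outcome branch so that $U$ is a genuine depth-$L$ Clifford unitary, obtain the pre-measurement stabilizer generators $P_i=UZ_iU^\dagger$ with $\wt(P_i)\leq K^L$ via Lemma~\ref{lemma:operatorweightgrowth}, and then argue that every stabilizer of $\ket{\psi}$ arises by restricting those $P(\vec a)$ with trivial ancilla $X$-content. The surjectivity step you flag is exactly the content of the paper's Lemma~\ref{le:transform}.

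Where you diverge is in the core linear-algebra step. The paper does \emph{not} construct a low-weight generating set. Instead it starts from the minimal generators $S_1^*,\dots,S_n^*$ of $\ket{\psi}$, writes each $S_i^*=\prod_{j\in s_i}P_j$, and then uses the $n-1$ relations for $S_2^*,\dots,S_n^*$ to eliminate $n-1$ of the $P_j$'s over $\mathbb{F}_2$. This forces some $\tilde S_1^*\in S_1^*\cdot\langle S_2^*,\dots,S_n^*\rangle$ to be a product of at most $m-n+1$ of the remaining $P_j$'s, and the defining minimality of the weight vector gives $\wt(S_1^*)\leq\wt(\tilde S_1^*)\leq(m-n+1)K^L$. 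Your route instead puts the $n_a\times m$ ancilla-$X$-content matrix $A$ into reduced row echelon form and reads off a kernel basis of Hamming weight $\leq n_a+1$, then extracts $n$ vectors whose restrictions are independent. Your argument is more constructive (it actually exhibits a generating set witnessing the bound) and never invokes the minimality of $\mathcal{S}^*_\psi$; the paper's argument is shorter once minimality is in hand but is less explicit. Both are clean $\mathbb{F}_2$-linear-algebra arguments of comparable difficulty.
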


Proposition~\ref{prop:adalowerboundstab} is proved by combining the lightcone method with the stabilizer formalism. Since measurements can be deferred to the end of the circuit, the lightcone argument bounds the weight of stabilizer generators of the pre-measurement state by $K^L$. Because we consider Clifford adaptive circuits, the pre-measurement state is necessarily a stabilizer state. The stabilizers of the pre-measurement and post-measurement states are related through the measurement outcomes. Using this relationship, the stabilizer weight of the final state can also be bounded, specifically, by $(m - n + 1) \cdot K^L$. The detailed proof is provided in Appendix~\ref{appendssc:proofadalowerboundstab}.

As an example, adaptive state preparation for GHZ state $\ket{\mathrm{GHZ}}=\frac{1}{2^{n/2}}(\ket{0}^{\otimes n}+\ket{1}^{\otimes n})$ always saturates the bound in Proposition~\ref{prop:adalowerboundstab}, with the scaling of $m - n$ and $K^L$ varying depending on the chosen parameters. It can be implemented using an adaptive circuit with depth $L = \log_K a$ and $\lceil \frac{n}{a} \rceil - 1$ ancillary qubits for any choice of $a$, yielding $(m-n+1)K^L\approx n=\wts(\ket{\mathrm{GHZ}})$,
since the stabilizer weight of an $n$-qubit GHZ state is $n$.
The preparation scheme involves first generating a $\lceil \frac{n}{a} \rceil$-qubit GHZ state using a shallow adaptive circuit with all ancillary qubits~\cite{Piroli2021adaptive}, followed by applying CNOT gates between this state and the $\ket{0^{n - \lceil \frac{n}{a} \rceil}}$ state. We remark that GHZ state preparation saturates the scaling bound in Theorem~\ref{theo:adalowerboundgeneric} when either $m - n = \Theta(1)$ and $L = \Theta(\log n)$, or $m - n = \Theta(n)$ and $L = \Theta(1)$. But the intermediate regimes are not tight.

It is worth mentioning that Theorem~\ref{theo:adalowerboundgeneric} and Proposition~\ref{prop:adalowerboundstab} applies to the case that the state $\ket{\psi}$ is prepared probabilistically, because our proof only utilizes Eq.~\eqref{eq:adapt_state} that $\ket{\psi}$ is obtained by measuring an $m$-qubit state regardless of the probability to get $\ket{\psi}$. A deterministic state preparation case is automatically included as a stochastic case with probability $1$.

% For shallow adaptive circuits with only constant-bounded fan-in gates, the following corollary applies:

% \begin{corollary}\label{coro:ancillalow}
% Preparing an $n$-qubit stabilizer state $\ket{\psi}$ via a shallow Clifford adaptive circuit with only constant-bounded fan-in gates requires the number of ancillary qubits
% \begin{equation}
% m-n = \Omega(\wts(\ket{\psi})).
% \end{equation}
% \end{corollary}

% As a remark, when $m = n$, Proposition~\ref{prop:adalowerboundstab} reduces to the circuit depth lower bound $\frac{\log{\wts(\ket{\psi})}}{\log{K}}$ for preparing a stabilizer state using Clifford non-adaptive circuits. However, this corollary does not imply Proposition~\ref{prop:nonadaptivegeneralClifford}, as the latter allows the non-adaptive circuit to be non-Clifford.

\subsection{State weight and correlation range}\label{ssc:corrrange}
In the following, we introduce the correlation range and show its relation with weight. We first define the correlation function of two operators $O_1,O_2$ on a state $\ket{\psi}$ as
\begin{equation}\label{eq:correlationfunction}
\mathrm{Cor}(O_1,O_2,\ket{\psi})=\bra{\psi}O_1O_2\ket{\psi}-\bra{\psi}O_1\ket{\psi}\bra{\psi}O_2\ket{\psi}.
\end{equation}

% Based on the correlation function, we can define the correlation strength and correlation range for a given operator $O$. By choosing another operator

Based on the correlation function, we can define the correlation strength within region $A$. By choosing $O_1$ and $O_2$ as two operators on the support of two distinct subsets of qubits to maximize the correlation function, and considering the minimal one among the choices of two subsets, we obtain the following definition:
\begin{equation}
\mathrm{Cor}^A_w(\ket{\psi}) = \min_{\substack{A_1,A_2\subseteq A\\|A_1|\leq w,|A_2|\leq w\\A_1\cap A_2=\emptyset}}\max_{\substack{\norm{O_1}\leq 1,\norm{O_2}\leq 1\\\mathrm{supp}(O_1)=A_1,\mathrm{supp}(O_2)=A_2}}\left|\mathrm{Cor}(O_1,O_2,\ket{\psi})\right|.
\end{equation}
Here, we fix the maximum weight of two operators as $w$ and let it be a tunable parameter. The norms are operator norms, or Schatten $\infty$-norms.  A special case is when $A = [n]$ and $w = 1$. In this case, we call the corresponding correlation strength as the global correlation of state $\ket{\psi}$:
\begin{equation}
\mathrm{Cor}(\ket{\psi})=\mathrm{Cor}_1^{[n]}(\ket{\psi})=\min_{i\neq j}\max_{\norm{O_i}\leq 1,\norm{O_j}\leq 1}\left|\mathrm{Cor}(O_i,O_j,\ket{\psi})\right|
\end{equation}
where $O_i$ and $O_j$ are single-qubit operators on qubit $i$ and $j$ respectively.

The correlation range over a given constant $\delta$ is correspondingly defined as the maximal size of the region with a correlation strength larger than $\delta$:
\begin{equation}
\mathrm{CR}^{\delta}_w(\ket{\psi}) = \max_{\mathrm{Cor}^A_w(\ket{\psi}) > \delta} \abs{A}.
\end{equation}
It is clear that $\mathrm{CR}^{\delta}_w(\ket{\psi})$ monotonically decreases with $\delta$. Specifically, when $\delta=0$, we omit the superscript and define
\begin{equation}
\mathrm{CR}_w(\ket{\psi}) = \max_{\mathrm{Cor}^A_w(\ket{\psi}) > 0} \abs{A}.
\end{equation}

When the operators appearing in the correlation function are all chosen to be single-qubit Pauli operators, we can define the Pauli correlation strength within region $A$. It is the minimum two-point Pauli correlation function in this region:
\begin{equation}
\mathrm{Cor}^A_{P}(\ket{\psi})=\min_{i\neq j\in A}\max_{P_i,P_j\in \{I,X,Y,Z\}}\left|\mathrm{Cor}(P_i,P_j,\ket{\psi})\right|.
\end{equation}
The Pauli correlation range is correspondingly defined as the maximal size of the region with a non-zero Pauli correlation strength:
\begin{equation}
\mathrm{CR}_P(\ket{\psi}) = \max_{\mathrm{Cor}^A_{P}(\ket{\psi}) > 0} \abs{A}.
\end{equation}

Interestingly, we find that the Pauli correlation range provides a lower bound for the stabilizer weight of a state, which is given in the following lemma.
\begin{lemma}\label{lemma:weightandcorrrange}
Given an $n$-qubit stabilizer state, the Pauli correlation range provides a lower bound for the stabilizer weight of a state,
\begin{equation}\label{eq:wtcorrange}
\wts(\ket{\psi}) \geq \frac{\mathrm{CR}_P(\ket{\psi})}{\sqrt{n}}.
\end{equation}
\end{lemma}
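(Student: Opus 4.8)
The plan is to prove the stronger and cleaner statement $\wts(\ket{\psi}) \ge \mathrm{CR}_P(\ket{\psi})$, from which Eq.~\eqref{eq:wtcorrange} follows immediately since $\mathrm{CR}_P(\ket{\psi}) \ge \mathrm{CR}_P(\ket{\psi})/\sqrt{n}$. Let $A$ be a region of size $a = \mathrm{CR}_P(\ket{\psi})$ on which $\mathrm{Cor}^A_P(\ket{\psi}) > 0$, i.e.\ every pair of qubits in $A$ is correlated. I would first reduce this correlation condition to a statement about weight-$2$ stabilizers. For a stabilizer state, any qubit whose single-qubit marginal is not maximally mixed carries a single-qubit stabilizer and is therefore in a product state, so it has vanishing correlation with every other qubit and cannot sit in a correlated region; hence every $i \in A$ has a maximally mixed marginal and $\bra{\psi} P_i \ket{\psi} = 0$ for all single-qubit Paulis. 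Consequently $\mathrm{Cor}(P_i,P_j,\ket{\psi}) = \bra{\psi} P_i P_j \ket{\psi} \in \{0,\pm 1\}$, and positivity of $\mathrm{Cor}^A_P$ means that for each pair $i \neq j$ in $A$ there is a weight-$2$ stabilizer $g_{ij} \in \mathbb{S}_\psi$ acting nontrivially on both qubits.

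Next I would extract the rigid structure of these weight-$2$ stabilizers. Since $\mathbb{S}_\psi$ is abelian, any two stabilizers $g_{ij}$ and $g_{ik}$ (with $j \neq k$) share only qubit $i$ and must commute, which forces their single-qubit Paulis on $i$ to coincide; thus each $i \in A$ carries a well-defined Pauli $P^{(i)}$ and $g_{ij} = \pm P^{(i)}_i P^{(j)}_j$. Fixing a reference qubit $1 \in A$, the group $G_A \equiv \langle g_{ij} : i,j \in A \rangle$ is the even-weight subgroup of $\langle P^{(i)}_i : i \in A\rangle$, an isotropic subgroup of dimension $a-1$ supported entirely on $A$. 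The key rigidity property is that any Pauli $L$ supported on $A$ that commutes with every $g_{1j} = P^{(1)}_1 P^{(j)}_j$ but anticommutes with $P^{(1)}_1$ must anticommute with $P^{(j)}$ on \emph{every} $j \in A$: commuting with $g_{1j}$ equates the anticommutation outcomes at qubits $1$ and $j$. Such an $L$ therefore acts nontrivially on all $a$ qubits, so every nontrivial ``$\bar X$-type'' logical operator of the code $G_A$ has full weight $a$ on $A$.

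I would then show $\mathbb{S}_\psi$ must contain an element with such an $\bar X$-component. If it did not, every stabilizer restricted to $A$ would commute with $P^{(1)}_1$; since $P^{(1)}_1$ is supported on $A$, it would then commute with all of $\mathbb{S}_\psi$, and because a stabilizer state's group is a maximal isotropic subspace equal to its own commutant among Paulis (up to phase), this forces $\pm P^{(1)}_1 \in \mathbb{S}_\psi$, a single-qubit stabilizer on $1 \in A$, contradicting the maximally mixed marginal. Hence some $M \in \mathbb{S}_\psi$ has $M|_A$ a nontrivial $\bar X$-type logical. Writing $M$ as a product of the minimal generators $\mathcal{S}^*_\psi$ and using that restriction to $A$ is a group homomorphism modulo phase, the $\bar X$-component must be contributed by an odd number of generators; at least one generator $S^*_k$ thus restricts to an $\bar X$-type logical on $A$ and, by the rigidity above, satisfies $\wt(S^*_k) \ge \wt_A(S^*_k) = a$. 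Therefore $\wts(\ket{\psi}) = \wt(S^*_1) \ge a = \mathrm{CR}_P(\ket{\psi})$, which gives the claim with the factor $1/\sqrt{n}$ absorbed with room to spare.

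The step I expect to be the main obstacle is guaranteeing that the high-weight logical is realized by a genuine stabilizer rather than only by a coset representative: the full-support rigidity of $\bar X$-type logicals of $G_A$ and the self-commutant property of the pure-state stabilizer group are exactly what close this gap, via the contradiction that otherwise produces a forbidden single-qubit stabilizer. Transferring ``$M$ has full weight on $A$'' to ``some \emph{generator} has weight $\ge a$'' also needs care, because Pauli weight is not additive under products; the homomorphism property of the restriction map together with the rigidity lemma is what makes this transfer valid. Should any structural step require weakening, one can retreat to a cruder support-counting bound over the $\binom{a}{2}$ correlated pairs against the $n$ generators of weight at most $\wts$, which recovers a bound of the same flavor with the stated $\sqrt{n}$ loss.
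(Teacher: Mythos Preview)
Your argument is correct and in fact establishes the sharper inequality $\wts(\ket{\psi})\ge\mathrm{CR}_P(\ket{\psi})$, which is strictly stronger than what the lemma claims. The route is genuinely different from the paper's. The paper argues by pigeonhole: if $\wts(\ket{\psi})<|A^*|/\sqrt{n}$, then the $n$ minimal generators, each covering at most $\binom{\wts}{2}$ qubit pairs, together cover fewer than $\binom{|A^*|}{2}$ pairs, so some pair $(i^*,j^*)\in A^*$ is not contained in the support of any single generator; a short stabilizer computation then shows $\mathrm{Cor}(P_{i^*},P_{j^*},\ket{\psi})=0$, a contradiction. This counting is where the $\sqrt{n}$ loss enters, and it is precisely the ``cruder support-counting bound'' you mention as a fallback. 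Your approach instead exploits the algebraic rigidity of the weight-$2$ stabilizers $g_{ij}$ on $A$: the consistent Pauli labels $P^{(i)}$ force any stabilizer anticommuting with $P^{(1)}_1$ to be nontrivial on \emph{every} site of $A$, and the maximal-isotropic property guarantees that some minimal generator has this property. This buys you the full bound with no $\sqrt{n}$ factor, at the cost of a slightly more structural argument.

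Two minor remarks. First, Steps 6--7 can be collapsed: rather than finding $M\in\mathbb{S}_\psi$ anticommuting with $P^{(1)}_1$ and then passing to a generator via the $\mathbb{Z}_2$-homomorphism, you can argue directly that if every minimal generator commuted with $P^{(1)}_1$ then so would all of $\mathbb{S}_\psi$, forcing $\pm P^{(1)}_1\in\mathbb{S}_\psi$ and yielding the same contradiction. Second, your rigidity claim does not actually need $L$ to be supported on $A$ (the anticommutation with each $P^{(j)}_j$ follows from commutation with $g_{1j}$ regardless of what $L$ does on $\bar A$), which is exactly what lets it apply to the generator $S^*_\ell$.
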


The proof of Lemma~\ref{lemma:weightandcorrrange} is available in Appendix~\ref{appendssc:lemmapf:weightandcorrrange}. The key idea is that if Eq.~\eqref{eq:wtcorrange} is violated, then for any region $A$ with $\abs{A}\geq \mathrm{CR}_P(\ket{\psi})$, there will exist a qubit pair $(i,j)$ not covered by a single stabilizer generator. Their correlation will be 0 and contradict the definition of the correlation range. The above lemma manifests a close relationship between correlation range and weight, where a state with a large correlation range will have a large weight. This also gives an intuition of weight, which is roughly the maximal size of the region where any qubits correlate with each other.

Similar to state weight, we can derive a time-space trade-off relation based on the correlation range, shown in the following proposition. The main technique of the proof utilizes the lightcone arguments.

\begin{theorem}\label{thm:adalowerboundgenericcorr}
Suppose a depth-$L$ adaptive circuit with $K$-bounded fan-in gates and $m$-qubit initial state, $\ket{0^m}$, prepares $n$-qubit state $\ket{\psi}$, we have the following bound:
\begin{equation}\label{eq:correlationbound}
(m-n+w)K^{2L-1}+w-1\geq \mathrm{CR}_w(\ket{\psi}).
\end{equation}
\end{theorem}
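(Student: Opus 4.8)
The plan is to argue by contraposition: I will show that if the region $A$ is larger than the claimed bound, then one can always exhibit two disjoint subsets $A_1,A_2\subseteq A$ with $|A_1|,|A_2|\le w$ whose correlation $\mathrm{Cor}(O_1,O_2,\ket{\psi})$ vanishes for \emph{every} choice of $O_1,O_2$ supported on $A_1,A_2$. Such a pair forces $\mathrm{Cor}^A_w(\ket{\psi})=0$, contradicting $|A|\le \mathrm{CR}_w(\ket{\psi})$. The engine is the same backward-lightcone construction used for Theorem~\ref{theo:adalowerboundgeneric} (see Fig.~\ref{fig:backwardlightcone}): the measurement projector is reverse-evolved into an effective operation on the first layer, so that correlations can only be mediated either directly through the circuit or indirectly through the backward lightcone of the measured qubits.

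The core step is a vanishing criterion. Writing $\ket{\psi}\propto(\id_n\otimes\bra{\mathbf{s}})U\ket{0^m}$ with $U=U_{L'}\cdots U_1$ and $\Pi=\ketbra{\mathbf{s}}{\mathbf{s}}$ on the $m-n$ measured qubits $M$, the post-selected correlator reads
\begin{equation}
\bra{\psi}O_1O_2\ket{\psi}=\frac{\bra{0^m}\tilde{O}_1\tilde{O}_2\tilde{\Pi}\ket{0^m}}{\bra{0^m}\tilde{\Pi}\ket{0^m}},\qquad \tilde{O}_i=U^\dagger O_iU,\quad \tilde{\Pi}=U^\dagger\Pi U,
\end{equation}
and $\tilde{O}_1,\tilde{O}_2,\tilde{\Pi}$ pairwise commute because $O_1,O_2,\Pi$ act on mutually disjoint qubits. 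Their supports lie in the backward lightcones $\Lambda(A_1),\Lambda(A_2),\Lambda(M)$, of sizes at most $wK^L$, $wK^L$, and $(m-n)K^L$. I would then show that whenever $\Lambda(A_2)$ is disjoint from $\Lambda(A_1)\cup\Lambda(M)$, the expectations factorize over the product input state $\ket{0^m}$: since $\tilde O_2$ is decoupled from $\tilde O_1$ one gets $\langle O_1O_2\rangle_\psi=\langle\tilde O_2\rangle_0\,\langle\tilde O_1\tilde\Pi\rangle_0/\langle\tilde\Pi\rangle_0$, while decoupling of $\tilde O_2$ from $\tilde\Pi$ gives $\langle O_2\rangle_\psi=\langle\tilde O_2\rangle_0$; the two terms of $\mathrm{Cor}$ then cancel identically, so $\mathrm{Cor}(O_1,O_2,\ket{\psi})=0$ for all admissible $O_1,O_2$.

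Given the criterion, the bound is combinatorial. For the sharp exponent I reverse-evolve only through the last $L-1$ unitary layers, down to the slice just after $U_1$, where $U_1\ket{0^m}$ is a product state across the $\le K$-qubit gate blocks of $U_1$; the criterion then becomes ``lives in disjoint blocks.'' Counting the output qubits that can be correlated with $A_1\cup M$ proceeds in three strokes: the backward lightcone of the $w+(m-n)$ source qubits reaches $\le(w+m-n)K^{L-1}$ qubits at this slice, block-closure inflates this by a factor $K$, and the forward lightcone back to the outputs contributes another $K^{L-1}$, for a forbidden set $F$ with $|F|\le(w+m-n)K^{2L-1}$ (using $g_{K,2L-1}=K^{2L-1}$ for all-to-all connectivity). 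Now fix any $A_1\subseteq A$ with $|A_1|=w$; since $A_1\subseteq F$, if $|A|\ge(w+m-n)K^{2L-1}+w$ then $|A\setminus F|\ge w$, and any $A_2\subseteq A\setminus F$ of size $w$ is correlation-free from $A_1\cup M$ by the criterion. This produces the forbidden uncorrelated pair, so the maximal admissible $A$ obeys $|A|\le(w+m-n)K^{2L-1}+w-1$, which is exactly Eq.~\eqref{eq:correlationbound}.

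The main obstacle is the vanishing criterion in the adaptive setting: unlike the unitary case, post-selection on $\mathbf{s}$ can couple regions that are each correlated with $M$, so I must verify that the normalization $\bra{0^m}\tilde{\Pi}\ket{0^m}$ and the joint term factor in precisely the same way, leaving no residual connected piece — this is where requiring $A_2$ to be decoupled from $M$ (not merely from $A_1$) is essential. The second delicate point is the one-layer refinement that upgrades the naive $K^{2L}$ count to $K^{2L-1}$: it hinges on the product-over-blocks structure of $U_1\ket{0^m}$ and on tracking supports at the post-$U_1$ slice rather than at the input, paralleling the ``new first layer'' device of Theorem~\ref{theo:adalowerboundgeneric}. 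I defer the full bookkeeping, including the case $L'<L$ with interleaved measurements and geometrically constrained (non-all-to-all) connectivity where $K^{2L-1}$ is replaced by $g_{K,2L-1}$, to the appendix.
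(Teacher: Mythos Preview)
Your proof is correct and follows the same backward-lightcone strategy as the paper: both reduce the adaptive correlator to a product-state expectation at the post-$U_1$ slice and show that any region $A$ with $|A|>(m-n+w)K^{2L-1}+w-1$ admits a weight-$w$ subset $A_2$ whose backward lightcone avoids the block-closed lightcones of $A_1$ and the measured qubits, forcing $\mathrm{Cor}(O_1,O_2,\ket{\psi})=0$. The only cosmetic difference is that the paper absorbs the measurement projection into the first-layer operator $E$ (so the product state $\ket{\tilde{\psi}}$ carries one large block $B_1=\mathrm{supp}(E)$ alongside the small $U_1^{(1)}$-blocks), whereas you keep $\tilde{\Pi}$ explicit and factor the post-selected numerator and denominator over $U_1\ket{0^m}$; the lightcone counts and the resulting bound are identical.
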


Now, the state weight and correlation range can both provide the circuit complexity lower bound for preparing quantum states in an adaptive setting. The following lemma shows a continuous property of the correlation function, where two close states have close correlation function values. This would imply a robustness property of the correlation range, where the correlation strength measures the magnitude of the robustness.

\begin{lemma}
For two quantum states $\ket{\psi}$ and $\ket{\phi}$ satisfying $\abs{\braket{\psi}{\phi}}^2\geq 1-\epsilon$, and operators $O_1$, $O_2$ with infinite norms bounded by 1, $\norm{O_1}\leq 1$, $\norm{O_2}\leq 1$, we have that
\begin{equation}
\abs{\mathrm{Cor}(O_1,O_2,\ket{\psi}) - \mathrm{Cor}(O_1,O_2,\ket{\phi})}\leq 6\sqrt{\epsilon}.
\end{equation}
\end{lemma}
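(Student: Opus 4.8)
The plan is to reduce the statement to a standard continuity estimate for expectation values of bounded operators and then treat the two terms of the correlation function separately. First I would convert the fidelity hypothesis into a trace-distance bound. Since $\ket{\psi}$ and $\ket{\phi}$ are pure, the trace distance between the density matrices $\rho = \ketbra{\psi}$ and $\sigma = \ketbra{\phi}$ satisfies $\tfrac{1}{2}\norm{\rho-\sigma}_1 = \sqrt{1-\abs{\braket{\psi}{\phi}}^2} \leq \sqrt{\epsilon}$, so that $\norm{\rho-\sigma}_1 \leq 2\sqrt{\epsilon}$, where $\norm{\cdot}_1$ is the trace (Schatten-$1$) norm.

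Next I would establish the single-operator estimate. By H\"older's inequality, for any operator $O$ with $\norm{O}\leq 1$ one has $\abs{\tr(O\rho)-\tr(O\sigma)} = \abs{\tr\!\big(O(\rho-\sigma)\big)} \leq \norm{O}_\infty\,\norm{\rho-\sigma}_1 \leq 2\sqrt{\epsilon}$. Applying this with the choice $O=O_1O_2$, whose operator norm is at most $\norm{O_1}\,\norm{O_2}\leq 1$, bounds the difference of the two-point terms $\abs{\bra{\psi}O_1O_2\ket{\psi} - \bra{\phi}O_1O_2\ket{\phi}}$ by $2\sqrt{\epsilon}$.

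The remaining task is to control the difference of the disconnected (product) terms $\bra{\psi}O_1\ket{\psi}\bra{\psi}O_2\ket{\psi} - \bra{\phi}O_1\ket{\phi}\bra{\phi}O_2\ket{\phi}$. Writing $a=\bra{\psi}O_1\ket{\psi}$, $a'=\bra{\phi}O_1\ket{\phi}$, and analogously $b,b'$ for $O_2$, I would use the telescoping identity $ab-a'b' = (a-a')b + a'(b-b')$ together with $\abs{a-a'},\abs{b-b'}\leq 2\sqrt{\epsilon}$ from the single-operator estimate and $\abs{a'},\abs{b}\leq 1$ from $\norm{O_i}\leq 1$, which yields a bound of $4\sqrt{\epsilon}$. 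Combining the two contributions through the triangle inequality gives $2\sqrt{\epsilon}+4\sqrt{\epsilon}=6\sqrt{\epsilon}$, as claimed.

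I do not expect a genuine obstacle here; the argument is a routine perturbation bound. The only point requiring minor care is bookkeeping the constant: one must invoke the exact fidelity--trace-distance identity for pure states rather than a looser inequality, and use the telescoping decomposition for the product term, so that the two pieces add up to exactly $6\sqrt{\epsilon}$ instead of a larger constant. A secondary subtlety is that $O_1,O_2$ need not be Hermitian, so I would rely on the general H\"older bound $\abs{\tr(O(\rho-\sigma))}\leq\norm{O}_\infty\norm{\rho-\sigma}_1$ throughout rather than an observable-specific estimate.
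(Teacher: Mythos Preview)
Your proposal is correct and follows essentially the same route as the paper: split the correlation difference via the triangle inequality into the two-point term and the product term, bound each single expectation difference by $\norm{O}_\infty\norm{\rho-\sigma}_1\leq 2\sqrt{\epsilon}$ using H\"older and the pure-state fidelity--trace-distance identity, and telescope the product term to pick up the extra $4\sqrt{\epsilon}$. The paper's telescoping factors out $\bra{\psi}O_1\ket{\psi}$ and $\bra{\phi}O_2\ket{\phi}$ rather than your $b$ and $a'$, but this is an immaterial relabeling.
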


\begin{corollary}
For two quantum states $\ket{\psi}$ and $\ket{\phi}$ satisfying $\abs{\braket{\psi}{\phi}}^2\geq 1-\epsilon$, we have that
\begin{equation}
\abs{\mathrm{Cor}^A_w(\ket{\psi}) - \mathrm{Cor}^A_w(\ket{\phi})}\leq 6\sqrt{\epsilon}.
\end{equation}
\end{corollary}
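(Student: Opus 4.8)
The plan is to obtain the corollary directly from the preceding Lemma, exploiting the fact that both the inner maximization and the outer minimization in the definition of $\mathrm{Cor}^A_w$ preserve uniform closeness of functions. Since the Lemma already guarantees that the bare correlation function changes by at most $6\sqrt{\epsilon}$ for every fixed admissible pair $(O_1,O_2)$, all that remains is to propagate this pointwise bound through the $\max$ and the $\min$. First I would fix an admissible pair of disjoint subsets $A_1,A_2\subseteq A$ with $\abs{A_1},\abs{A_2}\leq w$ and set
\[
f_{A_1,A_2}(\ket{\psi})=\max_{\substack{\norm{O_1}\leq 1,\norm{O_2}\leq 1\\\mathrm{supp}(O_1)=A_1,\mathrm{supp}(O_2)=A_2}}\abs{\mathrm{Cor}(O_1,O_2,\ket{\psi})},
\]
so that $\mathrm{Cor}^A_w(\ket{\psi})=\min_{A_1,A_2}f_{A_1,A_2}(\ket{\psi})$.

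For the inner step, for any admissible $O_1,O_2$ the reverse triangle inequality combined with the Lemma gives $\abs{\mathrm{Cor}(O_1,O_2,\ket{\psi})}\leq \abs{\mathrm{Cor}(O_1,O_2,\ket{\phi})}+6\sqrt{\epsilon}\leq f_{A_1,A_2}(\ket{\phi})+6\sqrt{\epsilon}$. Taking the maximum over $O_1,O_2$ on the left-hand side yields $f_{A_1,A_2}(\ket{\psi})\leq f_{A_1,A_2}(\ket{\phi})+6\sqrt{\epsilon}$, and by the symmetry of the roles of $\ket{\psi}$ and $\ket{\phi}$ I would then obtain $\abs{f_{A_1,A_2}(\ket{\psi})-f_{A_1,A_2}(\ket{\phi})}\leq 6\sqrt{\epsilon}$, uniformly over all admissible pairs $(A_1,A_2)$.

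For the outer step I would invoke the elementary fact that whenever two real functions $g,h$ on a common domain satisfy $\abs{g(x)-h(x)}\leq c$ for all $x$, their minima satisfy $\abs{\min_x g(x)-\min_x h(x)}\leq c$: evaluating $h$ at the minimizer $x^*$ of $g$ gives $\min_x h(x)\leq h(x^*)\leq g(x^*)+c=\min_x g(x)+c$, and the symmetric argument closes the bound. Applying this with $g=f_{\cdot}(\ket{\psi})$, $h=f_{\cdot}(\ket{\phi})$, $c=6\sqrt{\epsilon}$, and the minimization ranging over admissible $(A_1,A_2)$, recalling $\mathrm{Cor}^A_w(\ket{\psi})=\min_{A_1,A_2}f_{A_1,A_2}(\ket{\psi})$, immediately gives the claimed inequality.

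There is no serious obstacle here: the entire analytic content lives in the Lemma, and the corollary is just a stability statement for a saddle-point-type quantity. The only points requiring minor care are the correct use of the reverse triangle inequality to pass from the Lemma's bound on $\mathrm{Cor}$ to a bound on $\abs{\mathrm{Cor}}$, and keeping track of the inequality directions when taking $\max$ before $\min$. Because the $6\sqrt{\epsilon}$ bound is uniform over all admissible operators and subsets, both extremization steps are harmless, and the constant $6$ is inherited unchanged.
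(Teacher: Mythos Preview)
Your proposal is correct and is precisely the natural elaboration the paper has in mind: the paper states this result as an immediate corollary of the preceding Lemma without supplying a separate proof, and your propagation of the uniform $6\sqrt{\epsilon}$ bound through the inner $\max$ (via the reverse triangle inequality) and then the outer $\min$ is exactly the intended mechanism.
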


\begin{corollary}
Given the parameter $\epsilon$, we set the quantum state $\ket{\psi}$ with the correlation range $\mathrm{CR}^{\delta}_w(\ket{\psi})$ maximized by region $A$. That is, $\mathrm{CR}^{\delta}_w(\ket{\psi}) = \abs{A}$. Then, for any state $\ket{\phi}$ satisfying $\abs{\braket{\psi}{\phi}}^2\geq 1-\delta^2/36$, we have that
\begin{equation}
\mathrm{CR}_w(\ket{\phi}) \geq \mathrm{CR}_w^{\delta}(\ket{\psi}).
\end{equation}
\end{corollary}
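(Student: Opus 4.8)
The plan is to show that the \emph{same} region $A$ that attains $\mathrm{CR}^{\delta}_w(\ket{\psi})$ also witnesses a strictly positive correlation strength for the nearby state $\ket{\phi}$, so that $A$ becomes admissible in the definition of $\mathrm{CR}_w(\ket{\phi})$ and immediately forces $\mathrm{CR}_w(\ket{\phi}) \geq \abs{A} = \mathrm{CR}^{\delta}_w(\ket{\psi})$. The whole argument is a direct application of the preceding continuity corollary, with the admissibility threshold degrading from $\delta$ down to $0$ under the fidelity perturbation.

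First I would record what the maximizing region supplies. By the definition $\mathrm{CR}^{\delta}_w(\ket{\psi}) = \max_{\mathrm{Cor}^A_w(\ket{\psi}) > \delta} \abs{A}$, the region $A$ achieving this maximum satisfies the strict inequality $\mathrm{Cor}^A_w(\ket{\psi}) > \delta$. Next I would apply the continuity corollary to this \emph{fixed} region $A$: setting $\epsilon = \delta^2/36$, the hypothesis $\abs{\braket{\psi}{\phi}}^2 \geq 1 - \delta^2/36$ is precisely $\abs{\braket{\psi}{\phi}}^2 \geq 1 - \epsilon$, so the corollary yields $\abs{\mathrm{Cor}^A_w(\ket{\psi}) - \mathrm{Cor}^A_w(\ket{\phi})} \leq 6\sqrt{\epsilon} = 6 \cdot \frac{\delta}{6} = \delta$. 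Combining this bound with the strict inequality from the previous step gives $\mathrm{Cor}^A_w(\ket{\phi}) \geq \mathrm{Cor}^A_w(\ket{\psi}) - \delta > \delta - \delta = 0$, where strictness survives because $\mathrm{Cor}^A_w(\ket{\psi}) > \delta$ is strict while the perturbation is bounded by exactly $\delta$.

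Finally, since $\mathrm{Cor}^A_w(\ket{\phi}) > 0$, the region $A$ is admissible in $\mathrm{CR}_w(\ket{\phi}) = \max_{\mathrm{Cor}^{A'}_w(\ket{\phi}) > 0} \abs{A'}$, and therefore $\mathrm{CR}_w(\ket{\phi}) \geq \abs{A} = \mathrm{CR}^{\delta}_w(\ket{\psi})$, which is the claim. I do not anticipate a genuine obstacle here, since the content is essentially bookkeeping of the threshold; the one point that requires care is verifying that the constant $\delta^2/36$ is calibrated exactly so that $6\sqrt{\epsilon} = \delta$, which is what preserves the strict positivity of the correlation strength for $\ket{\phi}$ and thereby keeps $A$ inside the feasible set defining $\mathrm{CR}_w(\ket{\phi})$.
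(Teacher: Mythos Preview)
Your proof is correct and follows exactly the approach the paper intends: the corollary is stated without explicit proof precisely because it is an immediate consequence of the preceding continuity corollary $\abs{\mathrm{Cor}^A_w(\ket{\psi}) - \mathrm{Cor}^A_w(\ket{\phi})}\leq 6\sqrt{\epsilon}$, applied with $\epsilon=\delta^2/36$ to the maximizing region $A$. Your bookkeeping of the strict inequality $\mathrm{Cor}^A_w(\ket{\psi})>\delta$ against the perturbation bound $6\sqrt{\epsilon}=\delta$ is precisely the intended argument.
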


The above results manifest that the correlation strength provides a quantification for the robustness of the correlation range. This would allow one to derive approximate circuit complexity lower bounds for quantum states. For instance, given global correlation $\mathrm{Cor}(\ket{\psi})$, the state $\ket{\phi}$ with infidelity $\mathrm{Cor}^2(\ket{\psi})/36$ with respect to $\ket{\psi}$ will have a similar circuit complexity lower bound. More generally, $\ket{\phi}$ with fidelity $\abs{\braket{\psi}{\phi}}^2\geq 1-\delta^2/36$ has a circuit lower bound given by $(m-n+1)g_{K,2L-1}\geq \mathrm{CR}_w(\ket{\phi})\geq \mathrm{CR}_w^{\delta}(\ket{\psi})$. We will provide concrete examples of the approximate circuit complexity in Table~\ref{table:approxboundexamples} in the next subsection.

\subsection{Circuit complexity for permutation-invariant states}\label{ssc:PIstate}
Here, we analyze the resource requirements for preparing an important class of quantum states known as permutation-invariant states. These states serve as valuable resources in quantum information processing, supporting key protocols in quantum metrology~\cite{leibfried2004toward,lucke2011twin}, quantum communication~\cite{bose1998multiparticle,cleve1999how}, and certain quantum error correction schemes~\cite{ouyang2014permutation}. A quantum state $\ket{\psi}$ is said to be permutation-invariant if it is stabilized by any two-qubit swap gate, i.e., $SWAP(i,j)\ket{\psi} = \ket{\psi}$ for all $i,j$. Nontrivial permutation-invariant states, those not expressible as a tensor product of identical single-qubit states, exhibit long-range entanglement. Prior work has shown that such states cannot be prepared using shallow non-adaptive circuits under geometric locality constraints~\cite{friedman2023locality}. In this work, we establish more general lower bounds by considering a broader class of adaptive state preparation protocols, including those with all-to-all connectivity.

\begin{proposition}\label{prop:permutation-invariant}
Suppose a depth-$L$ adaptive circuit with $K$-bounded fan-in gates and $m$-qubit initial state, $\ket{0^m}$, prepares $n$-qubit permutation-invariant state $\ket{\psi}$. As long as $\ket{\psi}$ is not a fully separable state, there should be
\begin{equation}\label{eq:boundpermutation}
(m-n+1)K^{2L-1}\geq n.
\end{equation}
\end{proposition}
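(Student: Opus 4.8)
The plan is to reduce the statement to the correlation-range bound of Theorem~\ref{thm:adalowerboundgenericcorr} with the choice $w=1$, for which the right-hand side $(m-n+w)K^{2L-1}+w-1$ reads exactly $(m-n+1)K^{2L-1}$. It therefore suffices to prove that any permutation-invariant pure state $\ket{\psi}$ that is not fully separable has maximal single-qubit correlation range, $\mathrm{CR}_1(\ket{\psi})=n$; substituting this into Theorem~\ref{thm:adalowerboundgenericcorr} at $w=1$ yields Eq.~\eqref{eq:boundpermutation} immediately.

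To establish $\mathrm{CR}_1(\ket{\psi})=n$, I would first exploit permutation symmetry to show it is enough to exhibit a single correlated pair. Writing $U_\pi$ for the unitary implementing a permutation $\pi$, the invariance $U_\pi\ket{\psi}=\ket{\psi}$ together with $U_\pi O_i U_\pi^\dagger=O_{\pi(i)}$ gives $\mathrm{Cor}(O_i,O_j,\ket{\psi})=\mathrm{Cor}(O_{\pi(i)},O_{\pi(j)},\ket{\psi})$, so the inner maximum $\max_{O_i,O_j}\abs{\mathrm{Cor}(O_i,O_j,\ket{\psi})}$ is identical for every pair. Hence $\mathrm{Cor}^{[n]}_1(\ket{\psi})=\mathrm{Cor}(\ket{\psi})$ equals this common value, and $A=[n]$ achieves $\mathrm{CR}_1(\ket{\psi})=n$ as soon as the value is strictly positive. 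Equivalently, it suffices to show that the two-qubit marginal $\rho_{12}$ is not a product $\rho_1\otimes\rho_2$, since $\mathrm{Cor}(O_1,O_2,\ket{\psi})=\tr[(\rho_{12}-\rho_1\otimes\rho_2)\,O_1\otimes O_2]$ vanishes for all $O_1,O_2$ if and only if $\rho_{12}=\rho_1\otimes\rho_2$.

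The core step rules out $\rho_{12}=\rho\otimes\rho$ (by symmetry all single-qubit marginals equal a common $\rho$) whenever $\ket{\psi}$ is not fully separable. First, non-separability forces $\rho$ to be mixed: were $\rho$ pure, the Schmidt decomposition across the cut $1|\{2,\dots,n\}$ would have a single term, giving $\ket{\psi}=\ket{\phi}_1\otimes\ket{\chi}$, and applying this to every qubit together with permutation invariance would yield $\ket{\psi}=\ket{\phi}^{\otimes n}$, a fully separable state. Second, permutation invariance places $\ket{\psi}$ in the symmetric subspace, so it lies in the $+1$ eigenspace of $\mathrm{SWAP}(1,2)$ and is annihilated by the singlet projector $\ket{\Psi^-}\bra{\Psi^-}$ with $\ket{\Psi^-}=\tfrac{1}{\sqrt{2}}(\ket{01}-\ket{10})$. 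Consequently $\bra{\Psi^-}\rho_{12}\ket{\Psi^-}=\bra{\psi}(\ket{\Psi^-}\bra{\Psi^-}\otimes\id)\ket{\psi}=0$. A direct evaluation gives $\bra{\Psi^-}(\rho\otimes\rho)\ket{\Psi^-}=\det\rho$, so the product assumption would force $\det\rho=0$, contradicting that $\rho$ is mixed. Therefore $\rho_{12}\neq\rho\otimes\rho$, the common pairwise correlation is strictly positive, $\mathrm{CR}_1(\ket{\psi})=n$, and Theorem~\ref{thm:adalowerboundgenericcorr} closes the argument.

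The main obstacle is precisely this core step: for a generic, non-symmetric pure state, non-separability need not produce any nonzero two-point correlation, so the global-to-local reduction genuinely relies on permutation symmetry. The crux is the observation that symmetry confines $\rho_{12}$ to the triplet subspace, while a product $\rho\otimes\rho$ of a mixed $\rho$ necessarily carries weight $\det\rho>0$ on the singlet; reconciling these is exactly what forces $\rho$ to be pure, and hence forces a genuine correlation whenever $\ket{\psi}$ is entangled. I expect the remaining bookkeeping — the symmetry-averaging of correlations and the substitution $w=1$ — to be routine.
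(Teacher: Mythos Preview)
Your proposal is correct and follows the same route as the paper: reduce to Theorem~\ref{thm:adalowerboundgenericcorr} with $w=1$ by showing $\mathrm{CR}_1(\ket{\psi})=n$, using permutation invariance to make all pairwise correlations equal so that a single nonzero pair suffices. Your singlet-projector argument that $\rho_{12}=\rho\otimes\rho$ would force $\det\rho=0$ (hence $\rho$ pure, hence $\ket{\psi}=\ket{\phi}^{\otimes n}$) is a cleaner and more rigorous justification of the step the paper states only informally (``pairwise independence implies $\ket{\psi}$ is a product state'').
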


Proposition~\ref{prop:permutation-invariant} is a straightforward corollary of Theorem~\ref{thm:adalowerboundgenericcorr}. It can be proven by showing that the correlation range of any non-fully-separable state should be $n$. If it is not true, we can find two qubits $i$ and $j$ that are not correlated with each other. Since $\ket{\psi}$ is permutation-invariant, the choice of $i$ and $j$ is arbitrary, and this vanishing correlation must hold for all pairs of distinct qubits. In other words, the measurement results of any two qubits are statistically independent, which implies that $\ket{\psi}$ must be a product state, contradicting our assumption.

Representative examples of Proposition~\ref{prop:permutation-invariant} include GHZ states, W states, Dicke states, and symmetric hypergraph states. The W state is defined as $\ket{\mathrm{W}} = \frac{1}{\sqrt{n}} \sum_{i=1}^n \ket{1}_i \otimes \ket{\mathbf{0}}_{-i}$, where $-i$ denotes all qubits except qubit $i$. Dicke states generalize the W state and are given by $\ket{D_k} = \binom{n}{k}^{-1/2} \sum_{|A|=k} \ket{\mathbf{1}}_A \otimes \ket{\mathbf{0}}_{\bar{A}}$. The symmetric hypergraph state considered here is $\ket{\mathrm{HG}} = CZ_n \ket{+}^{\otimes n}$, where $CZ_n \equiv \id_n - 2\ketbra{0}^{\otimes n}$ is the generalized controlled-Z gate.

One can calculate the correlation strength of these states. For $\ket{\mathrm{GHZ}}$, it can be bounded by choosing $Z_i$ and $Z_j$,
\begin{equation}\label{eq:corrGHZ}
\mathrm{Cor}(\ket{\mathrm{GHZ}})=\mathrm{Cor}^{[n]}_1(\ket{\mathrm{GHZ}})\geq |\mathrm{Cor}(Z_i,Z_j,\ket{\mathrm{GHZ}})|=1.
\end{equation}
For W state $\ket{\mathrm{W}}$, we let $O_1=\prod_{i\in A_1}Z_i$ and $O_2=\prod_{j\in A_2}Z_j$ where $A_1$ and $A_2$ are disjoint subsets of $[n]$ with cardinality $w$. Thus,
\begin{equation}\label{eq:corrW}
\mathrm{Cor}^{[n]}_w(\ket{\mathrm{W}})\geq |\mathrm{Cor}(O_1,O_2,\ket{\mathrm{W}})|=\frac{w^2}{n^2}.
\end{equation}
For the Dicke state, the same choice of $O_1$ and $O_2$ gives
\begin{equation}\label{eq:corrDicke}
\mathrm{Cor}^{[n]}_w(\ket{D_k})\geq |\mathrm{Cor}(O_1,O_2,\ket{D_k})|=\left|\frac{\sum_{k_1}(-1)^{k_1}\binom{2w}{k_1}\binom{n-2w}{k-k_1}}{\binom{n}{k}}-\left(\frac{\sum_{k_1}(-1)^{k_1}\binom{w}{k_1}\binom{n-w}{k-k_1}}{\binom{n}{k}}\right)^2\right|=\Theta\left(\frac{kw^2}{n^2}\right)
\end{equation}
when $k$ is a constant and $k\ll w\ll n$. For the symmetric hypergraph state $\ket{\mathrm{HG}}$, we can choose $X_i$ and $X_j$ to get
\begin{equation}\label{eq:corrHG}
\mathrm{Cor}(\ket{\mathrm{HG}})=\mathrm{Cor}^{[n]}_1(\ket{\mathrm{HG}})\geq\mathrm{Cor}(X_i,X_j,\ket{\mathrm{HG}})=\frac{1}{2^{n-2}}\left(1-\frac{1}{2^{n-2}}\right).
\end{equation}
Proposition~\ref{prop:permutation-invariant} gives the circuit lower bound for preparing these states, and Eqs.~\eqref{eq:corrGHZ}-\eqref{eq:corrHG} can extend the results to approximate circuit complexity. We compare our lower bounds with known upper bounds for these states in Table~\ref{table:boundexamples}. Table~\ref{table:approxboundexamples} shows the tolerable error rates to make the lower bounds still hold. As previously discussed, GHZ state preparation saturates the scaling bound in Proposition~\ref{prop:permutation-invariant} in both the non-adaptive and shallow adaptive cases. Apart from GHZ states, only the non-adaptive preparation of W states is also known to saturate the bound~\cite{cruz2019efficient}. Moreover, we note that the state preparation complexity of $\ket{\mathrm{HG}}$ directly implies the circuit complexity of the unbounded fan-in gate $\mathrm{Toffoli}_n=\sum_{x_1,\cdots x_n}\ketbra{x_1,\cdots,x_{n-1},x_n\oplus\prod_{i=1}^{n-1}x_i}{x_1,\cdots,x_{n-1},x_n}$.

\begin{table}[!ht]
\centering
\caption{Summary of lower and upper bounds for permutation-invariant state preparation, including specific examples such as the GHZ state, W state, Dicke state, and hypergraph state $\ket{\mathrm{HG}}=CZ_n\ket{+}^{\otimes n}$. (a) For exact state preparation, we focus on circuit depth in the non-adaptive case and the number of ancillary qubits in the shallow adaptive case. Notably, preparing $\ket{\mathrm{HG}}$ with depth $O\left((\log n)^3\right)$ also requires a single borrowed ancilla~\cite{Claudon2024polylogarithmicdepth}. Currently, no universal adaptive state preparation scheme exists for arbitrary permutation-invariant states. (b) For approximate state preparation, we evaluate the tolerable approximation error to ensure that the circuit complexity lower bounds in Table~\ref{table:boundexamples} still hold. We consider two cases: non-adaptive circuits with $\Omega(\log n)$ depth; shallow adaptive circuits with $\Omega(n)$ ancilla. Here, $\epsilon$ can be any positive constant, and we take the Dicke states with a constant $k$ as a special case since it has a better bound.}
\subtable[]{
\resizebox{.7\textwidth}{!}{
\begin{tabular}{ccccc}\hline
& \multicolumn{2}{c}{Non-adaptive (depth)} & \multicolumn{2}{c}{Shallow adaptive (\# ancilla)} \\\hline
& Lower bound & Upper bound & Lower bound & Upper bound \\\hline
GHZ state & \multirow{5}{*}{$\log_K{n}$} & $\log_K{n}$ & \multirow{5}{*}{$\Omega(n)$} & $O(n)$ \\
W state &  & $O(\log{n})$~\cite{cruz2019efficient} &  & $O(n\log{n})$~\cite{Buhrman2024statepreparation} \\
Dicke state &  & $O(k\log(n/k))$~\cite{bartschi2022shortdepth} &  & $O(n^2\log{n})$~\cite{Buhrman2024statepreparation} \\
Hypergraph state &  & $O\left((\log{n})^3\right)$*~\cite{Claudon2024polylogarithmicdepth} &  & $O(n\log{n})$~\cite{Takahashi2016collapse}  \\
PI state &  & $O(n)$~\cite{bartschi2019deterministic} &  & N/A \\
\hline
\end{tabular}
}
\label{table:boundexamples}
}
\\
\subtable[]{
\resizebox{.7\textwidth}{!}{
\begin{tabular}{ccc}\hline
Tolerable approximation & $\Omega(\log n)$-depth lower bound & $\Omega(n)$-ancilla lower bound \\
error for: & for non-adaptive circuits & for adaptive circuits \\\hline
GHZ state & $1/36$ & $1/36$ \\
W state & $O\left(1/n^{\epsilon}\right)$ & $O(1)$ \\
Dicke state (constant $k$) & $O(1/n^{\epsilon})$ & $O(1)$ \\
Dicke state (general $k$) & $O(k^2/n^2)$ & $O(k^2/n^2)$ \\
Hypergraph state & $1/(9\cdot 4^n)$ & $1/(9\cdot 4^n)$ \\
\hline
\end{tabular}
}
\label{table:approxboundexamples}
}
\end{table}

\subsection{State preparation with geometrically constrained circuits}\label{ssc:geometric}
The trade-off theorems discussed above can be extended to settings with geometric constraints, where gates must not only have bounded fan-in but also act on nearest-neighbor qubits. We provide full details in Appendix~\ref{appendsc:geo} and briefly outline the results here. In this setting, a variant of Lemma~\ref{lemma:operatorweightgrowth} holds: for a geometrically local, $D$-depth quantum circuit $U$ composed of $K$-bounded fan-in gates, the lower bound of $K^D\geq \wt(\ket{\psi})$ changes to
\begin{equation}
g_{K,D}\geq \wt(\ket{\psi}),
\end{equation}
where $g_{K,D}$ depends on the specific geometric constraint. In the all-to-all architecture, $g_{K,D} = K^D$. Under geometric constraints, $g_{K,D}$ is normally smaller due to the limited light cone. For example, in an $r$-dimensional grid where each gate acts on a $K\times K\times\cdots\times K$ hypercube, $g_{K,D} = (2(K-1)D+1)^r$.

Based on $g_{K,D}$, a tighter bound applies for stabilizer states:
\begin{equation}
g_{K,D}\geq \wts(\ket{\psi}).
\end{equation}
For $L$-depth generic adaptive circuits, the bound in Theorem~\ref{theo:adalowerboundgeneric} becomes
\begin{equation}
(m-n+1)g_{K,2L-1}\geq \wt(\ket{\psi}),
\end{equation}
and the bound in Theorem~\ref{thm:adalowerboundgenericcorr} becomes
\begin{equation}
(m-n+w)g_{K,2L-1}+w-1\geq \mathrm{CR}_w(\ket{\psi}).
\end{equation}
For $L$-depth Clifford adaptive circuits, the bound in Proposition~\ref{prop:adalowerboundstab} becomes
\begin{equation}
(m-n+1)g_{K,L} \geq \wts(\ket{\psi}).
\end{equation}

\section{Anti-shallowness in adaptive state preparation}\label{sc:shallow}
In this section, we explore the anti-shallowness of quantum states generated by low-resource adaptive circuits. Specifically, we examine the fidelity between states prepared by adaptive circuits and those produced by shallow non-adaptive circuits, as introduced in Section~\ref{ssc:approcircuitcomplexity}. By fixing the depth of the non-adaptive circuits to be constant, we define anti-shallowness as a measure of a state's distance from trivial states. This quantity is invariant under shallow circuit transformations and can thus serve as an indicator of distinct quantum phases.

To understand what value of anti-shallowness can be achieved by resource-constrained adaptive circuits like shallow-adaptive circuits, Section~\ref{ssc:shallowcomplexity} analyzes its achievable range and relates its maximal scaling to the number of ancillary qubits available to shallow adaptive circuits. Thus, a high anti-shallowness scaling implies a high circuit complexity. The maximal scaling is achieved by good QLDPC code states. Section~\ref{ssc:antishallow} relates the anti-shallowness to the correlation strength of a quantum state, and enables evaluating the scaling of the anti-shallowness of certain states. We then provide a classification of quantum states via anti-shallowness with illustrative examples. In particular, Section~\ref{ssc:QLDPCprepare} presents a scheme for preparing QLDPC states via shallow adaptive circuits. These states require high circuit depths when prepared non-adaptively.

\subsection{Anti-shallowness and phase indicator}\label{ssc:approcircuitcomplexity}
We begin by introducing the fidelity measure to quantify the distance between a target state $\ket{\psi}$ and the states prepared by depth-$D$ non-adaptive circuits:
\begin{equation}
F_{D,n}(\psi) = \max_{V:\mathrm{depth}(V)=D} \abs{\bra{\psi}V\ket{0^n}}^2,
\end{equation}
where $V$ is an $n$-qubit unitary implemented by a non-adaptive circuit of depth $D$. The quantity $F_{D,n}(\psi)$ reflects how well $\ket{\psi}$ can be approximated by such a circuit. We use $-\log F_{D,n}(\psi)$ as a measure of distance. When $D$ is constant, this value captures the distance from $\ket{\psi}$ to the set of trivial states, those preparable by shallow, non-adaptive circuits. In this case, we refer to the quantity as anti-shallowness and denote it simply by $-\log F(\psi)$.

We show that anti-shallowness serves as an indicator for distinguishing quantum phases: if two states have different scalings of anti-shallowness, they must belong to different phases. This follows from the fact that the fidelity is maximized over all trivial states.

To see this, suppose $\ket{\psi_2} = U\ket{\psi_1}$ for some shallow-circuit unitary $U$. Then,
\begin{equation}
\frac{-\log F(\psi_2)}{-\log F(\psi_1)} = \Theta(1),
\end{equation}
which shows that anti-shallowness is invariant under shallow unitaries, up to constant factors. The argument proceeds as follows:
\begin{equation}
\begin{split}
-\log F(\psi_2) &= \min_{V:\mathrm{depth}(V)=\Theta(1)} -\log\abs{\bra{\psi_2}V\ket{0^n}}^2\\
&= \min_{V:\mathrm{depth}(V)=\Theta(1)} -\log\abs{\bra{\psi_1}U^\dagger V\ket{0^n}}^2\\
&= \min_{V:\mathrm{depth}(V)=\Theta(1)} -\log\abs{\bra{\psi_1}V\ket{0^n}}^2\\
&= -\log F(\psi_1).
\end{split}
\end{equation}
Here, the equalities hold in the asymptotic sense, i.e., up to constant scaling. This completes the argument. Thus, anti-shallowness provides an effective tool for indicating quantum phases.

\subsection{Anti-shallowness of shallow-adaptive-circuit states}\label{ssc:shallowcomplexity}
It is important to note that the anti-shallowness of two states prepared using the same resources can differ in their scaling. To better understand the range of anti-shallowness of states prepared by adaptive circuits, we consider the minimum fidelity $F_{D,n}(\psi)$ over all $n$-qubit states $\ket{\psi}$ produced by an adaptive circuit of depth $L$ acting on $m$ input qubits:
\begin{equation}
F_{D,n}^{L,m} = \min_{\substack{\ket{\psi}=(\bra{\mathbf{s}}\otimes \id_n) U\ket{0^m}\\ \mathrm{depth}(\bra{\mathbf{s}}\otimes \id_n) U)=L,\mathbf{s}\in \{0,1\}^{m-n}}} F_{D,n}(\psi).
\end{equation}
Intuitively, $F_{D,n}^{L,m}$ describes how difficult non-adaptive circuits of depth $D$ are to approximate the most complicated $n$-qubit state prepared by adaptive circuits of depth $L$ with $m-n$ ancillas. In the following, we focus on the regime where $D = O(1)$ and $L = O(1)$, and establish quantitative results linking the power of shallow adaptive circuits to the number of ancillary qubits $m - n$. The more general case beyond the shallow-circuit regime is analyzed in Appendix~\ref{appendssc:thmpf:adaptpower}.

\begin{theorem}\label{thm:adaptpower}
Suppose $m-n = \omega(1)$. There exists an $n$-qubit state prepared by an $m$-qubit shallow adaptive circuit such that its distance to any $n$-qubit state prepared by an $n$-qubit shallow non-adaptive circuit is $\Omega(m-n)$. Meanwhile, the distance between any $n$-qubit state prepared by an $m$-qubit shallow adaptive circuit and an $n$-qubit shallow non-adaptive circuit is $O(m-n)$. Mathematically, $\exists L\in O(1), \forall D\in O(1)$,
\begin{equation}
-\log F_{D,n}^{L,m} = \Theta(\min(m-n, n)).
\end{equation}
If $m-n = O(1)$, then all shallow-adaptive-circuit states can be generated by shallow circuits. Mathematically, $\forall L\in O(1), \exists D\in O(1)$,
\begin{equation}\label{eq:Fconstantregion}
-\log F_{D,n}^{L,m} = 0.
\end{equation}
\end{theorem}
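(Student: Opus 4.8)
The plan is to treat the two regimes separately, with a single structural lemma doing most of the work. Write a generic shallow-adaptive state as $\ket{\psi}\propto(\bra{\mathbf{s}}_{\mathcal{A}}\otimes\id_{\mathcal{O}})U\ket{0^m}$, where $\mathcal{A}$ is the set of $m-n$ measured ancillas, $\mathcal{O}$ the $n$ output qubits, and $U$ the depth-$L$ deferred-measurement unitary. The first step is to reuse the backward-lightcone technique from the proof of Theorem~\ref{theo:adalowerboundgeneric}: since $\ketbra{\mathbf{s}}{\mathbf{s}}_{\mathcal{A}}$ is supported on $\mathcal{A}$, its Heisenberg evolution $U^{\dagger}(\ketbra{\mathbf{s}}{\mathbf{s}}_{\mathcal{A}}\otimes\id)U$ is supported on the backward lightcone $B=B(\mathcal{A})$, of size $\abs{B}\leq(m-n)K^{L}$. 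Acting on $\ket{0^m}$ this gives the \emph{exact} rewriting $\ket{\mathbf{s}}_{\mathcal{A}}\otimes\ket{\psi}=U(\ket{\eta}_{B}\otimes\ket{0}_{\bar{B}})$ for some normalized $\ket{\eta}$ on $B$. Thus $\ket{\psi}$ is produced by a shallow circuit acting on a state that is nontrivial only on $O((m-n)K^{L})$ qubits.

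The key structural claim I would then establish is that $\ket{\psi}=V(\ket{\chi}_{R}\otimes\ket{0^{\bar R}})$ for a depth-$O(L)$ unitary $V$, an arbitrary state $\ket{\chi}$ on a region $R\subseteq\mathcal{O}$ with $\abs{R}=O((m-n)K^{O(L)})$, and $\ket{0}$ on the complement. Given this, the upper bound is routine: expanding $\ket{\chi}_{R}=\sum_{x}c_{x}\ket{x}$ there is a basis string $x^{*}$ with $\abs{c_{x^{*}}}^{2}\geq 2^{-\abs{R}}$, so the shallow state $V(\ket{x^{*}}_{R}\otimes\ket{0^{\bar R}})$ satisfies $\abs{\bra{\psi}V(\ket{x^{*}}_{R}\otimes\ket{0^{\bar R}})}^{2}=\abs{c_{x^{*}}}^{2}\geq 2^{-\abs{R}}$. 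Hence $F_{D,n}(\psi)\geq 2^{-\abs{R}}$ for a suitable constant $D$, giving $-\log F_{D,n}(\psi)=O(m-n)$ for \emph{every} adaptive state. Combined with the trivial bound $-\log F_{D,n}(\psi)\leq n$ (any $n$-qubit state has some computational-basis overlap at least $2^{-n}$), this yields $-\log F_{D,n}^{L,m}=O(\min(m-n,n))$.

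For the matching lower bound I would exhibit one explicit hard state: a good QLDPC code state on $t=\Theta(\min(m-n,n))$ qubits, with the remaining output qubits padded by $\ket{0}$. Such states are preparable by shallow adaptive circuits using $\Theta(t)\leq m-n$ ancillas (Section~\ref{ssc:QLDPCprepare}), and their large code distance forces $-\log F_{D,n}=\Omega(t)$ for every constant $D$ (the correlation-strength bound of Section~\ref{ssc:antishallow}, cf.~\cite{bravyi2024entanglement}); padding with $\ket{0}$ does not decrease anti-shallowness. This gives $-\log F_{D,n}^{L,m}=\Omega(\min(m-n,n))$ and hence the claimed $\Theta$. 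The regime $m-n=O(1)$ of Eq.~\eqref{eq:Fconstantregion} is then an immediate corollary of the structural lemma: now $\abs{R}=O(1)$, so $\ket{\chi}_{R}$ lives on constantly many qubits and can be prepared \emph{exactly} by a constant-depth circuit $V_{R}$; then $\ket{\psi}=V(V_{R}\otimes\id)\ket{0^n}$ is exactly a depth-$O(1)$ non-adaptive circuit state, whence $F_{D,n}(\psi)=1$ and $-\log F_{D,n}^{L,m}=0$.

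The main obstacle is the structural lemma itself. The naive guess that the measurement affects only the forward lightcone of $\mathcal{A}$ is false: projecting $\mathcal{A}$ can \emph{create} entanglement between a qubit inside and a qubit outside that lightcone (e.g.\ a pre-existing Bell pair, one half of which later couples to a measured qubit, collapses into fresh correlations across the cut). What I must make rigorous is that any such induced correlation is short-ranged and ``vacuum-like''—it is already present in $U\ket{0^m}$ and confined within lightcones of size $K^{O(L)}$—so it can be undone by a constant-depth unitary, leaving the genuinely nontrivial content inside a region $R$ of size $O((m-n)K^{O(L)})$. Tracking this confinement correctly, rather than via gate-graph connected components (which can be $\Theta(m)$ already at depth two) or via naive forward/backward cones (which do not close under iteration), is the delicate part; the QLDPC anti-shallowness lower bound invoked above is the second nontrivial ingredient, but it is localized to the later sections and may simply be cited.
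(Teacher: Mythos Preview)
Your overall architecture is correct and matches the paper's: a structural lemma of the form $\ket{\psi}=V(\ket{\chi}_R\otimes\ket{0}^{\bar R})$ with $\abs{R}=O((m-n)K^{O(L)})$ gives the upper bound and the $m-n=O(1)$ case, and a padded good-QLDPC code state gives the matching lower bound. However, you have misdiagnosed the difficulty of the structural lemma. The backward-lightcone decomposition you already invoked from the proof of Theorem~\ref{theo:adalowerboundgeneric} establishes it \emph{directly}, with no further work. In that decomposition one writes $\ket{\psi}=\bigl(\prod_i U_i^{(1)}\bigr)\,E\,\ket{0^m}$ with $E$ supported on $Q_1^{(2)}$ (the backward cone of the ancillas, size $\leq(m-n)K^L$). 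The point you are missing is that every $U_i^{(1)}$ is supported in $Q_i^{(1)}\subseteq Q_L^{(1)}\subseteq[n]$, because the ancilla set $\mathcal{A}$ lies in every $Q_i^{(2)}$ by construction. Hence $V\coloneqq\prod_i U_i^{(1)}$ is already a genuine depth-$L$ circuit on the $n$ output qubits, and $V^\dagger\ket{\psi}=E\ket{0^m}$ is nontrivial only on $R=Q_1^{(2)}\cap[n]$. The ``entanglement across the cut'' you worry about is precisely what $V$ creates and $V^\dagger$ undoes; no further confinement argument is needed. (The paper's Appendix proof uses a variant---splitting each layer only by whether gates touch $\mathcal{A}$ and conjugating---but the idea is the same and the decomposition you already wrote down is enough.)

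For the lower bound, one correction: the $\Omega(t)$ anti-shallowness of a $t$-qubit good QLDPC code state does \emph{not} come from the correlation-strength bound of Section~\ref{ssc:antishallow} (that lemma yields only constant lower bounds). It requires a separate code-distance argument generalizing~\cite{bravyi2024entanglement} to states of the form $\ket{\phi}\otimes\ket{0^{n-t}}$; in the paper this is Lemma~\ref{lemma:qlpdcpower}. Your claim that ``padding with $\ket{0}$ does not decrease anti-shallowness'' is exactly the content of that lemma, and it does need its own proof, but you are right that it is an isolated technical ingredient that can be cited.
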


Theorem~\ref{thm:adaptpower} addresses two regimes of the ancillary qubit overhead $m-n$. When the number of ancillary qubits is constant $m - n = O(1)$, adaptive strategies offer no advantage in the shallow-circuit regime. This result is intuitive: the scaling of free parameters does not improve if only a constant number of ancillary qubits are used. To manifest the advantage of adaptive strategies, the number of ancillary qubits must scale with the system size.

In the regime where $m-n = \omega(1)$, Theorem~\ref{thm:adaptpower} provides both upper and lower bounds on fidelity: $-\log F_{D,n}^{L,m} = O(\min(m-n, n))$ and $-\log F_{D,n}^{L,m} = \Omega(\min(m-n, n))$. Note that for any $n$-qubit state $\ket{\psi}$, $F_{D,n}(\psi)$ can reach $2^{-n}$ by considering the inner product of all computational-basis states and $\ket{\psi}$. Thus, the upper bound of $-\log F_{D,n}^{L,m}$ is $n$. That is why one needs to minimize between $m-n$ and $n$. To show another upper bound of $-\log F_{D,n}^{L,m} = O(m-n)$, we utilize the idea of the light cone. Utilizing the expression of Eq.~\eqref{eq:adapt_state}, we map the states prepared by an $L$-depth adaptive circuit to a $\Theta((m-n)K^L)$-qubit state acted on by an $L$-depth non-adaptive circuit. The upper bound holds hereafter for $L=O(1)$. This idea also explains Eq.~\eqref{eq:Fconstantregion}.

To prove $-\log F_{D,n}^{L,m} = \Omega(\min(m-n, n))$, we construct a state prepared by a shallow adaptive circuit. Specifically, as shown in Section~\ref{ssc:QLDPCprepare}, with $m-n$ ancillary qubits, we can prepare a $\Theta(\min(m-n, n))$-qubit good QLDPC code state. When $m-n < n$, the additional qubits are all set as $\ket{0}$. The QLDPC code state is in a code space with distance $\Theta(\min(m-n, n))$. By generalizing the results of Ref.~\cite{bravyi2024entanglement}, we show that $-\log F_{D,n}^{L,m}$ is at least the same order as this code distance. The full proof of Theorem~\ref{thm:adaptpower} is shown in Appendix~\ref{appendsc:antishallow}.

The fact that a good QLDPC code state achieves the upper bound also confirms that preparing an $n$-qubit good QLDPC code state requires $\Omega(n)$ ancillary qubits for shallow adaptive circuits. This provides a circuit complexity lower bound for good QLDPC code states. For states with exponentially small infidelity to the $n$-qubit good QLDPC code states, the lower bound of the anti-shallowness is still $\Omega(n)$ by continuity of anti-shallowness. Particularly, given two states $\ket{\psi}$ and $\ket{\phi}$ such that $\abs{\braket{\phi}{\psi}}^2 = 1-\epsilon$, we can express $\ket{\phi} = \sqrt{1-\epsilon}\ket{\psi}+\sqrt{\epsilon}\ket{\psi^{\perp}}$, where $\ket{\psi^{\perp}}$ is orthogonal to $\ket{\psi}$. Then,
\begin{equation}
\begin{split}
-\log F(\phi) &= -\log \max_V \abs{\bra{\phi}V\ket{0^n}}^2\\
&=-\log \max_V \abs{\sqrt{1-\epsilon}\bra{\psi}V\ket{0^n}+\sqrt{\epsilon}\bra{\psi^{\perp}}V\ket{0^n}}^2\\
&= -\log \max_V [(1-\epsilon) \abs{\bra{\psi}V\ket{0^n}}^2 + \epsilon \abs{\bra{\psi^{\perp}}V\ket{0^n}}^2  + 2\sqrt{\epsilon(1-\epsilon)}\text{Re}(\bra{\psi}V\ket{0^n}\bra{\psi^{\perp}}V\ket{0^n})].
\end{split}
\end{equation}
Since $0\leq \abs{\bra{\psi}V\ket{0^n}}\leq 1$ and $0\leq \abs{\bra{\psi^{\perp}}V\ket{0^n}}\leq 1$, we have that
\begin{equation}
% -\log F(\phi) &\leq -\log[(1-\epsilon)2^{\log F(\psi)} - 2\sqrt{\epsilon(1-\epsilon)}];\\
-\log F(\phi) \geq -\log[(1-\epsilon)2^{\log F(\psi)} + \epsilon + 2\sqrt{\epsilon(1-\epsilon)}].
\end{equation}
Thus, given $-\log F(\psi) = \Theta(n)$, we can ensure that $-\log F(\phi) = \Omega(n)$ if $\epsilon = \exp(-\Theta(n))$. As a result, the approximate preparation for good QLDPC code states with exponentially high fidelity still requires $\Omega(n)$ ancillary qubits. Following the proof of anti-shallowness lower bounds for $n$-qubit good QLDPC code states~\cite{bravyi2024entanglement}, one can also show that the approximate preparation for good QLDPC code states with constant fidelity requires $\Omega(\log n)$ depth. We provide more details in Appendix~\ref{appendsc:antishallow}.

From Theorem~\ref{thm:adaptpower}, we observe that the power of a shallow adaptive circuit is directly related to the number of qubits measured within the circuit. Specifically, the adaptive strategy only plays a role with a non-constant number of ancillary qubits. With respect to the fidelity measure, the distance between a state prepared by a shallow adaptive circuit and one prepared by a non-adaptive circuit is proportional to the number of measured qubits, as shown in Fig.~\ref{fig:VennAncilla}. This advantage comes from certain QLDPC code states $\ket{\psi}$, which are challenging to approximate using non-adaptive circuits. At the same time, the upper bound of $O(\min(m - n, n))$ limits the advantage of shallow adaptive circuits to scale proportionally with the number of measured qubits. A natural question is whether this relationship holds for general adaptive circuit states beyond the shallow regime.
% Additionally, it remains an open problem to close the gap between the upper and lower bounds of $-\log F_{D,n}^{L,m}$.

\begin{figure}[htbp]
\centering
\includegraphics[width=6cm]{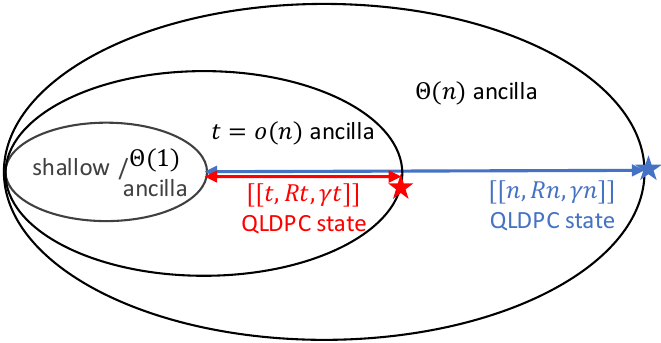}
\caption{The relationship between shallow-circuit states and shallow-adaptive-circuit states with $t$ ancillary qubits is illustrated. The shallow-circuit regime is equal to the ``$\Theta(1)$ ancilla'' regime. The maximum distance between the two classes, measured by the negative logarithm of fidelity, scales proportionally with the number of ancillary qubits. This maximum distance is achieved by a QLDPC code state with code parameters $[[t, Rt, \gamma t]]$, where $R$ and $\gamma$ are constants.}
\label{fig:VennAncilla}
\end{figure}

\subsection{Anti-shallowness and correlation strength}\label{ssc:antishallow}
In this part, we relate the anti-shallowness to the correlation strength of quantum states. This relation enables one to evaluate the scaling of anti-shallowness of quantum states. Based on the phase indicator property of anti-shallowness, we use this quantity to distinguish quantum states prepared by shallow-adaptive circuits. Notably, states prepared using different numbers of ancillary qubits belong to distinct quantum phases, e.g., states preparable with $o(n)$ ancillas and those requiring $\Omega(n)$ ancillas must lie in different phases. Therefore, we need to examine shallow-adaptive-circuit states under different scalings of ancillary qubits separately. As a representative case, we focus here on the regime where the number of ancillas scales as $\Omega(n)$, since this regime contains states that achieve a maximal scaling of anti-shallowness.

Though many quantum states need $\Omega(n)$ ancillas for preparation, their anti-shallowness can be much different. Here, we provide examples to show that the anti-shallowness in this regime can be linearly large or exponentially small. As discussed in the previous subsection, the maximal anti-shallowness scaling, i.e., $\Theta(n)$, is achieved by good QLDPC code states. Below this maximal scaling, we show that the anti-shallowness of the GHZ state and the hypergraph state $\ket{\mathrm{HG}} = CZ_n \ket{+}^{\otimes n}$ scales as $\Theta(1)$ and $e^{-\Theta(n)}$, respectively, as corollaries by Lemma~\ref{lemma:correlation}. As a remark, both the QLDPC and GHZ states lie exactly in the $\Theta(n)$-ancilla regime, as established by matching upper bounds from known shallow-adaptive state preparation protocols. Nonetheless, the hypergraph state currently has only an $O(n \log n)$ upper bound, so we can only conclude that it belongs to the $\Omega(n)$-ancilla regime.

In the following, we present the lemma showing a close relationship between anti-shallowness and global correlation of quantum states.

\begin{lemma}\label{lemma:correlation}
The anti-shallowness of an $n$-qubit state, $\ket{\psi}$, can be bounded with its global correlation:
\begin{equation}
-\log F(\psi)\geq -\log \left(1-\frac{1}{36}\mathrm{Cor}(\ket{\psi})^2\right)
\end{equation}
in the asymptotic case.
\end{lemma}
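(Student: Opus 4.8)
The plan is to show that every state preparable by a constant-depth non-adaptive circuit has \emph{vanishing} global correlation, and then to transfer this fact to $\ket{\psi}$ through the continuity of the correlation function established in the preceding corollary.

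First I would establish the key structural fact: for any $\ket{\phi} = V\ket{0^n}$ with $V$ of constant depth $D$, one has $\mathrm{Cor}(\ket{\phi}) = 0$ in the asymptotic regime. The argument rests on backward lightcones. For a single-qubit observable $O_i$, the Heisenberg-evolved operator $V^\dagger O_i V$ is supported on the backward lightcone of qubit $i$, whose size is at most $K^D = O(1)$ by (the backward-direction analogue of) Lemma~\ref{lemma:operatorweightgrowth}. If two output qubits $i,j$ have disjoint backward lightcones, then $V^\dagger O_i V$ and $V^\dagger O_j V$ act on disjoint subsystems, so using that the input $\ket{0^n}$ is a product state we get $\bra{\phi}O_iO_j\ket{\phi} = \bra{0^n}(V^\dagger O_iV)(V^\dagger O_jV)\ket{0^n} = \bra{\phi}O_i\ket{\phi}\bra{\phi}O_j\ket{\phi}$, whence $\mathrm{Cor}(O_i,O_j,\ket{\phi}) = 0$ for all single-qubit $O_i,O_j$. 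A counting argument guarantees such a disjoint pair exists once $n > K^{2D}+1$: each output qubit conflicts with at most $K^{2D}$ others, namely the forward lightcones of the $\leq K^D$ qubits in its own backward lightcone. Since the global correlation is a minimum over qubit pairs of a maximum over operators, exhibiting one pair with value zero forces $\mathrm{Cor}(\ket{\phi}) = 0$.

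Next I would let $\ket{\phi}$ be the optimal shallow-circuit state realizing the fidelity, so that $\abs{\braket{\psi}{\phi}}^2 = F(\psi) =: 1-\epsilon$. Invoking the continuity corollary for the correlation strength $\mathrm{Cor}^A_w$ with $A = [n]$ and $w = 1$ gives $\abs{\mathrm{Cor}(\ket{\psi}) - \mathrm{Cor}(\ket{\phi})} \leq 6\sqrt{\epsilon}$. Combined with $\mathrm{Cor}(\ket{\phi}) = 0$ from the previous step, this yields $\mathrm{Cor}(\ket{\psi}) \leq 6\sqrt{\epsilon}$, hence $\epsilon \geq \frac{1}{36}\mathrm{Cor}(\ket{\psi})^2$. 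Substituting $\epsilon = 1 - F(\psi)$ and taking the negative logarithm produces the claimed inequality $-\log F(\psi) \geq -\log\!\left(1 - \frac{1}{36}\mathrm{Cor}(\ket{\psi})^2\right)$.

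The main obstacle is the first step: one must argue cleanly that the global correlation of every shallow non-adaptive state is \emph{exactly} zero, which requires the constant lightcone bound together with the product structure of the input, and one must handle the ``asymptotic'' qualifier carefully, since the disjoint-lightcone pair is guaranteed only once $n$ is large relative to $K^{2D}$. By contrast, the second step is a direct application of the already-established continuity corollary, so essentially all the conceptual work lies in certifying that trivial (shallow-circuit) states are uncorrelated.
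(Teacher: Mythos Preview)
Your proposal is correct and follows essentially the same approach as the paper: both establish that a shallow non-adaptive state has a pair of qubits with vanishing correlation via the lightcone argument, then transfer this to $\ket{\psi}$ through the $6\sqrt{\epsilon}$ continuity bound. The only cosmetic differences are that the paper cites the proof of Theorem~\ref{thm:adalowerboundgenericcorr} for the disjoint-lightcone pair (rather than giving your direct counting argument) and re-derives the H\"older continuity inline (rather than invoking the earlier corollary as you do); the logic is identical.
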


The proof utilizes the fact that the global correlation of states preparable by shallow non-adaptive circuits should be $0$. Detailed proof is shown in Appendix~\ref{appendssc:thmpf:correlation}. This lemma provides a tool for us to evaluate the lower bound of anti-shallowness for certain quantum states, as demonstrated below.

For the GHZ state, the global correlation satisfies $\mathrm{Cor}(\ket{\mathrm{GHZ}}) \geq \mathrm{Cor}(Z_i, Z_j, \ket{\mathrm{GHZ}}) = 1$, which implies a lower bound on the anti-shallowness: $-\log F(\ket{\mathrm{GHZ}}) \geq \log\frac{36}{35}$. On the other hand, its fidelity with the product state $\ket{0}^{\otimes n}$ provides an upper bound: $-\log F(\ket{\mathrm{GHZ}}) \leq \log 2$. Together, these bounds show that $-\log F(\ket{\mathrm{GHZ}}) = \Theta(1)$.

For hypergraph states, the global correlation $\mathrm{Cor}(\ket{\mathrm{HG}})\geq\mathrm{Cor}(X_i,X_j,\ket{\mathrm{HG}})=\frac{1}{2^{n-2}}\left(1-\frac{1}{2^{n-2}}\right)$, implying the anti-shallowness $-\log F(\ket{\mathrm{HG}})\geq -\log\left(1-\frac{1}{36\cdot 2^{2n-4}}\left(1-\frac{1}{2^{n-2}}\right)^2\right)$. On the other hand, the fidelity to $\ket{+}^{\otimes n}$ gives an upper bound $-\log F(\ket{\mathrm{HG}})\leq -\log\left(1-\frac{1}{2^{n-1}}\right)$. Thus, we conclude that $-\log F(\ket{\mathrm{HG}})=e^{-\Theta(n)}$.

% For W states, the global correlation $\mathrm{Cor}(\ket{\mathrm{W}})\geq\mathrm{Cor}(X_i,X_j,\ket{\mathrm{GHZ}})=\frac{2}{n}$, implying the anti-shallowness $-\log F(\ket{\mathrm{W}})\geq -\log\left(1-\frac{1}{3n}\right)$. On the other hand, the fidelity to $\ket{0}^{\otimes (n-1)}\ket{1}$ gives an upper bound $-\log F(\ket{\mathrm{W}})\leq \log n$. GG

We illustrate the anti-shallowness of the good QLDPC code state, GHZ state, and hypergraph state in Fig.~\ref{fig:antishallow}. Although we present only three representative examples, they reveal that the anti-shallowness can range from asymptotically maximal to vanishingly small. The gap between the extremes, $\Theta(n)$ and $e^{-\Theta(n)}$, can potentially be filled by constructing additional examples, which we leave for future work. The different scalings of anti-shallowness of these three examples indicate that they are in different quantum phases. This phase distinction is not feasible with state weight. In this sense, anti-shallowness provides a refined phase indication with state weight, and we expect more indicators to further explore quantum phase structure in the future.

\begin{figure}[htbp]
\centering
\includegraphics[width=7cm]{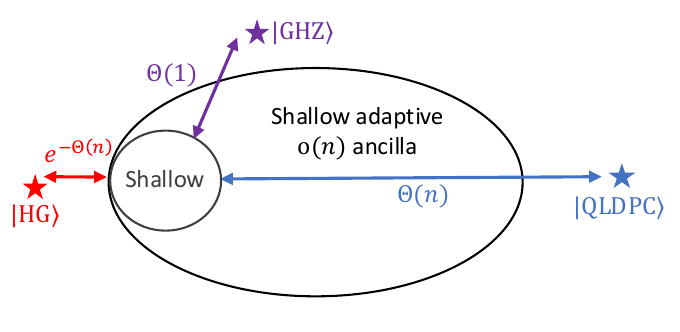}
\caption{Anti-shallowness of three different quantum states preparable by shallow adaptive circuits with $\Omega(n)$ ancillas. Their anti-shallownesses are $\Theta(n)$, $\Theta(1)$, and $e^{-\Theta(n)}$ for the good QLDPC code state, GHZ state, and hypergraph state, respectively.}
\label{fig:antishallow}
\end{figure}

\subsection{Efficient QLDPC state preparation with shallow adaptive circuits}\label{ssc:QLDPCprepare}
In this part, we construct protocols for deterministically preparing QLDPC code states with shallow adaptive circuits, which contributes to the proof of Theorem~\ref{thm:adaptpower}.
% The number of ancillary qubits also saturates the lower bound in Corollary~\ref{coro:ancillalow}.
Note that the deterministic state preparation protocol is stronger than the probabilistic case.
The result of this part might be additionally interesting in quantum computing.

Particularly, our state preparation scheme is not restricted to QLDPC code states but includes this case. The scheme is preparing an $n$-qubit stabilizer state with an $m$-qubit shallow adaptive circuit. In this scheme, $m \leq 2n$, and the measurement is at the end of the circuit. The target state $\ket{\psi}$ has to satisfy the conditions below.

Given the $n$-qubit stabilizer state, $\ket{\psi}$, with stabilizer generator set as $\mathcal{S}_{\psi} = \{S_1, S_2, \cdots, S_n\}$, one can divide the stabilizer generators into two disjoint groups, $\mathcal{S}_{\psi} = \mathcal{S}_{\psi}^1 + \mathcal{S}_{\psi}^2$, such that
\begin{enumerate}
\item $\forall S_i\in \mathcal{S}_{\psi}^1$, $\wt(S_i)\leq s$ with $s = O(1)$;
\item There exists a trivial state, $\ket{\phi}$, or a state with a constant circuit complexity, in the code space defined by stabilizer generators within $\mathcal{S}_{\psi}^2$.
\end{enumerate}

Note that the set $\mathcal{S}_{\psi}^1$ defines a quantum stabilizer LDPC code, $\mathcal{C}$, with itself as the stabilizer check. Denote $\abs{\mathcal{S}_{\psi}^2} = k$, then $\abs{\mathcal{S}_{\psi}^1} = n-k$, and the stabilizer code defined by $\mathcal{S}_{\psi}^1$ is a $[[n, k]]$ code. Without loss of generality, let $\mathcal{S}_{\psi}^1 = \{S_1, S_2, \cdots, S_{n-k}\}$ composed of the former $n-k$ stabilizers in $\mathcal{S}_{\psi}$. The target state $\ket{\psi}$ is a state in the code space. If $k = 0$, the stabilizers of the target state $\ket{\psi}$ are all sparse. The following lemma shows that all the stabilizers in $\mathcal{S}_{\psi}^1$ can be measured simultaneously in an $O(s^2)$-depth circuit with $n-k$ ancillary qubits.

\begin{lemma}\label{lemma:qldpcmeasure}
Given a set of independent $n$-qubit Pauli operators, $\mathcal{S} = \{S_1, S_2, \cdots, S_t\}$, with $\forall j, \wt(S_j)\leq s$ and $\forall i, \# S^i = \abs{\{S_j | S_j(i)\neq \id\}}\leq s$, one can design a Clifford circuit with depth $2+s+s^2$ and $t$ ancillary qubits to measure all Pauli operators simultaneously. Here, $S_j(i)$ denotes the Pauli gate on the $i$-th qubit of $S_j$, and $\# S^i$ represents the number of Pauli operators acting nontrivially on qubit $i$.
\end{lemma}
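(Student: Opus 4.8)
The plan is to construct the measurement circuit explicitly, attaching one ancilla to each Pauli operator $S_j$ and using the standard syndrome-extraction gadget, then bound the depth by carefully scheduling the controlled-Pauli gates so that no qubit is acted upon by two gates in the same layer. Concretely, for each $S_j$ I introduce an ancilla $a_j$ initialized in $\ket{0}$, apply a Hadamard to place it in $\ket{+}$, then apply controlled-$S_j$ (a sequence of at most $s$ controlled-Pauli gates, one for each qubit in $\mathrm{supp}(S_j)$) with $a_j$ as control, and finally a Hadamard followed by a computational-basis measurement. Because the $S_j$ are mutually commuting Pauli operators, these gadgets measure the $S_j$ simultaneously and nondestructively in the appropriate sense. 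The Hadamards and measurements contribute the additive ``$2$'' (one layer of $H$ at the start, and one combined $H$+measurement layer at the end, noting the measurement is deferred to the circuit's end per the lemma's setting); the real content is bounding the controlled-Pauli portion by $s^2$ layers.

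The central step is the \emph{edge-coloring} argument that gives the $s^2$ depth for the controlled-Pauli gates. I would model the interaction pattern as a bipartite multigraph $G$ whose two vertex sets are the ancillas $\{a_j\}$ and the data qubits $[n]$, with an edge $(a_j, i)$ whenever $S_j$ acts nontrivially on qubit $i$. Each controlled-Pauli gate corresponds to one edge, and two gates can be executed in parallel precisely when their edges share no vertex, i.e., when the edges receive different colors in a proper edge-coloring. By the LDPC hypotheses, every ancilla vertex $a_j$ has degree $\wt(S_j) \leq s$ and every data vertex $i$ has degree $\# S^i \leq s$, so the maximum degree of $G$ is at most $s$. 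By König's edge-coloring theorem for bipartite graphs, $G$ admits a proper edge-coloring with at most $\Delta(G) \leq s$ colors. This would naively give depth $s$ rather than $s^2$, so the looser $s^2$ bound in the statement presumably absorbs the overhead of decomposing each controlled-Pauli-\emph{string} interaction or of handling the non-commuting single-qubit Clifford conjugations needed to convert each $S_j$ into a product of controlled-$Z$-type gates; I would keep the coloring at $s$ colors and then note that within each color class the per-edge gadget costs a constant (or at most $s$) additional depth, yielding the stated $s+s^2$.

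The main obstacle I anticipate is the careful accounting that keeps the single-qubit basis-change gates (converting each Pauli $S_j(i) \in \{X,Y,Z\}$ into a $Z$-type control so the controlled operation becomes a $CZ$ or $CNOT$) from inflating the depth or colliding across different $S_j$ that share a data qubit. Since a single data qubit $i$ may appear in up to $s$ different operators with \emph{different} Pauli types, one cannot fix a global basis per qubit; instead the basis change must be folded into each controlled gate's gadget, which is where the extra factor of $s$ (turning $s$ colors into $s^2$ depth) naturally enters. I would therefore present the gadget per edge as a constant-depth block, argue each color class executes in $O(s)$ depth once these blocks are serialized within the class, and conclude a total depth of $2 + s + s^2$ as claimed; verifying that the commutation structure makes the simultaneous measurement well-defined is routine given that the $S_j$ commute.
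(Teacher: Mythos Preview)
Your edge-coloring of the Tanner graph to schedule the controlled-Pauli gates in $s$ layers matches the paper exactly. However, your explanation of where the extra $s^2$ comes from is wrong, and the gap is not cosmetic.

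There is no basis-change issue: controlled-$X$, controlled-$Y$, and controlled-$Z$ are all two-qubit Clifford gates and can be applied directly as the edge gadget, so each color class really does execute in a single layer. The actual obstruction, which you do not mention, is that the controlled-Pauli gates coming from \emph{different} ancillas on the \emph{same} data qubit need not commute. For example, $C_{a_1}X_k$ and $C_{a_2}Z_k$ satisfy $[C_{a_1}X_k,\,C_{a_2}Z_k]=C\!Z_{a_1 a_2}$ up to sign. Hence, once you interleave the gadgets via the edge-coloring, the resulting unitary differs from the intended product of syndrome-extraction circuits by a layer of $CZ$ gates among the \emph{ancilla} qubits (the paper calls such ancilla pairs ``tangled''). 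Without correcting these, the circuit does not measure the $S_j$.

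The paper's fix is to build a second graph $G_T$ on the ancilla vertices, with an edge between $a_j$ and $a_{j'}$ whenever the parity of non-commuting pairs between their controlled-Paulis is odd. Since ancilla $a_j$ touches at most $s$ data qubits and each data qubit is touched by at most $s$ ancillas, $\Delta(G_T)\le s^2$, and by Vizing one edge-colors $G_T$ with at most $s^2+1$ colors, giving the additional $\le s^2$ layers of correcting $CZ$ gates. That is the source of the $s^2$ in the bound $2+s+s^2$; your ``fold basis changes into the gadget'' argument does not produce a correct circuit and should be replaced by this commutator-correction step.
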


The detailed proof of Lemma~\ref{lemma:qldpcmeasure} is a construction shown in Appendix~\ref{appendsc:lemmapf:qldpcmeasure}. With Lemma~\ref{lemma:qldpcmeasure}, we prepare $\ket{\psi}$ using ancillas and measurement feedback with the following procedure.
\begin{enumerate}
\item Prepare the $n$-qubit state $\ket{\phi}$ with a shallow circuit.
\item Measure each stabilizer generator in $\mathcal{S}_{\psi}^1$ with an independent ancillary qubit and obtain the syndrome. This step can be done in a depth $O(s^2) = O(1)$.\label{item:measure}
\item Correct the state with local Pauli operations according to the error syndrome.
\end{enumerate}

Note that initially, the state $\ket{\phi}$ is stabilized by the stabilizers in $\mathcal{S}_{\psi}^2$. Since all stabilizers in $\mathcal{S}_{\psi}^1$ commute with those in $\mathcal{S}_{\psi}^2$, after measurement, the state is still stabilized by the stabilizers in $\mathcal{S}_{\psi}^2$. Denote the measurement outcome of stabilizer $S_i$ as $s_i$ where $s_i\in \{\pm 1\}$, the stabilizer generators of the state after step~\ref{item:measure} are $\{s_iS_i| S_i\in \mathcal{S}_{\psi}^1 \}\cup \mathcal{S}_{\psi}^2$. We further denote $\mathcal{S}_{\psi}^+ = \{S_i| S_i\in \mathcal{S}_{\psi}^1, s_i=1 \}\cup \mathcal{S}_{\psi}^2$ and $\mathcal{S}_{\psi}^- = \{S_i| S_i\in \mathcal{S}_{\psi}^1, s_i=-1\}$. The local Pauli correction operation $P_c$ satisfies the following conditions,
\begin{align}
\forall S^+\in \mathcal{S}_{\psi}^+, [P_c, S^+] &= 0;\\
\forall S^-\in \mathcal{S}_{\psi}^-, \{P_c, S^-\} &= 0.
\end{align}
Note that $\mathcal{S}_{\psi}^+$ and $\mathcal{S}_{\psi}^-$ are both known sets of $n$-qubit Pauli operators. Finding a Pauli operator satisfying the above conditions is equivalent to solving a binary linear equation and can be done with a classical algorithm of complexity $O(n^2)$. This operator is unique in the multiplication of stabilizer generators. We provide more discussions in Appendix~\ref{appendsc:decode}.

After applying $P_c$ to the measured state after step~\ref{item:measure}, the final state will be stabilized by all stabilizers in $\mathcal{S}_{\psi}^+\cup \mathcal{S}_{\psi}^- = \mathcal{S}_{\psi}$. Equivalently, the final state is in the common eigenspace of $\mathcal{S}_{\psi}$ with all eigenvalues $+1$ and hence equal to $\ket{\psi}$. Since all three steps can be done at a constant depth, the whole circuit is shallow. Note that the number of ancillas equals $\abs{\mathcal{S}_{\psi}^1}\leq n$. Thus, the total number of qubits of the adaptive circuit $m = \abs{\mathcal{S}_{\psi}^1} + n$ satisfies $m \leq 2n$. Two special cases are when $m = n$ and $m = 2n$. In the former case, the state $\ket{\phi}$ is just $\ket{\psi}$ since $\mathcal{S}_{\psi}^2$ determines a unique state when $\abs{\mathcal{S}_{\psi}^2} = n$. In the latter case, one just prepares $\ket{0^n}$ as $\ket{\phi}$ and measure all stabilizers in $\mathcal{S}_{\psi}^1$ to get target state $\ket{\psi}$.

It is worth noting that for any QLDPC stabilizer code $\mathcal{C}$, there exists a logical state $\ket{\psi}_L\in \mathcal{C}$ satisfying the two conditions of shallow adaptive circuit preparation. That is, one part of stabilizer generators has constant weight, and the other part of stabilizer generators stabilizes a trivial state.

The QLDPC code checks form the first set $\mathcal{S}_{\psi}^1$, with each check having weight bounded by the sparsity $s$. The second set, $\mathcal{S}_{\psi}^2$, can be chosen as a subset of either $\{\id, X\}^{\otimes n}$ or $\{\id, Z\}^{\otimes n}$. For example, taking $\{\id, X\}^{\otimes n}$, it suffices to show the existence of $k$ independent Pauli operators in this set that commute with, yet are independent of, the elements in $\mathcal{S}_{\psi}^1$. We prove such a construction always exists in Appendix~\ref{appendsc:xtypelogical}. The trivial state $\ket{\phi}$ can then be taken as $\ket{+}^{\otimes n}$.

% When the code is a good QLDPC stabilizer code, the code distance is proportional to the qubit number $n$, and the stabilizer weight of the prepared logical state is no smaller than the code distance and is $\Theta(n)$. The number of ancillary qubits is also $\Theta(n)$ since the code rate is a constant. In this case, the number of ancillary qubits saturates the bound of Corollary~\ref{coro:ancillalow}.

More generally, if the measurement is not restricted to the last step, the initial state $\ket{\phi}$ is not restricted to being prepared by a shallow non-adaptive circuit. We can first prepare $\ket{\phi}$ with a shallow adaptive circuit and then prepare $\ket{\psi}$ with another shallow adaptive circuit. This step can be nested for constant times, and the total circuit maintains a shallow adaptive circuit.

\section{Discussion}\label{sc:discussion}
In this work, we introduce two key quantities, state weight and anti-shallowness, and show their close relationships with correlation functions. We use these quantities to provide an approximate circuit complexity lower bound in an adaptive setting and identify quantum phases of given quantum states. Important examples include GHZ states, W states, hypergraph states, and good QLDPC code states. These quantities are also phase indicators that distinguish quantum states from different phases.

Beyond theoretical analysis, our explicit preparation schemes for QLDPC code states hold promise for practical implementation and fault-tolerant realization~\cite{Bergamaschi2025Fault}. In particular, our method for preparing certain stabilizer states within constant depth using $O(n)$ ancillary qubits offers a viable path toward efficient and large-scale quantum computing, where minimizing circuit depth is essential for mitigating noise and decoherence. In practice, one can also flexibly balance ancillary qubit usage and circuit depth based on the capabilities of the underlying hardware.

Several open questions remain for future research. First, it is interesting to extend and improve our results about approximate circuit complexity~\cite{aaronson2016complexityquantumstatestransformations,Brandao2021Complexity,du2024embeddedcomplexityquantumcircuit} in adaptive circuits, the minimum resources of a circuit needed to prepare a state within a small error of the target. In this study, we have already shown a resource lower bound for permutation-invariant states and QLDPC code states, and the tolerable approximate error is different for different states. For GHZ states, the lower bound still holds for a constant infidelity. In contrast, for W states, the lower bound only holds for an $O(\frac{1}{n^{\alpha}})$-scaling infidelity where $\alpha$ is a constant. It is interesting to improve this approximation error to a better scaling, like a constant. The same questions remain for many other states. Studying trade-off relations and preparation protocols in an approximate regime~\cite{piroli2024approximatingmanybodyquantumstates} with different approximation errors, and how they behave under realistic noise, is essential for bridging theory with experiment.

In addition, the existing trade-off bounds may not be tight. For instance, in Theorem~\ref{theo:adalowerboundgeneric}, the lightcone size with measurements scales as $K^{2L-1}$, compared to $K^L$ without measurements. Whether these bounds can be unified or tightened remains an open question. Although we have identified specific examples, such as GHZ states and QLDPC code states, that saturate the trade-offs, it is unclear whether these bounds hold universally or if there exist states that do not saturate them. Investigating such cases could yield deeper insights into the study of adaptive quantum circuits.

Finally, exploring more indicators beyond state weight and anti-shallowness to indicate quantum phases is a promising direction. Such refinement is particularly important for the regime of states with weight $\Theta(n)$ and shallow-adaptive-circuit states that use $\Omega(n)$ ancillas. This regime contains a rich variety of states, including GHZ states and Dicke states, and exhibits diverse entanglement structures. However, different phases cannot be distinguished using state weight in this regime, and anti-shallowness does not suffice for complete phase classification. Moreover, any classification method based solely on the size of the lightcone does not distinguish the phases within this regime, since the lightcone criteria cannot distinguish between states that require $\Theta(n)$ ancillas and those that require $\omega(n)$ ancillas. To resolve this, new tools beyond anti-shallowness are needed to capture the refined phase structure within this class of states.

\begin{acknowledgements}
We express our gratitude to Yuxuan Yan, You Zhou, Xingjian Zhang, and Nathanan Tantivasadakarn for the insightful discussions. This work was supported by the National Natural Science Foundation of China Grant No.~12174216 and the Innovation Program for Quantum Science and Technology Grant No.~2021ZD0300804 and No.~2021ZD0300702.
\end{acknowledgements}

\newpage

\appendix

\section{Equivalent definitions of generalized stabilizers}\label{appendsc:genestabilizer}
In this appendix, we prove the following claim: for the group $\mathbb{T}_{\psi} = \langle T_1, T_2, \ldots, T_n \rangle$, any element $T_j \in \mathbb{T}_{\psi} \backslash \mathbb{I}$ is a generalized stabilizer if and only if there exists a unitary $U$ such that $\forall i\in [n]$, $T_i = U Z_i U^\dagger$, where $Z_i$ is the Pauli-$Z$ operator on qubit $i$.

First, suppose $T_i=UZ_iU^{\dagger}$ stabilizes the state, $T_i\ket{\psi}=\ket{\psi}$ for all $i$. One can check that $T_i=T_i^{\dagger}=T_i^{-1}$ and $\tr{T_i}=\tr{Z_i}=0$ hold for all $i$, and these $T_i$ mutually commute with each other. In addition, any element generated by $T_i$, $T(\vec{t})=\prod_j T_i^{t_i}=U\left(\prod_i Z_i^{t_i}\right)U^{\dagger}$, also stabilizes the state $\ket{\psi}$ and satisfies those conditions.

Next, suppose the operators $T_i$ satisfy $T_i = T_i^\dagger = T_i^{-1}$ and $\tr{T_i} = 0$, and each stabilizes the state $\ket{\psi}$. Since the $T_i$ are Hermitian and mutually commute, there exists a unitary $U'$ that simultaneously diagonalizes them: $T'_i = U' T_i U'^\dagger$ are diagonal. Given $T_i^2 = I$ and $\tr{T_i} = 0$, each $T'_i$ has eigenvalues $\pm 1$ in equal numbers, i.e., they are balanced.

We now show how to iteratively transform $T'_i$ into the Pauli $Z_i$ operators. Since $T'_1$ and $Z_1$ share the same spectrum, there exists a permutation matrix $P_1$ such that
\begin{equation}
T'_1 = P_1^\dagger Z_1 P_1.
\end{equation}
For $T'_2$, decompose it as
\begin{equation}
T'_2 = \frac{I + T'_1}{2} T'_2 + \frac{I - T'_1}{2} T'_2,
\end{equation}
acting on the $\pm 1$ eigenspaces of $T'_1$. Using $\tr{T'_1 T'_2} = \tr{T'_2} = 0$, we deduce that both components are traceless and balanced. After conjugating by $P_1$, we write
\begin{equation}
P_1 T'_2 P_1^\dagger = \frac{I + Z_1}{2} P_1 T'_2 P_1^\dagger + \frac{I - Z_1}{2} P_1 T'_2 P_1^\dagger \equiv T'_{2,+} + T'_{2,-},
\end{equation}
where $T'_{2,+}$ and $T'_{2,-}$ act on the $\pm 1$ eigenspaces of $Z_1$ and remain balanced with $\pm 1$ eigenvalues.

Similarly, decompose $Z_2$ as
\begin{equation}
Z_2 = \frac{I + Z_1}{2} Z_2 + \frac{I - Z_1}{2} Z_2,
\end{equation}
with each part also balanced. We can then find permutation matrices $P_{2,+}$ and $P_{2,-}$, acting nontrivially only on the $+1$ and $-1$ eigenspaces of $Z_1$, respectively, such that
\begin{equation}
P_{2,+} T'_{2,+} P_{2,+}^\dagger = \frac{I + Z_1}{2} Z_2,
\end{equation}
\begin{equation}
P_{2,-} T'_{2,-} P_{2,-}^\dagger = \frac{I - Z_1}{2} Z_2.
\end{equation}
Letting $P_2 = P_{2,+} P_{2,-}$, where $P_{2,+}$ and $P_{2,-}$ commute since they act on orthogonal subspaces, we find
\begin{equation}
P_2 P_1 T'_2 P_1^\dagger P_2^\dagger = Z_2,\quad P_2 P_1 T'_1 P_1^\dagger P_2^\dagger = Z_1.
\end{equation}
Thus, $P_2 P_1$ transforms $\{T'_1, T'_2\}$ to $\{Z_1, Z_2\}$. Repeating this procedure, we construct permutation matrices $P_3, P_4, \ldots$ such that the product $P_n \cdots P_1$ maps $\{T'_1, T'_2, \ldots, T'_n\}$ to $\{Z_1, Z_2, \ldots, Z_n\}$.

% The eigenvalue of any non-identity matrix in the group $\langle T'_1,T'_2,\cdots,T'_n\rangle$ should also be $\pm1$-valued and balanced. Based on the structure of the group $\langle T'_1,T'_2,\cdots,T'_n\rangle$, one can find that it is isomorphic to the group $\langle Z_1,Z_2,\cdots,Z_n\rangle$, i.e., they are both unitary representations of the same group. Therefore, we can find a unitary matrix $M$ such that $T'_i=MZ_iM^{\dagger}$. Combined with the definition of $T'_i$, one can derive $T_i=(U'M)Z_i(U'M)^{\dagger}$ with unitary transformation $U'M$.

\section{Proof of trade-off theorems related to state weight and correlation range}\label{appendsc:weight}
\subsection{Non-adaptive strategy}\label{ssc:nonadt}
We begin by establishing a lower bound on the circuit depth required to prepare any quantum state using non-adaptive circuits, based on the weight of the state. The result is summarized below.

\begin{proposition}\label{prop:nonadaptivegeneral}
Preparing an $n$-qubit state $\ket{\psi}$ with $K$-bounded fan-in gates in a non-adaptive circuit necessitates circuit depth
\begin{equation}\label{eq:boundnonadaptive}
D\geq \frac{\log{\wt(\ket{\psi})}}{\log{K}}.
\end{equation}
\end{proposition}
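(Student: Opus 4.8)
The plan is to exhibit a concrete set of generalized stabilizer generators for $\ket{\psi}$ built directly from the preparation circuit, bound the weight of its heaviest element by the forward lightcone, and then invoke the definition of state weight to conclude. This is essentially the content already sketched in Eq.~\eqref{eq:prooftheo1}; the proposition merely solves that inequality for $D$, so the task is to lay out the steps cleanly and justify the one nontrivial appeal.

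First I would write $\ket{\psi} = U\ket{0^n}$ with $U = U_D\cdots U_1$ a depth-$D$ circuit of $K$-bounded fan-in gates. Since $Z_i\ket{0^n} = \ket{0^n}$ for every $i$, conjugation gives $(UZ_iU^\dagger)\ket{\psi} = UZ_iU^\dagger U\ket{0^n} = UZ_i\ket{0^n} = \ket{\psi}$, so each $UZ_iU^\dagger$ stabilizes $\ket{\psi}$. By the unitary characterization of generalized stabilizers established in Appendix~\ref{appendsc:genestabilizer}, the set $\mathcal{T} = \{UZ_iU^\dagger\}_{i=1}^n$ is a valid collection of generalized stabilizer generators for $\ket{\psi}$. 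Consequently, since state weight is defined as the minimum over all admissible generating sets of the largest generator weight, the particular set $\mathcal{T}$ furnishes the upper bound $\wt(\ket{\psi}) \leq \max_i \wt(UZ_iU^\dagger)$.

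Next I would bound the right-hand side by iterating Lemma~\ref{lemma:operatorweightgrowth}. Each layer $U_j$ consists of $K$-bounded fan-in gates, so a single conjugation multiplies operator weight by at most $K$; peeling off the $D$ layers one at a time and using $\wt(Z_i) = 1$ yields $\wt(UZ_iU^\dagger) \leq K^D$ for every $i$. Combining this with the previous inequality gives $\wt(\ket{\psi}) \leq K^D$, and taking base-$K$ logarithms produces $D \geq \log\wt(\ket{\psi})/\log K$, as claimed.

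The argument is largely routine once these ingredients are assembled, and the only point requiring care is the direction of the state-weight inequality: because $\wt(\ket{\psi})$ minimizes the heaviest generator weight over all valid generating sets, the circuit-induced set $\mathcal{T}$ supplies an \emph{upper} bound on $\wt(\ket{\psi})$, which is exactly what a depth lower bound needs. The main conceptual obstacle—confirming that $\{UZ_iU^\dagger\}$ genuinely forms a valid generalized stabilizer group, with all nonidentity products traceless and involutory, rather than merely a commuting stabilizing family—is already discharged by the equivalence proved in Appendix~\ref{appendsc:genestabilizer}, so here it can simply be cited rather than re-derived.
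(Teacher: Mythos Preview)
Your proof is correct and follows exactly the paper's own argument: build the generalized stabilizer generators $\{UZ_iU^\dagger\}$ from the preparation unitary, bound each generator's weight by $K^D$ via iterated application of Lemma~\ref{lemma:operatorweightgrowth}, and use the minimization in the definition of state weight to obtain $\wt(\ket{\psi})\leq K^D$. This is precisely the derivation the paper gives around Eq.~\eqref{eq:prooftheo1}, with the validity of $\{UZ_iU^\dagger\}$ as a generating set justified by the equivalence in Appendix~\ref{appendsc:genestabilizer}.
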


Proposition~\ref{prop:nonadaptivegeneral} has been proved in the main text. For stabilizer states, we can derive a tighter lower bound on circuit depth using the stabilizer weight. However, since the preparation circuit $U$ is not necessarily Clifford, the proof of Proposition~\ref{prop:nonadaptivegeneral} or Lemma~\ref{lemma:operatorweightgrowth} does not directly apply. In particular, $\ket{\psi} = U\ket{\mathbf{0}}$ does not guarantee that $UZ_iU^\dagger$ corresponds to a stabilizer generator of $\ket{\psi}$, so a different proof is needed.

\begin{proposition}\label{prop:nonadaptivegeneralClifford}
Preparing $n$-qubit stabilizer state $\ket{\psi}$ with $K$-bounded fan-in gates in a non-adaptive circuit necessitates circuit depth
\begin{equation}\label{eq:boundnonadaptivestab}
D\geq \frac{\log{\wts(\ket{\psi})}}{\log{K}}.
\end{equation}
\end{proposition}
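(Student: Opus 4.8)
The plan is to prove the equivalent statement $K^D \ge \wts(\ket{\psi})$ by exhibiting a \emph{Pauli} stabilizer generating set of $\ket{\psi}$ in which every generator has weight at most $K^D$. Since $\wts$ is by definition the smallest achievable largest-weight over all generating sets (the lexicographically minimal weight vector first minimizes its top entry), any such set certifies $\wts(\ket{\psi})\le K^D$. Writing $U=U_D\cdots U_1$, I would attach to each input qubit $i$ its forward lightcone $F_i\subseteq[n]$, the set of output qubits reachable from $i$; iterating Lemma~\ref{lemma:operatorweightgrowth} layer by layer gives $|F_i|\le K^D$. The task then reduces to a \emph{generation claim}: the lightcone-local subgroups $\mathbb{S}_{F_i}:=\{P\in\mathbb{S}_\psi:\mathrm{supp}(P)\subseteq F_i\}$ together generate all of $\mathbb{S}_\psi$. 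Granting this, greedily extracting $n$ independent elements from $\bigcup_i \mathbb{S}_{F_i}$ yields generators of weight $\le \max_i|F_i|\le K^D$, which finishes the proof.

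To analyze these local subgroups without conjugating Paulis through the non-Clifford $U$, I would argue entirely at the level of reduced density matrices, using the stabilizer identity $\rho_A \propto \sum_{P\in\mathbb{S}_A}P$ together with the entropy count $\mathrm{rank}(\mathbb{S}_A)=|A|-S(\rho_A)$, valid for any region $A$ of a stabilizer state. The circuit enters only through causality: two output regions with disjoint backward lightcones are unentangled so their joint reduced state factorizes, and any output region lying outside $F_i$ is insensitive to input $i$. I would feed these causal facts into the entropy count to track how many independent stabilizers can be localized inside a growing union of lightcones, running the argument by induction — either on the depth $D$ (peeling off the last layer) or on the number of input qubits (peeling an input with an extremal forward lightcone) — to conclude that the lightcone-local stabilizers exhaust the rank-$n$ group $\mathbb{S}_\psi$.

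The main obstacle is exactly this generation claim, and it is the reason the clean argument for Proposition~\ref{prop:nonadaptivegeneral} fails here. For non-Clifford $U$ the operators $UZ_iU^\dagger$ are low-weight generalized stabilizers but are not Pauli, and the Pauli group $\mathbb{S}_\psi$ is in general a \emph{different} maximal commuting group of involutions fixing $\ket{\psi}$, so the weight bound on $\langle UZ_iU^\dagger\rangle$ cannot be transported to $\mathbb{S}_\psi$; equivalently, the intermediate states $U_t\cdots U_1\ket{\mathbf{0}}$ need not be stabilizer states, so $\mathbb{S}_\psi$ cannot be propagated layer by layer. The delicate point is therefore to show, using only the causal structure of $U$ and the reduced-density-matrix characterization, that overlapping lightcones never force an irreducibly non-local stabilizer generator. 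I expect the cleaning-lemma viewpoint — deforming any stabilizer to avoid a correctable region and applying this iteratively along the lightcones — to be the technically decisive but subtle step, and I would sanity-check it against the GHZ benchmark, where the unique high-weight generator $X^{\otimes n}$ must and does reside inside the single full-size lightcone of the root qubit, saturating $\wts=n\le K^D$.
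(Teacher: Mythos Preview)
Your direct approach via lightcone-local Pauli generators is genuinely different from the paper's, and the gap you yourself flag is real and not closed by what you sketch. The generation claim --- that $\bigcup_i \mathbb{S}_{F_i}$ spans $\mathbb{S}_\psi$ --- is at least as strong as the proposition itself (it asks for generators supported in the specific regions $F_i$, not merely of weight $\le K^D$), so it cannot be dispatched as a lemma on the way. Your depth induction fails for exactly the reason you state (intermediate states need not be stabilizer states); the input-peeling and cleaning-lemma routes are only gestured at, and it is not clear how the entropy identity $\mathrm{rank}(\mathbb{S}_A)=|A|-S(\rho_A)$ applied to individual lightcones $A=F_i$ can control the rank of the group generated by the \emph{overlapping} family $\{\mathbb{S}_{F_i}\}_i$. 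As written, the proposal is a plausible program but not a proof.

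The paper sidesteps the difficulty with an indirect argument. It invokes a marginal-determination lemma (Lemma~\ref{le:marginal}): a pure state of non-adaptive circuit complexity $D$ is distinguished from any other state already on some $K^D$-body marginal. It then builds a comparison stabilizer state $\ket{\psi_-}$ by negating the top minimal generator, so that $\ket{\psi_-}$ has generators $\{-S_1^*,S_2^*,\dots,S_n^*\}$. Expanding $\ketbra{\psi}-\ketbra{\psi_-}$ over $\mathbb{S}_\psi$, every surviving Pauli is of the form $S_1^*\prod_{i\ge 2}(S_i^*)^{j_i}$ and hence, by minimality of the weight vector at $S_1^*$, has weight $\ge\wts(\ket{\psi})$; thus $\psi$ and $\psi_-$ share all $k$-body marginals for $k<\wts(\ket{\psi})$ while being globally distinct. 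Lemma~\ref{le:marginal} then forces $K^D\ge\wts(\ket{\psi})$. No Pauli generator is ever located inside a lightcone; the circuit enters only through the single number $K^D$.
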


\begin{proof}
Our proof utilizes the following lemma introduced in~\cite{onorati2023efficient,yu2023learning}.

\begin{lemma}\label{le:marginal}\cite{yu2023learning}
Suppose $n$-qubit quantum state $\ket{\psi}$ has non-adaptive circuit complexity $D$. For any $n$-qubit state $\rho$, either $\norm{\psi-\rho}_1<\epsilon$ or $\norm{\psi_s-\rho_s}>\frac{\epsilon^2}{4n}$ for some subset $s\subset[n]$ with $|s|=K^D$, where $\psi=\ketbra{\psi}$ and $\psi_s$ and $\rho_s$ denote the reduced density matrices on the qubits whose labels are in $s$.
\end{lemma}

To apply the lemma and identify a lower bound on the circuit complexity of $\ket{\psi}$, we construct a state that is locally indistinguishable from $\ket{\psi}$ but globally distinct. The construction proceeds as follows.

Given a stabilizer state $\ket{\psi}$ with minimal stabilizer generators $\{S_1^*, S_2^*, \cdots, S_n^*\}$, we define a new stabilizer state $\ket{\psi_-}$ whose stabilizer generators are $\{-S_1^*, S_2^*, \cdots, S_n^*\}$. Suppose the non-adaptive circuit complexity of $\ket{\psi}$ is $D$. Then, as we show below, any $k$-local observable cannot distinguish between $\ket{\psi}$ and $\ket{\psi_-}$ as long as $k < \wts(\ket{\psi})$.
\begin{equation}
\begin{split}
\tr_{\Bar{k}}(\ketbra{\psi} - \ketbra{\psi_-}) &= \tr_{\Bar{k}}\left(\frac{1}{2^n}\prod_{i=1}^n(\mathbb{I}+S_i^*) - \frac{1}{2^n}(\mathbb{I}-S_1^*)\prod_{i=2}^{n}(\mathbb{I}+S_i^*)\right)\\
&= \frac{1}{2^{n-1}}\tr_{\Bar{k}}\left(S_1^*\prod_{i=2}^{n}(\mathbb{I}+S_i^*)\right)\\
&= \frac{1}{2^{n-1}}\sum_{j_2, j_3, \cdots, j_{n}=0}^1\tr_{\Bar{k}}\left(S_1^* \prod_{i=2}^{n} S_i^{*j_i}\right)\\
&= 0.
\end{split}
\end{equation}
Here, in the fourth line, we utilize the fact that for any list $\{j_2, j_3, \cdots, j_{n}\}$, $\wts(S_1^*\prod_{i=2}^{n} S_i^{*j_i}) \geq \wts(\ket{\psi}) = \wts(S^*_1) > k$. That is, all the $S_1^*\prod_{i=2}^{n} S_i^{*j_i}$ has support in any $\Bar{k}$, and therefore the trace will be zero. The fact can be proved by contradiction. Suppose there exists an operator $\Tilde{S_1^*}\equiv S_1^*\prod_{i=2}^{n} S_i^{*j_i}$ such that $\wts(\Tilde{S_1^*}) < \wts(S^*_1)$.
Since $\Tilde{S_1^*}$ is still a valid stabilizer generator independent of the others, we can construct a new weight vector $\Tilde{\mathcal{S}^*}_{\psi}$ by replacing $S^*_1$ with $\Tilde{S_1^*}$. It then follows that $\wts(\Tilde{\mathcal{S}^*}_{\psi})<\wts(\mathcal{S}^*_{\psi})$, which is a contradiction to Eq.~\eqref{eq:minwtvec}.

Notice that $\norm{\psi_{-}-\psi}_1>0$ since they are different states. Let $\epsilon=\norm{\psi_{-}-\psi}_1$, and the requirement in Lemma~\ref{le:marginal} is not satisfied when $|s|<\wts(\ket{\psi})$. This implies that $K^D\geq\wts(\ket{\psi})$ and $D\geq\frac{\log{\wts(\ket{\psi})}}{\log{K}}$.
\end{proof}

\subsection{Proof of Theorem~\ref{theo:adalowerboundgeneric}}\label{appendsc:thmpf:adalowerboundgeneric}
Recall Eq.~\eqref{eq:adapt_state}, any state $\ket{\psi}$ prepared by the depth-$L$ adaptive circuit has the following expression.
\begin{equation}
\ket{\psi} \propto (\id_n\otimes \bra{\mathbf{s}}) U_{L'}\cdots U_2U_1\ket{0^m},
\end{equation}
where $\id_n$ denotes the identity operator acting on the first $n$ qubits; $\mathbf{s}=s_{n+1}s_{n+2}\cdots s_{m}$ denotes all measurement outcomes; $U_{L'},\cdots, U_2, U_1$ are all single layers of $K$-bounded fan-in quantum gates, which depend on the measurement outcomes $s_1, s_2, \cdots, s_{m-n}$. Note that $L'\leq L$. In the following, we use the following expression of $\ket{\psi}$:
\begin{equation}\label{eq:adapt_state_short}
\ket{\psi} = c(\id_n\otimes \bra{\mathbf{s}})\left(\prod_{i=1}^L U_i\right)\ket{0^m},
\end{equation}
where $c$ is a normalization factor and $\prod_{i=1}^L U_i$ is short for $U_L\cdots U_2U_1$. Note that we can always set $U_{L'+1}, U_{L'+2}, \cdots, U_L$ as identity operations.

Then, we consider the backward lightcone from the measured qubits to the first layer. Formally, for each layer of unitary $U_i$, we recursively decompose the qubits into two disjoint parts $[m]=Q_i^{(1)}+Q_i^{(2)}$ such that $U_i$ can be decomposed as $U_i = U_i^{(1)}\otimes U_i^{(2)}$, where $U_i^{(1)}$ acts only on $Q_i^{(1)}$, and $U_i^{(2)}$ acts only on $Q_i^{(2)}$. The decomposition is recursively defined with the constraints of $Q_i^{(1)}\cap Q_{i+1}^{(2)}=\emptyset$, and making $Q_i^{(2)}$ as small as possible. The recursion starts from the last layer and gradually constructs layer $i$ from layer $i+1$. The initialization is given by $Q_{L+1}^{(2)}=\{n+1,\cdots,m\}$. An example of such decomposition is shown in Fig.~\ref{fig:measurelightcone}. Notice that we also use $U_i^{(1)}$ ($U_i^{(2)}$) to denote $U_i^{(1)}\otimes\id_i^{(2)}$ ($\id_i^{(1)}\otimes U_i^{(2)}$) for simplicity. Since $Q_{i+1}^{(2)}$ and $Q_i^{(1)}$ are disjoint, the corresponding $U_{i+1}^{(2)}$ and $U_i^{(1)}$ commute with each other. Furthermore, $U_{i+1}^{(2)}$ should also commute with all $U_j^{(1)}$ with $j<i$ since $Q_j^{(1)}$ should be a subset of $Q_i^{(1)}$ and also be disjoint to $Q_{i+1}^{(2)}$. Therefore,
\begin{equation}\label{eq:decomposeU1U2}
\begin{split}
\ket{\psi} &= c(\id_n\otimes \bra{\mathbf{s}}) \prod_{i=1}^L (U_i^{(1)}\cdot U_i^{(2)}) \ket{0^m}\\
&= c\left(\prod_{i=1}^L U_i^{(1)}\right) (\id_n\otimes \bra{\mathbf{s}}) \left(\prod_{i=1}^L U_i^{(2)}\right) \ket{0^m}.
\end{split}
\end{equation}
Note that $(\id_n\otimes \bra{\mathbf{s}}) \left(\prod_{i=1}^L U_i^{(2)}\right)$ only acts on the qubits in $\bigcup_{i=1}^{L+1}Q_i^{(2)}=Q_1^{(2)}$. To estimate the size of $Q_1^{(2)}$, we come back to analyze the conditions of the decomposition $[m]=Q_i^{(1)}+Q_i^{(2)}$. To give the smallest $Q_i^{(2)}$, the support of each nontrivial unitary in $U_i^{(2)}$ should contain at least one qubit in $Q_{i+1}^{(2)}$, and the support of any unitary in $U_i^{(1)}$ does not. It shows that $|Q_{i+1}^{(2)}|\leq|Q_i^{(2)}|\leq K|Q_{i+1}^{(2)}|$, and $|Q_1^{(2)}|\leq K^L|Q_{L+1}^{(2)}|=K^L(m-n)$.

\begin{figure}[!htbp]
 \centering
  \begin{quantikz}[row sep={0.6cm,between origins}, column sep=0.5cm]
  \lstick{$\ket{0}$} & \gate[wires=2]{} & \qw & \gate[wires=2]{} & \qw & \qw\\
  \lstick{$\ket{0}$} &  & \gate[wires=2]{} &  & \gate[wires=2]{} & \qw  \\
  \lstick{$\ket{0}$} & \gate[wires=2]{} &  & \gate[wires=2]{} &  & \qw  \\
  \lstick{$\ket{0}$} &  & \gate[wires=2]{} &  & \gate[wires=2]{} & \qw  \\
  \lstick{$\ket{0}$} & \gate[style={fill=gray!30},wires=2]{} &  & \gate[wires=2]{} &  & \qw  \\
  \lstick{$\ket{0}$} &  & \gate[style={fill=gray!30},wires=2]{} &  & \gate[wires=2]{} & \qw  \\
  \lstick{$\ket{0}$} & \gate[style={fill=gray!30},wires=2]{} &  & \gate[style={fill=gray!30},wires=2]{} &  & \qw  \\
  \lstick{$\ket{0}$} &  & \gate[style={fill=gray!30},wires=2]{} &  & \gate[style={fill=gray!30},wires=2]{} & \qw  \\
  \lstick{$\ket{0}$} & \gate[style={fill=gray!30},wires=2]{} &  & \gate[style={fill=gray!30},wires=2]{} &  & \meter{} \\
  \lstick{$\ket{0}$} &  & \gate[style={fill=gray!30},wires=2]{} &  & \gate[style={fill=gray!30},wires=2]{} & \meter{}  \\
  \lstick{$\ket{0}$} & \gate[style={fill=gray!30},wires=2]{} &  & \gate[style={fill=gray!30},wires=2]{} &  & \qw  \\
  \lstick{$\ket{0}$} &  & \gate[style={fill=gray!30},wires=2]{} &  & \gate[wires=2]{} & \qw  \\
  \lstick{$\ket{0}$} & \gate[style={fill=gray!30},wires=2]{} &  & \gate[wires=2]{} &  & \qw  \\
  \lstick{$\ket{0}$} &  & \gate[wires=2]{} &  & \gate[wires=2]{} & \qw  \\
  \lstick{$\ket{0}$} & \gate[wires=2]{} &  & \gate[wires=2]{} &  & \qw  \\
  \lstick{$\ket{0}$} &  & \gate[wires=2]{} &  & \gate[wires=2]{} & \qw  \\
  \lstick{$\ket{0}$} & \gate[wires=2]{} &  & \gate[wires=2]{} &  & \qw  \\
  \lstick{$\ket{0}$} &  & \qw &  & \qw & \qw  \\
  \end{quantikz}
  \begin{tikzpicture}[overlay,x=4mm,y=4mm]
    \draw[red, dashed, very thick]
    (-10.1,-6.95)
    -- ++(0,15.15)
    -- ++(2,0) -- ++(0,-1.48)
    -- ++(2,0) -- ++(0,-1.48)
    -- ++(2,0) -- ++(0,-1.48)
    -- ++(2,0) -- ++(0,-1.48)
    -- ++(2.4,0)
    -- ++(0,-3.31)
    -- ++(-2.4,0) -- ++(0,-1.48)
    -- ++(-2,0) -- ++(0,-1.48)
    -- ++(-2,0) -- ++(0,-1.48)
    -- ++(-2,0) -- ++(0,-1.48)
    -- cycle;
  \end{tikzpicture}
  \hspace{5mm}
  $\Longrightarrow$
  \hspace{5mm}
  \begin{quantikz}[row sep={0.6cm,between origins}, column sep=0.5cm]
  \lstick{$\ket{0}$} & \gate[wires=2]{} & \qw & \gate[wires=2]{} & \qw & \qw\\
  \lstick{$\ket{0}$} &  & \gate[wires=2]{} &  & \gate[wires=2]{} & \qw  \\
  \lstick{$\ket{0}$} & \gate[wires=2]{} &  & \gate[wires=2]{} &  & \qw  \\
  \lstick{$\ket{0}$} &  & \gate[wires=2]{} &  & \gate[wires=2]{} & \qw  \\
  \lstick{$\ket{0}$} & \gate[style={fill=gray!30},wires=10]{} &  & \gate[wires=2]{} &  & \qw  \\
  \lstick{$\ket{0}$} &  & \qw  &  & \gate[wires=2]{} & \qw  \\
  \lstick{$\ket{0}$} &  & \qw  & \qw  &  & \qw  \\
  \lstick{$\ket{0}$} &  & \qw  & \qw  & \qw  & \qw  \\
  \lstick{$\ket{0}$} &  \\
  \lstick{$\ket{0}$} &  \\
  \lstick{$\ket{0}$} &  & \qw  & \qw  & \qw  & \qw  \\
  \lstick{$\ket{0}$} &  & \qw  & \qw  & \gate[wires=2]{} & \qw  \\
  \lstick{$\ket{0}$} &  & \qw  & \gate[wires=2]{} &  & \qw  \\
  \lstick{$\ket{0}$} &  & \gate[wires=2]{} &  & \gate[wires=2]{} & \qw  \\
  \lstick{$\ket{0}$} & \gate[wires=2]{} &  & \gate[wires=2]{} &  & \qw  \\
  \lstick{$\ket{0}$} &  & \gate[wires=2]{} &  & \gate[wires=2]{} & \qw  \\
  \lstick{$\ket{0}$} & \gate[wires=2]{} &  & \gate[wires=2]{} &  & \qw  \\
  \lstick{$\ket{0}$} &  & \qw &  & \qw & \qw  \\
  \end{quantikz}
  \begin{tikzpicture}[overlay,x=4mm,y=4mm]
    \draw[red, dashed, very thick] (-10.5,13.85) -- ++(0,-2.55) -- ++(3.6,0) -- ++(0,2.55) -- cycle;
    \draw[red, dashed, very thick] (-10.5,10.85) -- ++(0,-2.55) -- ++(3.6,0) -- ++(0,2.55) -- cycle;
    \draw[red, dashed, very thick] (-10.5,7.85) -- ++(0,-14.55) -- ++(3.6,0) -- ++(0,14.55) -- cycle;
    \draw[red, dashed, very thick] (-10.5,-7.15) -- ++(0,-2.55) -- ++(3.6,0) -- ++(0,2.55) -- cycle;
    \draw[red, dashed, very thick] (-10.5,-10.15) -- ++(0,-2.55) -- ++(3.6,0) -- ++(0,2.55) -- cycle;
    % \draw[black, thick] (-7.15,10.8) -- ++(0,-2.5) -- ++(-0.8,1.25) -- cycle;
    % \draw[black, thick] (-7.15,7.8) -- ++(0,-14.5) -- ++(-0.8,7.25) -- cycle;
    % \draw[black, thick] (-7.15,-7.2) -- ++(0,-2.5) -- ++(-0.8,1.25) -- cycle;
    % \draw[black, thick] (-7.15,-10.2) -- ++(0,-2.5) -- ++(-0.8,1.25) -- cycle;
    % \draw[black, dashed, thick] (-6.8,8.3) -- ++(6.3,4.7);
    % \draw[black, dashed, thick] (-6.8,-7.2) -- ++(6.3,-4.7);
  \end{tikzpicture}
  \hspace{5mm}
  $\Longrightarrow$
  \hspace{5mm}
  \begin{quantikz}[row sep={0.6cm,between origins}, column sep=0.5cm]
  & \qw & \gate[wires=2]{} & \qw & \qw\\
  & \gate[wires=2]{} &  & \gate[wires=2]{} & \qw  \\
  &  & \gate[wires=2]{} &  & \qw  \\
  & \gate[wires=2]{} &  & \gate[wires=2]{} & \qw  \\
  &  & \gate[wires=2]{} &  & \qw  \\
  & \qw  &  & \gate[wires=2]{} & \qw  \\
  & \qw  & \qw  &  & \qw  \\
  & \qw  & \qw  & \qw  & \qw  \\
  \\
  \\
  & \qw  & \qw  & \qw  & \qw  \\
  & \qw  & \qw  & \gate[wires=2]{} & \qw  \\
  & \qw  & \gate[wires=2]{} &  & \qw  \\
  & \gate[wires=2]{} &  & \gate[wires=2]{} & \qw  \\
  &  & \gate[wires=2]{} &  & \qw  \\
  & \gate[wires=2]{} &  & \gate[wires=2]{} & \qw  \\
  &  & \gate[wires=2]{} &  & \qw  \\
  & \qw &  & \qw & \qw  \\
  \end{quantikz}
  \begin{tikzpicture}[overlay,x=4mm,y=4mm]
    \draw[black, thick] (-7.15,13.8) -- ++(0,-2.5) -- ++(-0.8,1.25) -- cycle;
    \draw[black, thick] (-7.15,10.8) -- ++(0,-2.5) -- ++(-0.8,1.25) -- cycle;
    \draw[black, thick] (-7.15,7.8) -- ++(0,-14.5) -- ++(-0.8,7.25) -- cycle;
    \draw[black, thick] (-7.15,-7.2) -- ++(0,-2.5) -- ++(-0.8,1.25) -- cycle;
    \draw[black, thick] (-7.15,-10.2) -- ++(0,-2.5) -- ++(-0.8,1.25) -- cycle;
    \draw[black, dashed, thick] (-6.8,8.3) -- ++(6.3,4.7);
    \draw[black, dashed, thick] (-6.8,-7.2) -- ++(6.3,-4.7);
    \begin{scope}[on background layer]
    \fill[gray, fill opacity=0.3, draw=none] (-7.15,7.8) -- ++(0,-14.5) -- ++(-0.8,7.25) -- cycle;
    \fill[yellow, fill opacity=0.3, draw=none] (-6.8,8.3) -- (-6.8,-7.2) -- (-0.5,-11.9) -- (-0.5,13) -- cycle;
    \end{scope}
  \end{tikzpicture}
\caption{``Backward lightcone" from the measurements and ``forward lightcone" from the first layer. In the circuit on the left, the gates that constitute $U_i^{(1)}$ are shown as white blocks, whereas those that constitute \(U_i^{(2)}\) are gray blocks. Consequently, the entire collection of gates and measurements enclosed by the dashed red box can be replaced by a single composite gray operation in the first layer of the circuit in the middle, denoted as $E$ in the main text. The state after that first layer is in a tensor-product form. The forward lightcone from the largest non-decomposable subsystem is shown in the circuit on the right.}
\label{fig:measurelightcone}
\end{figure}
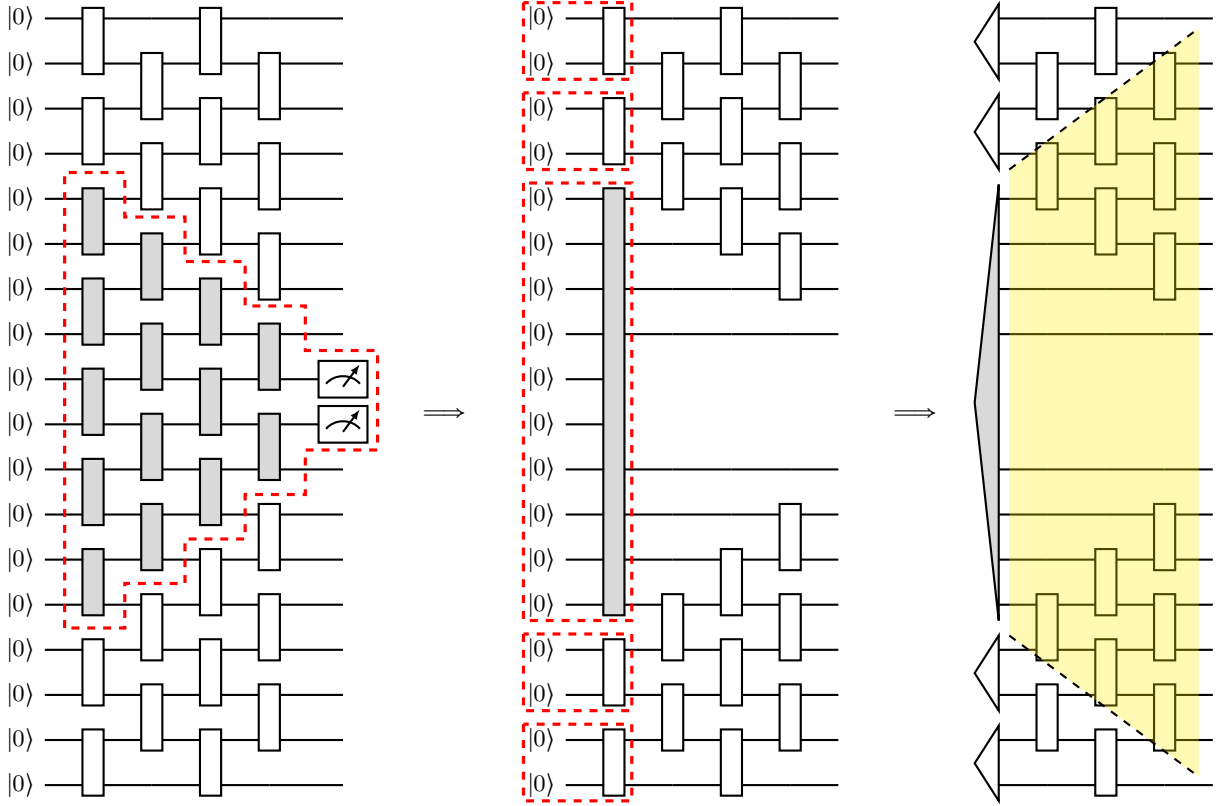

Now, we define
\begin{equation}\label{eq:definitionE}
E\equiv c(\id_n\otimes \bra{\mathbf{s}}) \left(\prod_{i=1}^L U_i^{(2)}\right).
\end{equation}
The final state can be expressed as $\ket{\psi}= \prod_{i=2}^L U_i^{(1)} (U_1^{(1)}\otimes E)\ket{0^m}$ where $U_1^{(1)} E$ can be regarded as a single layer of operations. Let $\ket{\tilde{\psi}}=U_1^{(1)} E\ket{0^m}$. It can be decomposed into a tensor product of states on smaller subsystems $\ket{\tilde{\psi}}=\bigotimes_{\bigcup A=[n]}\ket{\tilde{\psi}_A}$ according to the tensor-product structure of $U_1^{(1)}\otimes E$. The generalized stabilizer generators of $\ket{\tilde{\psi}}$ can be chosen by combining the set of generalized stabilizer generators of each $\ket{\tilde{\psi}_A}$, i.e., we can construct generator set $\Tilde{\mathcal{T}}_{\tilde{\psi}}=\sum_{A}\mathcal{T}^*_{\tilde{\psi}_A}$. Thus, $\left(\prod_{i=2}^L U_i^{(1)}\right)\Tilde{\mathcal{T}}_{\tilde{\psi}}\left(\prod_{i=2}^L U_i^{(1)}\right)^{\dagger}$ should be a set of generalized stabilizer generators of $\ket{\psi}$, which implies
\begin{equation}
\wt(\ket{\psi})\leq\wt(T_1)=\wt(U\tilde{T}U^{\dagger})\leq K^{L-1}\wt(\tilde{T})\leq K^{L-1}\max_j(\wt(\psi_j))\leq K^{L-1}\max(K,K^L(m-n))\leq K^{2L-1}(m-n+1).
\end{equation}
Here, $T_1$ is the one with maximal weight in $\left(\prod_{i=2}^L U_i^{(1)}\right)\Tilde{\mathcal{T}}_{\tilde{\psi}}\left(\prod_{i=2}^L U_i^{(1)}\right)^{\dagger}$, and $\tilde{T}$ is the one in $\Tilde{\mathcal{T}}_{\tilde{\psi}}$ corresponding to $T_1$. In the final step, we use the fact that the weight of a state is always no larger than its qubit number.

\subsection{Proof of Proposition~\ref{prop:adalowerboundstab}}\label{appendssc:proofadalowerboundstab}
In the first step of our proof, we use Eq.~\eqref{eq:adapt_state} to postpone the mid-circuit measurements to the end of the circuit. The final state $\ket{\psi}$ can be written as
\begin{equation}\label{eq:Clifadapt_state}
\ket{\psi} = c(\id_n \otimes \bra{\mathbf{s}}) C_{L}\cdots C_2C_1\ket{0^m}.
\end{equation}
Here, $\id_n$ denotes the identity operator acting on the first $n$ qubits; $\mathbf{s} = s_1s_2 \cdots s_{m-n}$ denotes the bit-string of the measured qubits; $C_L, \cdots, C_2, C_1$ are single layers of $K$-bounded fan-in Clifford gates that may depend on $\mathbf{s}$; and $c$ is a normalization factor. The measurement outcome is $+1$ for the state $\ket{0}$ and $-1$ for the state $\ket{1}$. Equation~\eqref{eq:Clifadapt_state} shows that the final state is obtained by measuring a stabilizer state, $C_{L}\cdots C_2C_1\ket{0^m}$. Notably, we can always ensure that all measurement outcomes are $+1$ by inserting appropriate Pauli $X$ gates before the measurements. These additional Pauli $X$ gates can be absorbed into the final layer $C_L$ without increasing the circuit depth. Therefore, it suffices to consider the case where all measurement outcomes are $+1$.

% Without loss of generality, we can assume that the state after measurements with outcomes all $+1$ is the target state, and no additional classically controlled operations are required.

Let $\ket{\Psi}$ be the $m$-qubit state before measurement, and $\ket{\psi}$ be the final $n$-qubit state after measurement. We provide a necessary and sufficient condition for transforming $\ket{\Psi}$ into $\ket{\psi}$ as below.

\begin{lemma}\label{le:transform}
One can obtain $n$-qubit stabilizer state $\ket{\psi}$ by measuring $Z$ on the final $m-n$ qubits of $m$-qubit stabilizer state $\ket{\Psi}$ when getting all the outcomes $+1$ iff for all $S\in \mathbb{S}_{\psi}$, there exists a binary vector $\vec{z}$ of $m-n$ bits such that $S\otimes Z^{\vec{z}}\in \mathbb{S}_{\Psi}$, where $Z^{\vec{z}}$ denotes $\bigotimes_{i=1}^{m-n}Z_i^{z_i}$ for short.
\end{lemma}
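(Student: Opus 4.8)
The plan is to recast the measurement as a projection and then read off both directions from two elementary operator identities. Writing $\Pi \equiv \id_n \otimes \ketbra{0^{m-n}}$ for the projector onto the all-$+1$ outcome, ``obtaining $\ket{\psi}$'' means precisely that $\Pi\ket{\Psi} \propto \ket{\psi}\otimes\ket{0^{m-n}}$ and is nonzero; I will work under the standing assumption that this outcome is attainable, i.e.\ $p \equiv \bra{\Psi}\Pi\ket{\Psi} > 0$, which is automatic whenever a state is actually produced and is guaranteed in the application. The post-measurement target $\ket{\psi_0}\equiv\ket{\psi}\otimes\ket{0^{m-n}}$ is the unique state stabilized by $\mathbb{S}_0 \equiv \langle \{S\otimes\id : S\in\mathcal{S}_\psi\}\cup\{Z_{n+1},\dots,Z_m\}\rangle$, whose elements are exactly the operators $S\otimes Z^{\vec z}$ with $S\in\mathbb{S}_\psi$ and $\vec z\in\mathbb{Z}_2^{m-n}$. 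The two facts I will use repeatedly are that $S\otimes\id$ commutes with $\Pi$ (disjoint supports) and that $(\id_n\otimes Z^{\vec z})\Pi = \Pi(\id_n\otimes Z^{\vec z}) = \Pi$, since $Z^{\vec z}\ket{0^{m-n}}=\ket{0^{m-n}}$.

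For sufficiency I will show that the hypothesis forces every generator of $\mathbb{S}_0$ to stabilize $\Pi\ket{\Psi}$. Given $S\in\mathbb{S}_\psi$ together with a lift $S\otimes Z^{\vec z}\in\mathbb{S}_\Psi$, I rewrite $(S\otimes\id)\Pi\ket\Psi = (S\otimes\id)\Pi(\id_n\otimes Z^{\vec z})\ket\Psi = \Pi(S\otimes Z^{\vec z})\ket\Psi = \Pi\ket\Psi$, using the two identities and then $(S\otimes Z^{\vec z})\ket\Psi=\ket\Psi$. Each $Z_i$ with $i>n$ stabilizes $\Pi\ket\Psi$ trivially. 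Since these operators generate all of $\mathbb{S}_0$ and $\Pi\ket\Psi$ is a nonzero common $+1$ eigenvector, it must be proportional to $\ket{\psi_0}$, so the measurement yields $\ket\psi$.

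The necessity direction is the main obstacle, and I will handle it through expectation values rather than a direct stabilizer-update bookkeeping, which is where sign ambiguities tend to creep in. Assuming $\Pi\ket\Psi \propto \ket{\psi_0}$, each $S\in\mathbb{S}_\psi$ satisfies $1 = \bra{\psi_0}(S\otimes\id)\ket{\psi_0} = \tfrac1p\bra\Psi(S\otimes\id)\Pi\ket\Psi$; expanding $\Pi = 2^{-(m-n)}\sum_{\vec z}(\id_n\otimes Z^{\vec z})$ then gives $\sum_{\vec z}\bra\Psi(S\otimes Z^{\vec z})\ket\Psi = p\,2^{m-n} > 0$. Because each summand is the expectation of a Hermitian Pauli operator in a stabilizer state, it lies in $\{-1,0,+1\}$, equalling $+1$ exactly when $S\otimes Z^{\vec z}\in\mathbb{S}_\Psi$. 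A strictly positive sum of such terms cannot have all summands $\le 0$, so at least one equals $+1$, yielding the required $\vec z$ with $S\otimes Z^{\vec z}\in\mathbb{S}_\Psi$. The delicate point to get right is that $p>0$ simultaneously guarantees the sum is positive and rules out the sign conflict $-(\id_n\otimes Z^{\vec z})\in\mathbb{S}_\Psi$ that would otherwise spoil the argument; I expect the bulk of the care to go into this sign and probability bookkeeping.
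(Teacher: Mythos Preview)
Your proposal is correct. Both directions use the same underlying mechanism as the paper, namely the expansion $\Pi=\id_n\otimes\ketbra{0^{m-n}}=2^{-(m-n)}\sum_{\vec z}\id_n\otimes Z^{\vec z}$ together with the fact that Pauli expectation values in a stabilizer state lie in $\{-1,0,+1\}$. Your necessity argument is essentially the paper's Eq.~\eqref{eq:transform}--\eqref{eq:transform2} recast at the state-vector level: you compute $\sum_{\vec z}\bra\Psi(S\otimes Z^{\vec z})\ket\Psi=p\,2^{m-n}>0$ and extract a $+1$ summand, while the paper reaches the same conclusion by matching Pauli coefficients in the density-matrix decomposition. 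Your sufficiency argument is genuinely cleaner than the paper's: you verify directly that $(S\otimes\id)\Pi\ket\Psi=\Pi\ket\Psi$ via the two operator identities, whereas the paper first argues that no $-(S\otimes Z^{\vec z'})$ can be a stabilizer when $p>0$, then shows $\tr\bigl(\ketbra{\tilde\psi}S\bigr)>0$, and finally invokes that the post-measurement state is itself a stabilizer state to conclude. Your route avoids that extra step and does not rely on the output being a stabilizer state, which is a mild improvement. The caveat you flag about needing $p>0$ is exactly the same standing assumption the paper makes (``if we get all $+1$ outcomes''), so there is no gap there; and your closing remark about ruling out $-(\id_n\otimes Z^{\vec z})\in\mathbb{S}_\Psi$ is not actually needed for necessity, since a strictly positive sum already forces some $+1$ term regardless of whether $-1$ terms are present.
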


\begin{proof}
First, suppose we can transform $\ket{\Psi}$ into $\ket{\psi}$ by measuring $Z$ on the final $m-n$ qubits of $m$-qubit state $\ket{\Psi}$ and get all outcomes $+1$. Since $\ket{\Psi}$ is a stabilizer state, it can be decomposed as a summation over the elements in its corresponding stabilizer group:
\begin{equation}
\ketbra{\Psi}=\frac{1}{2^m}\sum_{T\in \mathbb{S}_{\Psi}}T.
\end{equation}
Then, $\ket{\psi}$ should be given by
\begin{equation}\label{eq:transform}
\ketbra{\psi}=a\bra{0}_{m-n}\ketbra{\Psi}_m\ket{0}_{m-n}=\frac{a}{2^m}\sum_{T\in \mathbb{S}_{\Psi}}\bra{0}_{m-n}T\ket{0}_{m-n}=\frac{1}{2^n}\sum_{S\in \mathbb{S}_{\Psi}}S
\end{equation}
where $a$ is a normalization factor. For Pauli operators, we have $\bra{0}\id\ket{0}=\bra{0}Z\ket{0}=1$ and $\bra{0}X\ket{0}=\bra{0}Y\ket{0}=0$. Thus, $\bra{0}_{m-n}T\ket{0}_{m-n}\neq0$ iff the final $m-n$ Pauli operators of $T$ are $\id$ or $Z$. For those $T$ whose final $m-n$ Pauli operators are $\id$ or $Z$, $\bra{0}_{m-n}T\ket{0}_{m-n}$ just equals to the first $n$ Pauli operators of $T$, denoted by $T|_{n}$. Given that Pauli operators form a set of the basis of the operator space, Eq.~\eqref{eq:transform} indicates that for all $S\in \mathbb{S}_{\psi}$, there should be some $T\in \mathbb{S}_{\Psi}$ such that $T|_{n}=S$. The term $T|_{n}$ represents the former $n$ Pauli operators inside $T$. By combining this with the requirement that the final $m-n$ Pauli operators of $T$ should be $\id$ or $Z$, we finish the proof of the necessary condition.

Next, we come to the proof of the sufficient condition. Suppose for all $S\in \mathbb{S}_{\psi}$, there exists a binary vector $\vec{z}$ of $m-n$ bits such that $S\otimes Z^{\vec{z}}\in \mathbb{S}_{\Psi}$. Since $\ket{\Psi}$ is a stabilizer state, for any other binary vector $\vec{z'}$, $\tr{\ketbra{\Psi}S\otimes Z^{\vec{z'}}}$ can only be $0$ or $\pm 1$. If it is $-1$, $-S\otimes Z^{\vec{z'}}$ and $(S\otimes Z^{\vec{z}})\cdot (-S\otimes Z^{\vec{z'}})=-Z^{\vec{z}+\vec{z'}}$ should be stabilizers of $\ket{\Psi}$. In this case, there should be $\bra{0}_{m-n}\ketbra{\Psi}_m\ket{0}_{m-n}=0$ since $\bra{0}_{m-n}T\ket{0}_{m-n}+\bra{0}_{m-n}T\cdot(-Z^{\vec{z}+\vec{z'}})\ket{0}_{m-n}=0$ for any $T\in \mathbb{S}_{\Psi}$. Therefore, if we get all $+1$ outcomes from measurements, $\tr{\ketbra{\Psi}S\otimes Z^{\vec{z'}}}$ can only be $0$ or $+1$.
% Denote $\{T|T=S\otimes Z^{\vec{z}},S\in \mathbb{S}_{\psi},T\in \mathbb{S}_{\psi}\}$ as $\mathbb{S}^{*}$, which is a subset of $\mathbb{S}_{\psi}$.
% If we measure $Z$ on the final $m-n$ qubits of $\ket{\Psi}$ and get all the outcomes $+1$, the final state will be
% \begin{equation}\label{eq:transform2}
% \begin{aligned}
% a\bra{0}_{m-n}\ketbra{\Psi}_m\ket{0}_{m-n}&=\frac{a}{2^m}\sum_{T\in \mathbb{S}_{\psi}}\bra{0}_{m-n}T\ket{0}_{m-n}\\
% &=\frac{a}{2^m}\sum_{T\in \mathbb{S}^{*}}\bra{0}_{m-n}T\ket{0}_{m-n}+\frac{a}{2^m}\sum_{T\in \mathbb{S}_{\psi}\backslash\mathbb{S}^{*}}\bra{0}_{m-n}T\ket{0}_{m-n}\\
% &=\frac{a}{2^m}\sum_{S\in \mathbb{S}_{\psi}}S+\frac{a}{2^m}\sum_{T\in \mathbb{S}_{\psi}\backslash\mathbb{S}^{*}}\bra{0}_{m-n}T\ket{0}_{m-n}.
% \end{aligned}
% \end{equation}
Denote the post-measured state with $\ketbra{\tilde{\psi}}\equiv a\bra{0}_{m-n}\ketbra{\Psi}_m\ket{0}_{m-n}$ where $a$ is a normalization factor and larger than 0. For any $S\in \mathbb{S}_{\psi}$,
\begin{equation}\label{eq:transform2}
\begin{aligned}
\tr{\ketbra{\tilde{\psi}}S}&=a\tr{\ketbra{\Psi} \left(S\otimes\bigotimes_{i=n+1}^{m}\frac{\id+Z_i}{2}\right)}=\frac{a}{2^{m-n}}\tr{\ketbra{\Psi}S\otimes Z^{\vec{z}}}+\frac{a}{2^{m-n}}\sum_{\vec{z'}\neq\vec{z}}\tr{\ketbra{\Psi}S\otimes Z^{\vec{z'}}}>0.
\end{aligned}
\end{equation}
Notice that $\ket{\Psi}$ is a stabilizer state, and our operations are Clifford. Therefore, the final state should also be a stabilizer state, whose stabilizer group contains $\mathbb{S}_{\psi}$. Since the final state is an $n$-qubit state, the size of its stabilizer group is $2^n = \abs{\mathbb{S}_{\psi}}$. Thus, the stabilizer group of the final state can only be $\mathbb{S}_{\psi}$. This implies that the final state is just $\ket{\psi}$, which finishes the proof.
\end{proof}

In the following proof, we will only use the necessary condition in Lemma~\ref{le:transform}, whereas the sufficient condition may be useful in designing adaptive state preparation protocols for specific states.

By limiting $T$ on the first $n$ qubits and only considering stabilizer generators, i.e., only discussing $T|_{n}$ where $T$ is a stabilizer generator of $\ket{\Psi}$, and applying Lemma~\ref{lemma:operatorweightgrowth}, we obtain the following necessary but not sufficient condition for an $m$-qubit Clifford adaptive circuit to prepare an $n$-qubit stabilizer state.

\begin{lemma}\label{lemma:Clifadap}
Suppose a depth-$L$ Clifford adaptive circuit with $K$-bounded fan-in gates and $m$-qubit initial state, $\ket{0^m}$, prepares $n$-qubit stabilizer state $\ket{\psi}$. There exists a set of $n$-qubit Pauli operators $P_1,P_2,\cdots,P_m$ such that
\begin{enumerate}[1.]
\item $\wt(P_j)\leq K^L$;\label{item:weight}
\item For any stabilizer $S_i^{*}$ of $\ket{\psi}$, one can find a subset $s_i\subseteq [m]$, s.t. $S_i^{*}=\prod_{j\in s_i}P_j$.\label{item:stabilizer}
\end{enumerate}
\end{lemma}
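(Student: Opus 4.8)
The plan is to read off the operators $P_j$ directly from a set of stabilizer generators of the pre-measurement state $\ket{\Psi}$, and then to convert each stabilizer of $\ket{\psi}$ into a product of these $P_j$ using the necessary direction of Lemma~\ref{le:transform}. First I would defer all measurements to the end using Eq.~\eqref{eq:adapt_state}, writing $\ket{\psi} = c(\id_n \otimes \bra{\mathbf{s}}) U \ket{0^m}$ with $U = C_L\cdots C_1$ a depth-$L$ layered Clifford unitary, and (as already argued just before the statement) assuming without loss of generality that all outcomes are $+1$. Then $\ket{\Psi} \coloneqq U\ket{0^m}$ is the $m$-qubit stabilizer state whose $Z$-measurement on the last $m-n$ qubits yields $\ket{\psi}$.

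Next I would produce the operators and verify condition~1. Since $\ket{0^m}$ is stabilized by $Z_1,\ldots,Z_m$, the state $\ket{\Psi}$ is stabilized by the independent generators $T_j \coloneqq U Z_j U^\dagger$ for $j\in[m]$. Applying Lemma~\ref{lemma:operatorweightgrowth} once per circuit layer gives $\wt(T_j) \leq K^L \wt(Z_j) = K^L$, i.e.\ the lightcone bound. I then set $P_j \coloneqq T_j|_n$, the Pauli on the first $n$ qubits obtained by restricting $T_j$. Because restricting to a subsystem cannot increase weight, $\wt(P_j) \leq \wt(T_j) \leq K^L$, which is exactly condition~1.

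For condition~2 I would invoke the necessary part of Lemma~\ref{le:transform}: each stabilizer $S_i^*\in\mathbb{S}_\psi$ admits a binary string $\vec{z}_i$ with $S_i^*\otimes Z^{\vec{z}_i}\in\mathbb{S}_\Psi$, so this element factors through the generators as $S_i^*\otimes Z^{\vec{z}_i} = \prod_{j\in s_i} T_j$ for some $s_i\subseteq[m]$. Writing each $T_j = P_j\otimes Q_j$ with $Q_j$ its trailing Pauli on the last $m-n$ qubits, the identity $(A\otimes B)(C\otimes D)=AC\otimes BD$ lets the product factor cleanly as $\prod_{j\in s_i} T_j = \bigl(\prod_{j\in s_i} P_j\bigr)\otimes\bigl(\prod_{j\in s_i} Q_j\bigr)$, with no cross-phase between the two tensor blocks. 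Matching tensor factors against $S_i^*\otimes Z^{\vec{z}_i}$, and noting $(S_i^*\otimes Z^{\vec{z}_i})|_n = S_i^*$ since $Z^{\vec{z}_i}$ lives on the last $m-n$ qubits, yields $S_i^* = \prod_{j\in s_i} P_j$, giving condition~2.

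The only point needing care is the phase bookkeeping in this last step: matching the two tensor factors determines $\prod_{j\in s_i}P_j$ and $\prod_{j\in s_i}Q_j$ only up to a compensating scalar, which for Hermitian stabilizer elements is a sign $\pm1$. I expect this to be the main (though minor) obstacle, and I would dispose of it either by absorbing the sign into the definition of the $P_j$ or simply by observing that signs are irrelevant for the weight statement this lemma is used to prove. Everything else is a direct composition of the deferred-measurement picture, the lightcone weight bound of Lemma~\ref{lemma:operatorweightgrowth}, and the necessary condition of Lemma~\ref{le:transform}.
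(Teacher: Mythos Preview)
Your proposal is correct and follows essentially the same approach as the paper's own (very terse) argument: the paper simply states that the $P_j$ are the stabilizer generators $T_j=UZ_jU^\dagger$ of the pre-measurement state $\ket{\Psi}$ restricted to the first $n$ qubits, invokes Lemma~\ref{lemma:operatorweightgrowth} for the weight bound, and cites Lemma~\ref{le:transform} for the generation property. Your write-up actually goes further than the paper in that you explicitly flag and handle the sign ambiguity in the tensor factorization, which the paper leaves implicit; as you note, this is harmless for the downstream weight argument.
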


The Pauli operators $P_1,P_2,\cdots,P_m$ can be viewed as the stabilizer generators of $\ket{\Psi}$ restricted on the first $n$ qubits. The point~\ref{item:weight} results from a lightcone argument based on Lemma~\ref{lemma:operatorweightgrowth}. As $C_{L},\cdots, C_2, C_1$ are all layers of $K$-bounded fan-in Clifford gates, $\wt(P_j)\leq K^L$. The point~\ref{item:stabilizer} holds based on Lemma~\ref{le:transform}.

Below, we investigate the lower bound of $m$ based on the above lemma and prove Proposition~\ref{prop:adalowerboundstab}. Set $m' = \abs{\bigcup_{i=2}^n s_i}$. Without loss of generality, we assume $\bigcup_{i=2}^n s_i = [m']$. That is, the first $m'$ Pauli operators can generate $S_2^*, \cdots, S_n^*$.
We can prove that $m'\geq n-1$ by contradiction. If the reverse side holds, $S_2^*, \cdots, S_n^*$ can be generated by less than $n-1$ Pauli operators, which violates the independence of $S_i^*$. Then, we represent $P_1, \cdots, P_{n-1}$ with $S_2^*, \cdots, S_n^*$ and $P_n, \cdots, P_{m}$. This step is feasible thanks to the independence of $S_i^*$. We denote
\begin{equation}
P_i = \prod_{j=2}^n (S_j^*)^{a_{ij}} \prod_{j=n}^m (P_j)^{b_{ij}},
\end{equation}
where $a_{ij}, b_{ij} \in \{0, 1\}$ and $1\leq i\leq n-1$. Next, we analyze the decomposition of $S_1^*$ as below.
\begin{equation}
\begin{split}
S_1^* &= \prod_{j\in s_1, 1\leq j\leq n-1} P_j \prod_{j\in s_1, n\leq j\leq m} P_j\\
&= \prod_{j\in s_1, 1\leq j\leq n-1} \prod_{k=2}^n (S_k^*)^{a_{jk}} \prod_{k=n}^m (P_k)^{b_{jk}} \prod_{j\in s_1, n\leq j\leq m} P_j\\
&= \prod_{k=2}^n (S_k^*)^{c_k} \prod_{n\leq j\leq m} P_j^{d_j}.
\end{split}
\end{equation}
Here, $c_k = \bigoplus_{i\in s_1, 1\leq i\leq n-1} a_{ik}$ and $d_j = \mathbf{1}_{j\in s_1}\oplus \left(\bigoplus_{i\in s_1, 1\leq i\leq n-1}  b_{ij}\right)$ are both in $\{0, 1\}$; $\mathbf{1}_{j\in s_1}$ takes 1 if $j\in s_1$ and 0 otherwise. This step shows that we can decompose $S_1^*$ with all the other $S_k^*$ and the last $m-n+1$ $P_j$. We can further deduce that
\begin{equation}
S_1^* \prod_{k=2}^n (S_k^*)^{c_k} = \prod_{n\leq j\leq m} P_j^{d_j}.
\end{equation}
Denote $\tilde{S}_1^{*} = S_1^* \prod_{k=2}^n (S_k^*)^{c_k}$ where $\wt(\tilde{S}_1^{*})\geq \wt(S_1^{*})$ with the definition of $S_1^*$. Since the weight of each Pauli operator is no more than $K^L$, we can get
\begin{equation}
(m-n+1)K^L \geq \wt(\tilde{S}_1^{*})\geq \wt(S_1^{*}) = \wts(\ket{\psi}),
\end{equation}
which completes the proof.

\subsection{Proof of Lemma~\ref{lemma:weightandcorrrange}}\label{appendssc:lemmapf:weightandcorrrange}
We prove the result by contradiction. Recall that
\begin{equation}
\mathrm{Cor}^A_{P}(\ket{\psi})=\min_{i\neq j\in A}\max_{P_i,P_j\in \{I,X,Y,Z\}}\left|\mathrm{Cor}(P_i,P_j,\ket{\psi})\right|,
\end{equation}
and
\begin{equation}
\mathrm{CR}_P(\ket{\psi}) = \max_{\mathrm{Cor}^A_{P}(\ket{\psi}) > 0} \abs{A}.
\end{equation}
Without loss of generality, we denote $A^*$ to be the region saturating the above equality. That is, $\mathrm{CR}_P(\ket{\psi}) = \abs{A^*}.$

By contradiction, assuming that the condition
\begin{equation}
\wts(\ket{\psi}) \geq \frac{\abs{A^*}}{\sqrt{n}},
\end{equation}
is not satisfied, then we will have
\begin{equation}
n \frac{\wts(\ket{\psi})(\wts(\ket{\psi})-1)}{2} < \frac{\abs{A^*}(\abs{A^*}-1)}{2}.
\end{equation}
The above inequality implies $\exists i^*\neq j^* \in A^*$, such that no stabilizer generator of $\ket{\psi}$ will cover qubit $i^*$ and qubit $j^*$ simultaneously. Then, we can divide the stabilizer generators of $\ket{\psi}$ into three groups:
\begin{equation}
\mathcal{S}^{i^*} = \{S^{i^*}_k\}, \mathcal{S}^{j^*} = \{S^{j^*}_l\},
\mathcal{S}^*_{\psi}\backslash (\mathcal{S}^{i^*}\cup \mathcal{S}^{j^*}).
\end{equation}
Here, $S^{i^*}_k$ and $S^{j^*}_l$ are the stabilizer generators only covering qubits $i^*$ and $j^*$, respectively. Then, for any Pauli operator on qubit $i^*$, denoted as $P_{i^*}$, we have either
\begin{equation}
\bra{\psi} P_{i^*} \ket{\psi} = 0\Rightarrow \exists k, \{S_k^{i^*}, P_{i^*}\} = 0\Rightarrow \exists k, \{S_k^{i^*}, P_{i^*}P_{j^*}\}=0\Rightarrow \bra{\psi} P_{i^*}P_{j^*} \ket{\psi} = 0\Rightarrow \mathrm{Cor}(P_{i^*},P_{j^*},\ket{\psi}) = 0,
\end{equation}
or
\begin{equation}
\bra{\psi} P_{i^*} \ket{\psi} \neq 0\Rightarrow P_{i^*} = \prod_k c_k S^{i^*}_k
\Rightarrow \mathrm{Cor}(P_{i^*},P_{j^*},\ket{\psi}) = \prod_k c_k \bra{\psi} P_{j^*} \ket{\psi} - \prod_k c_k \bra{\psi} P_{j^*} \ket{\psi} = 0.
\end{equation}
Then, we will always get $\mathrm{Cor}(P_{i^*},P_{j^*},\ket{\psi}) = 0$, which contradicts the assumption that any qubit pair within $A^*$ has non-zero correlation. By contradiction, the condition $\wts(\ket{\psi}) \geq \frac{\abs{A^*}}{\sqrt{n}}$ holds, and the lemma is proved.

\subsection{Proof of Theorem~\ref{thm:adalowerboundgenericcorr}}\label{appendssc:prop2pf:permutation-invariant}

Recall the correlation function defined in Eq.~\eqref{eq:correlationfunction},
\begin{equation}
\mathrm{Cor}(O_1,O_2,\ket{\psi})=\bra{\psi}O_1O_2\ket{\psi}-\bra{\psi}O_1\ket{\psi}\bra{\psi}O_2\ket{\psi}.
\end{equation}
When $\ket{\psi}$ is separable under some decomposition $\ket{\psi}=\ket{\psi_1}_B\otimes\ket{\psi_2}_{\bar{B}}$ such that $\mathrm{supp}(O_1)\subseteq B$ and $\mathrm{supp}(O_1)\subseteq \bar{B}$, one can find $\bra{\psi}O_1O_2\ket{\psi}=\bra{\psi_1}_BO_1\ket{\psi_1}_B\bra{\psi_2}_{\bar{B}}O_2\ket{\psi_2}_{\bar{B}}=\bra{\psi}O_1\ket{\psi}\bra{\psi}O_2\ket{\psi}$ and thus $\mathrm{Cor}(O_1,O_2,\ket{\psi})=0$.

We will prove Theorem~\ref{thm:adalowerboundgenericcorr} by contradiction. Suppose that one can find a depth-$L$ adaptive circuit with $K$-bounded fan-in gates and $m$-qubit initial state to prepare $n$-qubit state $\ket{\psi}$, whose parameters violate Eq.~\eqref{eq:correlationbound}. Recall the proof of Theorem~\ref{theo:adalowerboundgeneric}, the state prepared by such an adaptive circuit can be written as $\ket{\psi}=\tilde{U}\ket{\tilde{\psi}}$, where $\tilde{U}=\prod_{i=2}^L U_i^{(1)}$ and $\ket{\tilde{\psi}}$ is a product state over some decomposition according to the backward lightcone from measured qubits:
\begin{equation}
\ket{\tilde{\psi}}=\bigotimes_{\bigcup_j B_j=[n]}\ket{\tilde{\psi}^{B_j}}.
\end{equation}
Here, only one of those $B_j$ corresponds to the backward lightcone from measured qubits, denoted as $B_1=\mathrm{supp}(E)$, whose cardinality is bounded by $|B_1|=|\mathrm{supp}(E)|\leq K^L(m-n)$. $E$ is defined in Eq.~\eqref{eq:definitionE}. For any other $B_j$ with $j\neq 1$, the cardinality is bounded by $|B_j|\leq K$.

Note that
\begin{equation}
\mathrm{Cor}(O_1,O_2,\ket{\psi})=\bra{\tilde{\psi}}\tilde{U}^{\dagger}O_1\tilde{U}\tilde{U}^{\dagger}O_2\tilde{U}\ket{\tilde{\psi}}-\bra{\tilde{\psi}}\tilde{U}^{\dagger}O_1\tilde{U}\ket{\tilde{\psi}}\bra{\tilde{\psi}}\tilde{U}^{\dagger}O_2\tilde{U}\ket{\tilde{\psi}}=\mathrm{Cor}(\tilde{U}^{\dagger}O_1\tilde{U},\tilde{U}^{\dagger}O_2\tilde{U},\ket{\tilde{\psi}})
\end{equation}
for any operators $O_1$ and $O_2$. We use $A^*$ to denote the set $A$ with the largest cardinality which satisfies $\mathrm{Cor}^A_w(\ket{\psi})>0$. That is, $A^*=\arg\max_{\mathrm{Cor}^A_w(\ket{\psi})>0}|A|$. For any operator $O_1$ whose support is covered by $A^*$, i.e., $\mathrm{supp}(O_1)\subseteq A^*$, we first consider the support of $\tilde{O}_1\equiv\tilde{U}^{\dagger}O_1\tilde{U}$. The backward lightcone of $\mathrm{supp}(O_1)$ gives an upper bound on the size of it, showing $|\mathrm{supp}(\tilde{O}_1)|\leq K^{L-1}\mathrm{supp}(O_1)\leq w\cdot K^{L-1}$. Rigorous analysis of the backward lightcone is similar to that in the proof of Theorem~\ref{theo:adalowerboundgeneric}, and a diagram is shown in Fig.~\ref{fig:correlationlightcone}. Let $\tilde{B}=\bigcup_{B_j\cap\mathrm{supp}(\tilde{O}_1)\neq\emptyset}B_j$ whose cardinality can be bounded by $|B|\leq |B_1|+(|\mathrm{supp}(\tilde{O}_i)|-1)K\leq (m-n+w)K^L$. $\ket{\tilde{\psi}}$ can be decomposed as $\ket{\tilde{\psi}}=\ket{\tilde{\psi}_1}_{\tilde{B}}\otimes\ket{\tilde{\psi}_2}_{[n]\setminus\tilde{B}}$. Therefore, as long as we can find $O_2\subseteq A^*$ with $\wt(O_2)\leq w$ such that the support of $\tilde{O}_2\equiv\tilde{U}^{\dagger}O_2\tilde{U}$ is a subset of $[n]\setminus\tilde{B}$, the correlation function $\mathrm{Cor}(O_1,O_2,\ket{\psi})=\mathrm{Cor}(\tilde{U}^{\dagger}O_1\tilde{U},\tilde{U}^{\dagger}O_2\tilde{U},\ket{\tilde{\psi}})$ should be $0$.

To find such $O_2$, consider the forward lightcone from qubits in $\tilde{B}$. The width of the lightcone at the end of the circuit is bounded by $K^{L-1}|\tilde{B}|\leq (m-n+w)K^{2L-1}$, as shown in Fig.~\ref{fig:correlationlightcone}. Thus, there should be at least $|A^*|-K^{L-1}|\tilde{B}|\geq \mathrm{CR}_w(\ket{\psi})-(m-n+w)K^{2L-1}\geq w$ qubits in $A^*$ outside the lightcone at the end of the circuit. Thus, the backward lightcone of those $w$ qubits should be disjoint from the forward lightcone of qubits in $|\tilde{B}|$, which implies that any $O_2$ on those $w$ qubits should give $\mathrm{supp}(\tilde{O}_2)\cap \tilde{B}=\emptyset$. The correlation function of $O_1$ and such $O_2$ should be $0$, where $O_1$ and $O_2$ can be arbitrary operators acting on qubit $i$ and $j$ respectively, which contradicts the condition in Theorem~\ref{thm:adalowerboundgenericcorr}. Therefore, $(m-n+w)K^{2L-1}\geq \mathrm{CR}_w(\ket{\psi})+w-1$ must be satisfied.

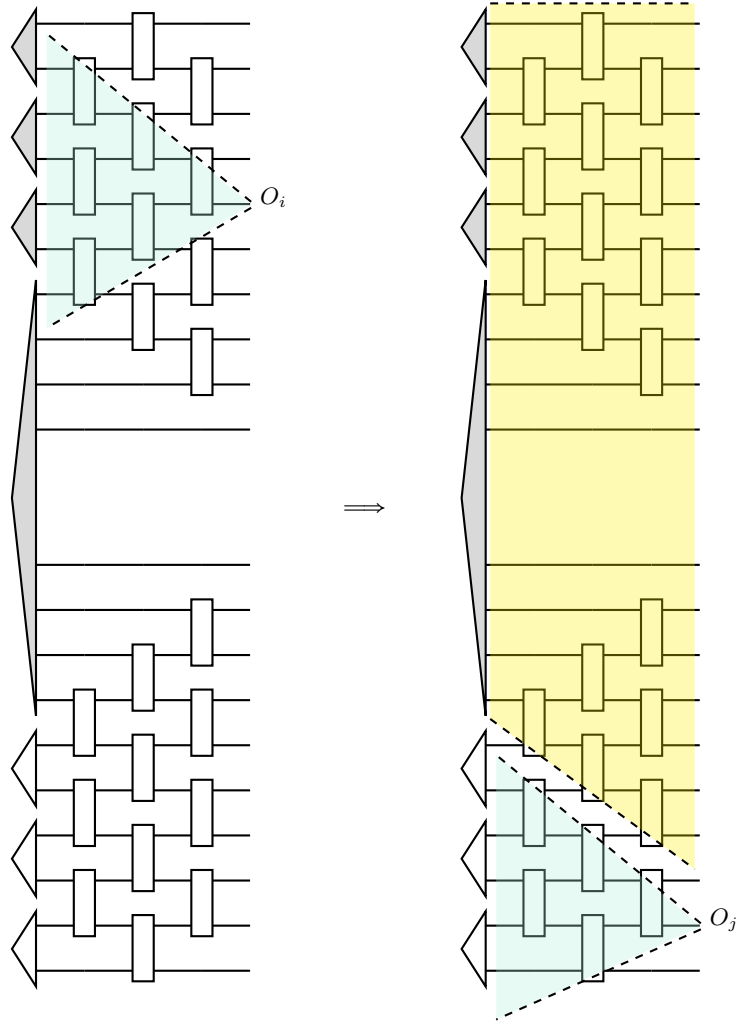
\begin{figure}[!htbp]
 \centering
  \begin{quantikz}[row sep={0.6cm,between origins}, column sep=0.5cm]
  & \qw & \gate[wires=2]{} & \qw & \qw\\
  & \gate[wires=2]{} &  & \gate[wires=2]{} & \qw  \\
  &  & \gate[wires=2]{} &  & \qw  \\
  & \gate[wires=2]{} &  & \gate[wires=2]{} & \qw  \\
  &  & \gate[wires=2]{} &  & \qw  \\
  & \gate[wires=2]{} &  & \gate[wires=2]{} & \qw  \\
  &  & \gate[wires=2]{} &  & \qw  \\
  & \qw  &  & \gate[wires=2]{} & \qw  \\
  & \qw  & \qw  &  & \qw  \\
  & \qw  & \qw  & \qw  & \qw  \\
  \\
  \\
  & \qw  & \qw  & \qw  & \qw  \\
  & \qw  & \qw  & \gate[wires=2]{} & \qw  \\
  & \qw  & \gate[wires=2]{} &  & \qw  \\
  & \gate[wires=2]{} &  & \gate[wires=2]{} & \qw  \\
  &  & \gate[wires=2]{} &  & \qw  \\
  & \gate[wires=2]{} &  & \gate[wires=2]{} & \qw  \\
  &  & \gate[wires=2]{} &  & \qw  \\
  & \gate[wires=2]{} &  & \gate[wires=2]{} & \qw  \\
  &  & \gate[wires=2]{} &  & \qw  \\
  & \qw &  & \qw & \qw  \\
  \end{quantikz}
  \begin{tikzpicture}[overlay,x=4mm,y=4mm]
  \draw[black, thick] (-7.15,16.8) -- ++(0,-2.5) -- ++(-0.8,1.25) -- cycle;
  \draw[black, thick] (-7.15,13.8) -- ++(0,-2.5) -- ++(-0.8,1.25) -- cycle;
  \draw[black, thick] (-7.15,10.8) -- ++(0,-2.5) -- ++(-0.8,1.25) -- cycle;
  \draw[black, thick] (-7.15,7.8) -- ++(0,-14.5) -- ++(-0.8,7.25) -- cycle;
  \draw[black, thick] (-7.15,-7.2) -- ++(0,-2.5) -- ++(-0.8,1.25) -- cycle;
  \draw[black, thick] (-7.15,-10.2) -- ++(0,-2.5) -- ++(-0.8,1.25) -- cycle;
  \draw[black, thick] (-7.15,-13.2) -- ++(0,-2.5) -- ++(-0.8,1.25) -- cycle;
  \draw[black, dashed, thick] (0,10.4) -- ++(-6.8,5.6);
  \draw[black, dashed, thick] (0,10.2) -- ++(-6.8,-4);
  \node[fit={(0.3,10.3)}, text=black, font=\small\bfseries] {$O_i$};
  \begin{scope}[on background layer]
  \fill[gray, fill opacity=0.3, draw=none] (-7.15,16.8) -- ++(0,-2.5) -- ++(-0.8,1.25) -- cycle;
  \fill[gray, fill opacity=0.3, draw=none] (-7.15,13.8) -- ++(0,-2.5) -- ++(-0.8,1.25) -- cycle;
  \fill[gray, fill opacity=0.3, draw=none] (-7.15,10.8) -- ++(0,-2.5) -- ++(-0.8,1.25) -- cycle;
  \fill[gray, fill opacity=0.3, draw=none] (-7.15,7.8) -- ++(0,-14.5) -- ++(-0.8,7.25) -- cycle;
  \fill[mintcyan, fill opacity=0.3, draw=none] (-6.8,16) -- (-6.8,6.2) -- (0,10.3) -- cycle;
  \end{scope}
  \end{tikzpicture}
  \hspace{10mm}
  $\Longrightarrow$
  \hspace{10mm}
  \begin{quantikz}[row sep={0.6cm,between origins}, column sep=0.5cm]
  & \qw & \gate[wires=2]{} & \qw & \qw\\
  & \gate[wires=2]{} &  & \gate[wires=2]{} & \qw  \\
  &  & \gate[wires=2]{} &  & \qw  \\
  & \gate[wires=2]{} &  & \gate[wires=2]{} & \qw  \\
  &  & \gate[wires=2]{} &  & \qw  \\
  & \gate[wires=2]{} &  & \gate[wires=2]{} & \qw  \\
  &  & \gate[wires=2]{} &  & \qw  \\
  & \qw  &  & \gate[wires=2]{} & \qw  \\
  & \qw  & \qw  &  & \qw  \\
  & \qw  & \qw  & \qw  & \qw  \\
  \\
  \\
  & \qw  & \qw  & \qw  & \qw  \\
  & \qw  & \qw  & \gate[wires=2]{} & \qw  \\
  & \qw  & \gate[wires=2]{} &  & \qw  \\
  & \gate[wires=2]{} &  & \gate[wires=2]{} & \qw  \\
  &  & \gate[wires=2]{} &  & \qw  \\
  & \gate[wires=2]{} &  & \gate[wires=2]{} & \qw  \\
  &  & \gate[wires=2]{} &  & \qw  \\
  & \gate[wires=2]{} &  & \gate[wires=2]{} & \qw  \\
  &  & \gate[wires=2]{} &  & \qw  \\
  & \qw &  & \qw & \qw  \\
  \end{quantikz}
  \begin{tikzpicture}[overlay,x=4mm,y=4mm]
  \draw[black, thick] (-7.15,16.8) -- ++(0,-2.5) -- ++(-0.8,1.25) -- cycle;
  \draw[black, thick] (-7.15,13.8) -- ++(0,-2.5) -- ++(-0.8,1.25) -- cycle;
  \draw[black, thick] (-7.15,10.8) -- ++(0,-2.5) -- ++(-0.8,1.25) -- cycle;
  \draw[black, thick] (-7.15,7.8) -- ++(0,-14.5) -- ++(-0.8,7.25) -- cycle;
  \draw[black, thick] (-7.15,-7.2) -- ++(0,-2.5) -- ++(-0.8,1.25) -- cycle;
  \draw[black, thick] (-7.15,-10.2) -- ++(0,-2.5) -- ++(-0.8,1.25) -- cycle;
  \draw[black, thick] (-7.15,-13.2) -- ++(0,-2.5) -- ++(-0.8,1.25) -- cycle;
  \draw[black, dashed, thick] (-7,17) -- ++(6.8,0);
  \draw[black, dashed, thick] (-7,-6.8) -- ++(6.8,-5);
  \draw[black, dashed, thick] (0,-13.6) -- ++(-6.8,5.6);
  \draw[black, dashed, thick] (0,-13.8) -- ++(-6.8,-3);
  \node[fit={(0.3,-13.7)}, text=black, font=\small\bfseries] {$O_j$};
  \begin{scope}[on background layer]
  \fill[gray, fill opacity=0.3, draw=none] (-7.15,16.8) -- ++(0,-2.5) -- ++(-0.8,1.25) -- cycle;
  \fill[gray, fill opacity=0.3, draw=none] (-7.15,13.8) -- ++(0,-2.5) -- ++(-0.8,1.25) -- cycle;
  \fill[gray, fill opacity=0.3, draw=none] (-7.15,10.8) -- ++(0,-2.5) -- ++(-0.8,1.25) -- cycle;
  \fill[gray, fill opacity=0.3, draw=none] (-7.15,7.8) -- ++(0,-14.5) -- ++(-0.8,7.25) -- cycle;
  \fill[mintcyan, fill opacity=0.3, draw=none] (-6.8,-8) -- (-6.8,-16.8) -- (0,-13.7) -- cycle;
  \fill[yellow, fill opacity=0.3, draw=none] (-7,17) -- (-7,-6.8) -- (-0.2,-11.8) -- (-0.2,17) -- cycle;
  \end{scope}
  \end{tikzpicture}
\caption{Illustration of the backward lightcones of qubits $i$ and $j$, and the forward lightcone of region $B$. Our goal is to determine $O_j$ given $O_i$. Any adaptive circuit can be represented by the circuit on the left, whose initial state is separable based on the backward lightcone from measurements, as in Fig.~\ref{fig:measurelightcone}. The largest set of states covered by the backward lightcone from $O_i$, denoted as $B$ in the proof, is highlighted in gray. The circuit on the right demonstrates that if one can identify an operator $O_j$ supported on qubit $j$, which lies outside the forward lightcone of the gray qubits, then it follows that the gray qubits must also lie outside the backward lightcone of qubit $j$.
}
\label{fig:correlationlightcone}
\end{figure}

\subsection{Discussion of cases with geometric constraints}\label{appendsc:geo}
We start with the formal definition of the operator diffusion function.
\begin{definition}\label{defi:GKD}
The operator diffusion function $G_{K,D}(A)$ is a function of support depending on the support, $A$, with parameters $K$, $D$. It should be the smallest support that satisfies: for any operator $O$ whose support is a subset of $A$ and any $D$-depth $K$-bounded fan-in unitary gate $U$, we have $\mathrm{supp}(UOU^{\dagger})\subseteq G_{K,D}(A)$. Equivalently,
\begin{equation}
G_{K,D}(A)=\bigcup_{\substack{\mathrm{supp}(O)\subseteq A\\ U\in\mathcal{U}_{K,D}}}\mathrm{supp}(UOU^{\dagger})
\end{equation}
where $\mathcal{U}_{K,D}$ denotes the set of depth-$D$ unitary circuits with $K$-bounded fan-in, possibly subject to geometric constraints.
\end{definition}
One can find that $G_{K,D_2}(G_{K,D_1}(A))=G_{K,D_1+D_2}(A)$ by definition. For any operator $O$ whose support is a subset of $A_1+A_2$, where $A_1$ and $A_2$ are two disjoint supports, one can decompose $O$ into a product form, $O=\sum_i O_i^{(1)}\cdot O_i^{(2)}$ Here, the support of $O_i^{(1)}$ ($O_i^{(2)}$) is a subset of $A_1$ ($A_2$). Thus, for any $U\in\mathcal{U}_{K,D}$,
\begin{equation}\label{eq:supportadd}
\mathrm{supp}(UOU^{\dagger})\subseteq\bigcup_i\mathrm{supp}(UO_i^{(1)}O_i^{(2)}U^{\dagger})\subseteq\left(\bigcup_i\mathrm{supp}(UO_i^{(1)}U^{\dagger})\right)\cup\left(\bigcup_i\mathrm{supp}(UO_i^{(2)}U^{\dagger})\right)\subseteq G_{K,D}(A_1)\cup G_{K,D}(A_2).
\end{equation}
It is based on the fact that $\mathrm{supp}(A+B)\subseteq\mathrm{supp}(A)\cup\mathrm{supp}(B)$ and $\mathrm{supp}(AB)\subseteq\mathrm{supp}(A)\cup\mathrm{supp}(B)$. Since Eq.~\eqref{eq:supportadd} is true for any operator $O$, one can conclude that
\begin{equation}\label{eq:GKDunion}
G_{K,D}(A_1+A_2)\subseteq G_{K,D}(A_1)\cup G_{K,D}(A_2).
\end{equation}
One can optimize the choice of support $A$ to get $G_{K,D}(A)$ with maximal cardinality when the number of qubits in the support is fixed. Define
\begin{equation}
g_{K,D}(a)=\max_{|A|=a}|G_{K,D}(A)|.
\end{equation}
When $A$ only includes a single qubit, we omit the parameter $a$ and just use $g_{K,D}\equiv g_{K,D}(1)$. With the property of Eq.~\eqref{eq:GKDunion}, one can find that $g_{K,D}(a)\leq a\cdot g_{K,D}$.

Based on the former definition, one can generalize the proof of Proposition~\ref{prop:nonadaptivegeneral} by replacing Eq.~\eqref{eq:prooftheo1} with
\begin{equation}
\wt(\ket{\psi})=\wt(T^*_1)\leq\wt(T_1)=\wt(UZ_{i_1}U^{\dagger})\leq g_{K,D}.
\end{equation}
Moreover, Lemma~\ref{le:marginal} can also be generalized with geometric constraints by requiring the cardinality of the subset $|s|=g_{K,D}$. The cases for the one-dimensional chain and the square lattice were considered in Ref.~\cite{yu2023learning}, where $g_{K,D}$ is bounded by $O(KD)$ and $O(K^2D^2)$, respectively. Then, the result of Proposition~\ref{prop:nonadaptivegeneralClifford} changes to be
\begin{equation}
g_{K,D}\geq\wts(\ket{\psi}).
\end{equation}

For Theorem~\ref{theo:adalowerboundgeneric}, one can go through the proof and can find that we need to bound the weight of $E=c(\id_n\otimes \bra{\mathbf{s}}) \left(\prod_{i=1}^L U_i^{(2)}\right)$. We can solve this problem recursively by considering the support of $O_{j+1}^{(2)}U_j^{(2)}$ in each step, where $O_{j+1}^{(2)}=c(\id_n\otimes \bra{\mathbf{s}})\left(\prod_{i=j+1}^L U_i\right)$. Suppose the support of $O_{j+1}^{(2)}$ is $A_{j+1}$, the support of $O_j^{(2)}$ is
\begin{equation}
A_j=\mathrm{supp}(O_{j+1}^{(2)}U_j^{(2)})\subseteq\mathrm{supp}(O_{j+1}^{(2)})\cup\mathrm{supp}(U_j^{(2)})=A_{j+1}\cup Q_j^{(2)}.
\end{equation}
To bound the set $A_{j+1}\cup Q_j^{(2)}$, we construct an operator whose support is it. To do this, we first find an operator $O'_{j+1}$ as a tensor product of Pauli $X$ whose support is $A_{j+1}\cup Q_{j+1}^{(2)}$. Then, we construct a new depth-$1$ $K$-bounded fan-in unitary circuit $U'_j$ by replacing each tensor product term of $U_j^{(2)}$ with generalized $CZ$ gates. Specifically, suppose $U_j^{(2)}$ can be decomposed as $U_j^{(2)}=\bigotimes_{\abs{R}\leq K} U_{R}$, $U'_j$ is defined as $U'_j=\bigotimes_{\abs{R}\leq K} CZ_R$, where $CZ_R\equiv\id_R-2\ketbra{1}_R$ is the multi-qubit controlled-$Z$ gate on the qubits in $R$. Since $CZ_RX_iCZ_R^{\dagger}=X_iCZ_{R\backslash\{i\}}$ as long as $i\in R$, and $R\cap Q_{j+1}^{(2)}\neq\emptyset$ for all $R$ by definition, the whole support of $U'_j$ should be added to the support of $O'_{j+1}$ after the evolution $U'_jO'_{j+1}{U'}_j^{\dagger}$, i.e., $\mathrm{supp}(U'_jO'_{j+1}{U'}_j^{\dagger})=\mathrm{supp}(O'_{j+1})\cup\mathrm{supp}(U'_j)$. Then, Definition~\ref{defi:GKD} shows that
\begin{equation}
\mathrm{supp}(U'_jO'_{j+1}{U'}_j^{\dagger})=\mathrm{supp}(O'_{j+1})\cup\mathrm{supp}(U'_j)=A_{j+1}\cup Q_{j+1}^{(2)}\cup Q_j^{(2)}\subseteq G_{K,1}(A_{j+1}).
\end{equation}
Therefore, by combining the former two equations, one can derive $A_j\subseteq G_{K,1}(A_{j+1})$. If we start with $O_{L+1}=\id_n\otimes\bra{\mathbf{s}}$ whose support is denoted as $A_L$ and repeat the procedure, we will finally get $\mathrm{supp}(E)=A_0\subseteq G_{K,L}(A_L)$. Therefore,
\begin{equation}
\begin{split}
\wt(\ket{\psi})&\leq \max\left(|G_{K,L}(\{i\})|,|G_{K,L-1}(G_{K,L}(A_L))|\right)\\
&= \max( g_{K,L}, |G_{K,2L-1}(A_L)| )\\
&\leq \max( g_{K,L}, (m-n)g_{K,2L-1} )\\
&\leq (m-n+1)g_{K,2L-1}.
\end{split}
\end{equation}
Here, $\{i\}$ represents the support of qubit $i$.

For Proposition~\ref{prop:adalowerboundstab}, we only need to replace $K^L$ with $g_{K,L}$, as it follows the same weight growth used in Proposition~\ref{prop:nonadaptivegeneral}. Specifically, one only needs to modify the first condition in Lemma~\ref{lemma:Clifadap} from $\wt(P_j)\leq K^L$ to $\wt(P_j)\leq g_{K,L}$. The final result shows
\begin{equation}
(m-n+1)g_{K,L}\geq\wts(\ket{\psi}).
\end{equation}

For Theorem~\ref{thm:adalowerboundgenericcorr}, we replace $K^L$ with $g_{K,L}$ throughout the proof wherever forward or backward lightcones are used. When analyzing the forward lightcone from region $\tilde{B}$, we consider two cases of $\tilde{B}$, which consists of all product-state components overlapping with the backward lightcone of $O_1$.

If $\tilde{B}$ does not contain the qubits in the backward lightcone from measurements, i.e., $\mathrm{supp}(E)$ defined in Eq.~\eqref{eq:definitionE}, we have $B\subseteq G_{K,L}(\mathrm{supp}(O_1))$, since any separable part within $\tilde{B}$ is caused by a single $K$-qubit gate. In this case, the forward lightcone is bounded as $|G_{K,L-1}(\tilde{B})|\leq |G_{K,L-1}(G_{K,L}(\mathrm{supp}(O_1)))|\leq w\cdot g_{K,2L-1}$.

If $\tilde{B}$ does include $\mathrm{supp}(E)$, we decompose $\tilde{B}$ into two parts: $\mathrm{supp}(E)$ and $\tilde{B} \backslash \mathrm{supp}(E)$. As established earlier, $\mathrm{supp}(E) \subseteq G_{K,L}(A_L)$, where $A_L$ denotes the set of measured qubits. The remaining part, $\tilde{B} \backslash \mathrm{supp}(E)$, includes only product-state components arising from local $K$-qubit gates and hence satisfies $\tilde{B} \backslash \mathrm{supp}(E) \subseteq G_{K,L}({\mathrm{supp}(O_1)})$. Thus, the size of the forward lightcone at the end of the circuit is bounded by
\begin{equation}
\begin{aligned}
|G_{K,L-1}(\tilde{B})|&\leq |G_{K,L-1}(\mathrm{supp}(E))|+|G_{K,L-1}(\tilde{B}\backslash\mathrm{supp}(E))|\\
&\leq |G_{K,L-1}(G_{K,L}(A_L))|+|G_{K,L-1}(G_{K,L}(\mathrm{supp}(O_1)))|\\
&\leq (m-n)g_{K,2L-1}+w\cdot g_{K,2L-1}.
\end{aligned}
\end{equation}

Note that we must have $|G_{K,L-1}(\tilde{B})| \geq \mathrm{CR}_w(\ket{\psi})-(w-1)$; otherwise, there would exist at least two qubits such that the correlation function vanishes for all operators supported on them. Combining the arguments above, we obtain
\begin{equation}
(m-n+w)g_{K,2L-1}+w-1\geq \mathrm{CR}_w(\ket{\psi}).
\end{equation}

For Proposition~\ref{prop:permutation-invariant}, it is a direct corollary of Theorem~\ref{thm:adalowerboundgenericcorr}.

\section{Proof of results related to anti-shallowness}\label{appendsc:antishallow}
\subsection{Proof of Theorem~\ref{thm:adaptpower}}\label{appendssc:thmpf:adaptpower}
We first present a strong and more detailed version of Theorem~\ref{thm:adaptpower}.

\begin{theorem}\label{thm:adaptpowerFull}
Suppose $m$ and $n$ are two positive integers with $m - n = \omega(1)$. The distance between any $n$-qubit state prepared by an $m$-qubit shallow adaptive circuit and an $n$-qubit $D$-depth non-adaptive circuit is $O(m-n)$. Meanwhile, there exists an $n$-qubit state prepared by an $m$-qubit shallow adaptive circuit such that its distance to any state prepared by an $n$-qubit $D$-depth non-adaptive circuit is $\Omega((m-n)/K^{5D})$. More precisely,
\begin{equation}
-\log F_{D,n}^{L,m} = O(\min((m-n)K^L, n)) \Rightarrow \forall L\in O(1), -\log F_{D,n}^{L,m} = O(\min(m-n, n)),
\end{equation}
and $\exists L\in O(1)$,
\begin{equation}
-\log F_{D,n}^{L,m} = \Omega(\min(m-n, n)/K^{5D}).
\end{equation}
If $m-n = O(1)$, then all shallow-adaptive circuit states can be generated by shallow circuits. Mathematically, $\forall L\in O(1), \exists D\in O(1)$,
\begin{equation}
-\log F_{D,n}^{L,m} = 0.
\end{equation}
\end{theorem}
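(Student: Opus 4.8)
The plan is to treat the three regimes separately, reusing in each the backward-lightcone normal form for adaptive circuits established in the proof of Theorem~\ref{theo:adalowerboundgeneric}. Recall that any depth-$L$ adaptive state can be written as $\ket{\psi}=\tilde U\,\ket{\tilde\psi}$, where $\tilde U=\prod_{i=2}^{L}U_i^{(1)}$ is a depth-$(L-1)$ non-adaptive circuit and $\ket{\tilde\psi}=\bigotimes_{A}\ket{\tilde\psi_A}$ is a product state in which a single ``core'' factor $\ket{\tilde\psi_{B_1}}$ occupies the backward lightcone $B_1=\mathrm{supp}(E)$ of the measured qubits, with $|B_1|\le K^{L}(m-n)$, while every other factor lives on at most $K$ qubits. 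This decomposition is the common engine for both the upper and the lower bounds.

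For the upper bound I would replace the core factor by its heaviest computational-basis vector. Writing $\ket{\tilde\psi_{B_1}}=\sum_y c_y\ket y$ on $r:=|B_1|\le K^L(m-n)$ qubits, choose $y^\ast$ with $|c_{y^\ast}|^2\ge 2^{-r}$ and let $V$ be the non-adaptive circuit that first prepares $\ket{y^\ast}_{B_1}\otimes\bigotimes_{A\ne B_1}\ket{\tilde\psi_A}$ (one depth-$1$ layer of disjoint $X$- and $K$-gates) and then applies $\tilde U$. This $V$ has depth $L=O(1)$ and obeys $\abs{\bra{\psi}V\ket{0^n}}^2\ge 2^{-r}$, so by monotonicity of $F_{D,n}$ in $D$ one gets $-\log F_{D,n}^{L,m}\le K^L(m-n)$ for $D\ge L$; combined with the elementary bound $F_{D,n}(\psi)\ge 2^{-n}$ (some amplitude of any $n$-qubit state has modulus at least $2^{-n/2}$), this yields $-\log F_{D,n}^{L,m}=O(\min((m-n)K^L,n))$, which is $O(\min(m-n,n))$ for $L=O(1)$. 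The same bounded core settles the constant-ancilla case: if $m-n=O(1)$ then $r=O(1)$, so $\ket{\tilde\psi_{B_1}}$ is a state on a constant number of qubits and has constant non-adaptive circuit complexity. Preparing it exactly alongside the small factors and then applying $\tilde U$ realizes $\ket\psi$ by an $O(1)$-depth non-adaptive circuit, so every shallow-adaptive state is already shallow and $-\log F_{D,n}^{L,m}=0$ for a suitable $D=O(1)$.

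The substantive content is the matching lower bound, where I would exhibit a single hard witness. By the shallow-adaptive QLDPC preparation scheme of Section~\ref{ssc:QLDPCprepare} (built on Lemma~\ref{lemma:qldpcmeasure}), $m-n$ ancillas suffice to deterministically prepare, at depth $L=O(1)$, a good QLDPC code state $\ket{\psi_{\mathrm{code}}}$ with parameters $[[t,Rt,\gamma t]]$ on $t=\Theta(\min(m-n,n))$ qubits, padded by $\ket{0^{\,n-t}}$ to $n$ qubits. It then remains to prove an incompressibility estimate generalizing Ref.~\cite{bravyi2024entanglement}: for every depth-$D$ non-adaptive $V$,
\begin{equation}
-\log\abs{\bra{\psi_{\mathrm{code}}}\bra{0^{n-t}}V\ket{0^n}}^2=\Omega(\gamma t/K^{5D}).
\end{equation}
The mechanism I have in mind expands $F$ via the stabilizer sum $\frac{1}{2^n}\sum_{S}\bra{0^n}V^\dagger S V\ket{0^n}$: by Lemma~\ref{lemma:operatorweightgrowth} each $V^\dagger S V$ has support enlarged by at most $K^D$, a term survives only when $V^\dagger S V$ is computational-basis diagonal, and the distance-$\gamma t$ Knill--Laflamme property caps the number of stabilizers that can become diagonal after back-propagation at $2^{\,n-\Omega(\gamma t/K^{cD})}$. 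The padding stabilizers $Z_i$ stay local and contribute trivially; tracking both forward and backward lightcones through this counting is what fixes the exponent $K^{5D}$. Substituting $t=\Theta(\min(m-n,n))$ gives $-\log F_{D,n}^{L,m}=\Omega(\min(m-n,n)/K^{5D})$.

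I expect the incompressibility estimate to be the main obstacle. The delicate points are (i) upgrading the distance property, which only constrains expectation values of low-weight observables, into an exponential \emph{fidelity} bound rather than a mere energy lower bound, and (ii) controlling the constant in the exponent $K^{5D}$ uniformly over all depth-$D$ circuits, which requires carefully counting how many independent lightcone regions the good-code structure guarantees after back-propagation. By contrast, the upper bound and the constant-ancilla collapse are routine once the lightcone normal form is in hand.
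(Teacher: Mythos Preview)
Your treatment of the upper bound and the constant-ancilla collapse is essentially the paper's: both reduce the adaptive state via the backward-lightcone normal form to a shallow circuit acting on a product state with a single $O(K^L(m-n))$-qubit core, then bound the fidelity by the largest computational-basis amplitude of that core. Your choice of witness for the lower bound---a good QLDPC code state on $t=\Theta(\min(m-n,n))$ qubits prepared via Section~\ref{ssc:QLDPCprepare} and padded with zeros---also matches the paper.

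The gap is in your mechanism for the incompressibility estimate. The stabilizer-sum expansion $F=\frac{1}{2^n}\sum_S\bra{0^n}V^\dagger SV\ket{0^n}$ is valid, but the claim that ``a term survives only when $V^\dagger SV$ is computational-basis diagonal'' holds only when $V$ is Clifford, so that $V^\dagger SV$ is again Pauli. For a generic depth-$D$ unitary $V$, each $V^\dagger SV$ is merely a Hermitian involution supported on at most $K^D\wt(S)$ qubits, and $\bra{0^n}V^\dagger SV\ket{0^n}$ is a real number in $[-1,1]$ with no discrete survival criterion; the Knill--Laflamme condition constrains expectation values of low-weight operators on \emph{code states}, not on $\ket{0^n}$, so it does not directly cap how many terms in this sum are large.

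The paper's route (Lemma~\ref{lemma:qlpdcpower}) is entirely different. It sets $\ket{\phi_V}=V^\dagger(\ket{\phi}\otimes\ket{0^{n-t}})$ and studies the distribution $\Pr_{\phi_V}(x)=\abs{\braket{x}{\phi_V}}^2$. From Ref.~\cite{bravyi2024entanglement} one imports that, because the back-propagated checks are $sK^D$-sparse, (i) each bit of $x$ is independent of all but $J=O((sK^D)^4)$ others, (ii) $\mathbb{E}\abs{x}\le(J+1)(-\log\Pr[x=0])$, and (iii) a Chernoff tail bound holds. The code distance enters through local indistinguishability: the quantities $S_i=\Pr[\abs{x}=i]$ for $i<d/K^D$ are expectation values of operators of weight $<d$ and hence identical across all logical states $\ket{\phi}$. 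A truncated Bonferroni inequality then sandwiches $\Pr[x=0]$ between expressions involving only $\sum_{i<d/K^D}S_i$, and comparison with an orthogonal logical state $\ket{\phi'}$ having $\Pr[x=0]=0$ forces $-\log\Pr[x=0]=\Omega(d/K^{5D})$. The exponent $K^{5D}$ is simply $K^D\cdot J$, not a lightcone-counting artifact.
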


The main difference between Theorem~\ref{thm:adaptpowerFull} and Theorem~\ref{thm:adaptpower} is that the former does not require the circuit depth $D$ to be constant. In the following, we present the full proof.

\subsection{Upper bound}
We first prove that $-\log F_{D,n}^{L,m} = O(\min(m - n, n))$ when the number of ancillary qubits is non-constant, and that $-\log F_{D,n}^{L,m} = 0$ when the number of ancillary qubits is constant.
As elaborated in the main text, if a state $\ket{\psi}$ can be written as $\ket{\phi}\otimes \ket{0^{n-t}}$ where $\ket{\phi}$ is a $t$-qubit state, the value of $F_{D,n}(\ket{\psi})$ is at least $2^{-t}$ resulting from the following equation.
\begin{equation}
\begin{split}
F_{D,n}(\ket{\psi}) &= \max_{V:\mathrm{depth}(V)=D} \abs{\bra{\psi}V\ket{0^n}}^2\\
&\geq \max_{V:\mathrm{depth}(V)=D} \abs{\bra{\phi}V\ket{0^t}}^2\\
&\geq \max_{\mathbf{s}\in \{0,1\}^t} \abs{\bra{\phi}\ket{\mathbf{s}}}^2\\
&\geq 2^{-t}\sum_{\mathbf{s}\in \{0,1\}^t} \abs{\bra{\phi}\ket{\mathbf{s}}}^2\\
&= 2^{-t}.
\end{split}
\end{equation}
Thus, $-\log F_{D,n}^{L,m} = O(n)$ holds automatically. The remaining thing is to prove $-\log F_{D,n}^{L,m} = O(m-n)$ when $m-n<n$. We will show that any state $\ket{\psi}$ prepared by shallow adaptive circuits with $m-n$ ancillary qubits satisfies $-\log F_{D,n}(\ket{\psi}) = O(m-n)$. We investigate the state $\ket{\psi}$ with the following expression.
\begin{equation}\label{eq:append_adapt_state}
\ket{\psi} = c(\id_n\otimes \bra{\mathbf{s}}) U_{L}\cdots U_2U_1\ket{0^m},
\end{equation}
where $\id_n$ denotes the identity operator acting on the first $n$ qubits; $\mathbf{s}=s_1s_2\cdots s_{m-n}$ denotes all measurement outcomes; $U_{L},\cdots, U_2, U_1$ are all single layers of $K$-bounded fan-in quantum gates, which depend on the measurement outcomes $s_1, s_2, \cdots, s_{m-n}$; $c$ is a normalization factor. For each layer of unitary $U_i$, we decompose it into two parts $U_i = U_i^{(1)}\otimes U_i^{(2)}$ where $\mathrm{supp}(U_i^{(2)})$ contains last $m-n$ qubits and $\mathrm{supp}(U_i^{(1)})$ does not. The decomposition should make $\wt(U_i^{(2)})$ as small as possible. Thus, the fan-in of $U_i^{(2)}$ is upper bounded by $K(m-n)$ from the definition. With this decomposition, we rewrite the expression of $\ket{\psi}$ below.
\begin{equation}
\begin{split}
\ket{\psi} &= c(\id_n\otimes \bra{\mathbf{s}}) \prod_{i=1}^{L} (U_i^{(1)}\otimes U_i^{(2)}) \ket{0^m}\\
&= c(\prod_{i=1}^{L} U_i^{(1)}) (\id_n\otimes \bra{\mathbf{s}}) \prod_{i=1}^{L} \left( (\prod_{j=1}^{i-1} U_i^{(1)})^{\dagger} U_i^{(2)} (\prod_{j=1}^{i-1} U_i^{(1)}) \right) \ket{0^m}.
\end{split}
\end{equation}
Note that $\prod_{i=1}^{L} U_i^{(1)}$ is a depth-$L$ shallow circuit only containing unitary operations. The fan-in of unitary operation $(\prod_{j=1}^{i-1} U_i^{(1)})^{\dagger} U_i^{(2)} (\prod_{j=1}^{i-1} U_i^{(1)})$ will be upper bounded by $K^{i-1}K(m-n) = K^i(m-n)$ based on the Lemma~\ref{lemma:operatorweightgrowth}.

Lemma~\ref{lemma:operatorweightgrowth} indicates that the weight of $(\prod_{j=1}^{i-1} U_i^{(1)})^{\dagger} U_i^{(2)} (\prod_{j=1}^{i-1} U_i^{(1)})$ is still proportional to $m-n$ with a constant factor $K^i$ after shallow circuits. As a result,
\begin{equation}
\wt(\prod_{i=1}^{L} \left( (\prod_{j=1}^{i-1} U_i^{(1)})^{\dagger} U_i^{(2)} (\prod_{j=1}^{i-1} U_i^{(1)}) \right))\leq m-n+\sum_{i=1}^{L}(K^i-1)(m-n) = (\frac{K^{L+1}-1}{K-1}-L)(m-n).
\end{equation}
Hence, the state after measurement and before $\prod_{i=1}^{L} U_i^{(1)}$,
\begin{equation}
c(\id_n\otimes \bra{\mathbf{s}}) \prod_{i=1}^{L} \left( (\prod_{j=1}^{i-1} U_i^{(1)})^{\dagger} U_i^{(2)} (\prod_{j=1}^{i-1} U_i^{(1)}) \right) \ket{0^m},
\end{equation}
must have a decomposition of a $a(m-n)$-qubit state, $\ket{\phi}$, and a product state $\ket{0^{n-a(m-n)}}$ where $a = \frac{K^{L+1}-1}{K-1}-L-1$ is a constant. Thus,
\begin{equation}
F_{D,n}(\ket{\psi}) \geq F_{D-L, a(m-n)} (\ket{\phi}) \geq 2^{-a(m-n)},
\end{equation}
which is equivalent to $-\log F_{D,n}(\ket{\psi}) = O((m-n)K^L)$. Hence, $-\log F_{D,n}^{L,m}(\ket{\psi}) = O(m-n)$ for $L = O(1)$.

Notice that the final state $\ket{\psi}$ is connected to $\ket{\phi}\ket{0^{n - a(m - n)}}$ via a depth-$L$ or shallow circuit. If $m-n$ is a constant, then the state $\ket{\phi}$ can also be prepared by a shallow circuit. Consequently, $\ket{\psi}$ itself can be generated by a shallow circuit. In this case, there exists a constant $D$ such that $F_{D,n}(\ket{\psi}) = 1$, implying that $-\log F_{D,n}(\ket{\psi}) = 0$. We summarize that for shallow adaptive circuits or $L = O(1)$,
\begin{equation}\label{eq:Fupperbound}
-\log F_{D,n}^{L,m} =
\begin{cases}
0 & \text{ if } m-n=O(1),\\
O(\min(m-n, n)) & \text{ if } m-n=\omega(1).
\end{cases}
\end{equation}

\subsection{Lower bound}
Next, we prove that $\exists L\in O(1), -\log F_{D,n}^{L,m} = \Omega(\min(m-n, n)/K^{5D})$. We consider the case of $m-n = \omega(1)$ to let $-\log F_{D,n}^{L,m} > 0$.

Recall that in Section~\ref{ssc:QLDPCprepare} in the main text, we introduce a deterministic scheme to prepare a QLDPC code state. Specifically, given a $[[t, Rt, \gamma t]]$ QLDPC code with sparsity $s$, we can prepare a logical state with $(1-R)t$ ancillary qubits and depth $O(s^2)$. Thanks to the existence of good QLDPC codes~\cite{Panteleev2022goodQLDPC}, for any constant $R\in (0, 1)$, there exists a $[[t, Rt, \gamma t]]$ QLDPC code with $\gamma$ also a constant. Hence, with $m-n$ ancillary qubits, one is able to realize a $t$-qubit QLDPC code state with $t = \frac{m-n}{1-R}$ and code distance $d=\gamma t = \frac{\gamma (m-n)}{1-R}$ proportional to $m-n$. Since the target is an $n$-qubit state, one can realize an $n$-qubit QLDPC state $\ket{\psi}$ with code distance $d = \Theta (n)$ when $m-n = \Omega(n)$. When $m-n=o(n)$, one can realize the tensor product of a $t$-qubit QLDPC code state $\ket{\phi}$ and $\ket{0^{n-t}}$, and we denote $\ket{\psi} = \ket{\phi}\ket{0^{n-t}}$ here.

Below, we present a lemma showing that $-\log F_{D,n}(\ket{\phi}\ket{0^{n-t}}) = \Omega(d/K^{5D})$ when $\ket{\phi}$ is a $t$-qubit QLDPC state with code distance $d$ and $D = O(\log \min(m-n, n))$. This result is a generalization of that in Ref.~\cite{bravyi2024entanglement} showing that $-\log F_{D,n}(\ket{\phi}) = \Omega(d)$ when $\ket{\phi}$ is an $n$-qubit QLDPC code state with code distance $d$. Our results allow the state to add additional ancillas.

\begin{lemma}\label{lemma:qlpdcpower}
Given a $t$-qubit, $s$-sparse, and $d$-distance QLDPC code, $\mathcal{C}$, for any logical state $\ket{\phi}\in \mathcal{C}$, we have
\begin{equation}
-\log F_{D,n}(\ket{\phi}\ket{0^{n-t}}) = \Omega (\frac{d}{K^{5D}}),
\end{equation}
as long as $d > (sK^D)^4$.
\end{lemma}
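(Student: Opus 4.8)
The plan is to upper-bound $F \equiv F_{D,n}(\ket{\phi}\otimes\ket{0^{n-t}})$ directly, aiming at $F \le \exp(-\Omega(d/K^{5D}))$. Fix the optimal depth-$D$ non-adaptive circuit $V$ and set $\ket{\chi}=V\ket{0^n}$, so that $F=|\braket{\phi,0^{n-t}}{\chi}|^2$. Three tools drive the argument. First, the \emph{lightcone factorization} of shallow states: since $\ket{0^n}$ is a product state and $V$ has $K$-bounded fan-in and depth $D$, any operators $A,B$ whose Heisenberg-evolved supports $\mathrm{supp}(V^\dagger A V)$ and $\mathrm{supp}(V^\dagger B V)$ are disjoint satisfy $\bra{\chi}AB\ket{\chi}=\bra{\chi}A\ket{\chi}\bra{\chi}B\ket{\chi}$; by Lemma~\ref{lemma:operatorweightgrowth} a single-qubit operator spreads to weight at most $K^D$, so the ``lightcone-overlap'' graph on the code qubits has bounded degree $O(K^{2D})$. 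Second, the distance-$d$ Knill--Laflamme condition, which forces every operator of weight below $d$ to act as a scalar on the code space $\mathcal{C}$. Third, the continuity bound preceding Theorem~\ref{thm:adalowerboundgenericcorr}: writing $\epsilon=1-F$, any $O$ with $\norm{O}\le 1$ obeys $|\bra{\chi}O\ket{\chi}-\bra{\phi,0^{n-t}}O\ket{\phi,0^{n-t}}|\le 6\sqrt{\epsilon}$.

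I would first dispose of the padding, since this is the only genuinely new ingredient relative to Ref.~\cite{bravyi2024entanglement}. The logical operators of $\mathcal{C}$ act as the identity on the $n-t$ ancillas, so a $Z$-type logical operator $\bar{Z}$ with $\bar{Z}\ket{\phi}=\ket{\phi}$ still stabilizes $\ket{\phi}\otimes\ket{0^{n-t}}$, and the weight-$<d$ (correctability) condition need only be tested on the first $t$ qubits, where the code lives. The ancillas therefore enter solely through the lightcones of $V$, which can now reach into them and enlarge the relevant spreading factors; tracking forward and backward lightcones through $V$ and $V^\dagger$ is what turns the clean $K^{D}$ of the geometric case into the $K^{5D}$ appearing in the statement. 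This step is pure lightcone bookkeeping and contributes no new conceptual difficulty.

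The core step manufactures the fidelity loss. Using cleaning together with the bounded-degree lightcone graph, I would extract $r=\Omega(d/K^{5D})$ ``witnesses'' built from cleaned logical operators (or from pieces of $\bar{Z}$ straddled by weight-$\le s$ checks), each of weight below $d$ as an operator on the first $t$ qubits, and pairwise lightcone-independent under $V$. Each witness $W_c$ has its code-space expectation $\bra{\phi,0^{n-t}}W_c\ket{\phi,0^{n-t}}$ pinned by Knill--Laflamme to a value incompatible, by a constant margin, with what a state factorizing across these regions can achieve; continuity then converts this into a constant fidelity deficit. Because the witnesses are lightcone-independent, $\ket{\chi}$ factorizes over them and the deficits compound multiplicatively, collapsing $F$ to $\exp(-\Omega(r))=\exp(-\Omega(d/K^{5D}))$. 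The hypothesis $d>(sK^D)^4$ is precisely what keeps all cleaned pieces and their lightcones inside the weight-$<d$ correctable regime while still leaving room for enough straddling checks to force the witnesses to have the ``wrong'' (vanishing or flipped) code expectation.

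The main obstacle is the combinatorial heart of this core step. In the all-to-all setting there is no geometric locality, so the family of witnesses must be produced entirely from the degree bound of the lightcone-overlap graph, and one must \emph{simultaneously} guarantee (a) pairwise lightcone-independence, so the shallow state factorizes over them; (b) each witness of weight below $d$, so Knill--Laflamme applies; and (c) a constant per-witness expectation margin that survives the conversion through the $6\sqrt{\epsilon}$ continuity bound, so that the individual deficits are genuinely bounded away from zero and compound into an exponential rather than a merely constant loss. Balancing these three constraints is exactly where the powers of $K$ accumulate to $K^{5D}$ and where the sparsity $s$ enters through the check weights; making the extraction quantitative and the compounding rigorous, rather than existential, is the part I expect to demand the most care.
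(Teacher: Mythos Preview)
Your proposal has a genuine gap at the ``core step.'' The mechanism you describe---extracting $r=\Omega(d/K^{5D})$ witnesses whose code-space expectations are pinned by Knill--Laflamme to values ``incompatible, by a constant margin'' with a factorizing state, then compounding the resulting fidelity deficits multiplicatively---does not work as stated. If a witness $W_c$ has weight below $d$, Knill--Laflamme fixes $\bra{\phi,0^{n-t}}W_c\ket{\phi,0^{n-t}}=c(W_c)$, but this scalar is in no way constrained to be unreachable by a shallow state: a product state can match any prescribed expectation of a bounded-weight operator. Conversely, if $W_c$ is a genuine logical operator (weight $\ge d$), there are only $O(1)$ independent such operators, not $\Omega(d/K^{5D})$. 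More fundamentally, the ``compounding'' is broken: continuity gives, for each witness, an \emph{additive} lower bound on $1-F$ (namely $6\sqrt{1-F}\ge$ deficit), and $r$ such bounds do not multiply into $F\le e^{-\Omega(r)}$. To get multiplicative decay you would need commuting projectors $\Pi_c$, each stabilizing $\ket{\phi}$, lightcone-independent on $\ket{\chi}$, \emph{and} each obeying $\bra{\chi}\Pi_c\ket{\chi}\le 1-\delta$; that last inequality is exactly what you have no tool to establish, since the natural candidates (stabilizer checks of weight $\le s$) are individually satisfiable by product states.

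The paper's proof goes by an entirely different route, adapting Ref.~\cite{bravyi2024entanglement}. One studies the computational-basis distribution $\Pr_{\phi_V}(x)=|\braket{x}{V^\dagger(\phi\otimes 0^{n-t})}|^2$: the conjugated checks have weight $\le sK^D$, so each bit of $x$ depends on at most $J=(sK^D)^2+(sK^D)^4$ others, yielding $\mathbb{E}[|x|]\le -(J{+}1)\log F$ together with a Chernoff-type tail (this is the paper's Lemma~\ref{lemma:qldpcdistribution}). The key step is a Bonferroni alternating-sum bound: $\Pr[|x|{=}0]$ is determined, up to an error $e^{-\Omega(d/K^D)}$, by the quantities $S_i=\Pr[|x|{=}i]$ for $i<d/K^D$. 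Each $S_i$ is the expectation on $\ket{\phi}$ of an operator of weight $\le iK^D<d$, hence identical for \emph{all} logical states by Knill--Laflamme. Since $\dim\mathcal{C}\ge 2$, some logical $\ket{\phi'}$ has $|\braket{0^n}{\phi'_V}|^2=0$, forcing $F=|\braket{0^n}{\phi_V}|^2$ to be exponentially small for every $\ket{\phi}$. The factor $K^{5D}$ arises as $J\sim K^{4D}$ in the denominator times one additional $K^D$ from the distance reduction $d\mapsto d/K^D$, not from iterated forward/backward lightcones through the ancilla block.
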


Note that Lemma~\ref{lemma:qlpdcpower} requires the code distance of $\ket{\phi}$ to be larger than $(sK^D)^4$, which can always be fulfilled with $D < \log_K d/4 - \log_K s$. The maximum value of $D$ gives the bound proportional to $s^4/K^D$, which cannot surpass a constant.

% Note that the maximum value of $D$ gives the bound proportional to $s^4/K^D$, which cannot surpass a constant. In this case, $-\log F_{D,n}^{L,m} = \Omega(1)$ holds automatically. Hence, we can express the results in Lemma~\ref{lemma:qlpdcpower} as
% \begin{equation}
% -\log F_{D,n}(\ket{\phi}\ket{0^{n-t}}) = \Omega (\max(\frac{d}{K^{5D}}, 1)),
% \end{equation}
% with no constraints between $d$ and $D$.

Utilizing Lemma~\ref{lemma:qlpdcpower} and the facts that one can prepare $\frac{m-n}{1-R}$-qubit $\frac{\gamma (m-n)}{1-R}$-distance QLDPC state with $m-n$ ancillary qubits and that the code distance of an $n$-qubit code is always smaller than $n$, we have $d = \min(m-n, n)$. Then, we prove the lower bound summarized as follows.

\begin{equation}
-\log F_{D,n}^{L,m} =
\begin{cases}
0 & \text{ if } m-n=O(1),\\
O(\min(m-n, n)/K^{5D}) & \text{ if } m-n=\omega(1) \text{ and } m - n > \frac{1-R}{\gamma}s^4K^{4D}.
\end{cases}
\end{equation}

Particularly, when $K$ and $D$ are both constant, the condition that $m-n = \omega(1)$ automatically implies $m - n > \frac{1-R}{\gamma}s^4K^{4D}$. We can simplify the above equation to
\begin{equation}\label{eq:Flowerbound}
-\log F_{D,n}^{L,m} =
\begin{cases}
0 & \text{ if } m-n=O(1),\\
O(\min(m-n, n)) & \text{ if } m-n=\omega(1).
\end{cases}
\end{equation}
Combining Eqs.~\eqref{eq:Fupperbound} and~\eqref{eq:Flowerbound} gives Theorem~\ref{thm:adaptpower}.

% Note that Lemma~\ref{lemma:qlpdcpower} requires the code distance of $\ket{\phi}$ to be larger than $s^4$, which can always be fulfilled. Recall that one can realize a QLDPC code $[[t, Rt, \gamma t]]$ with any $R\in (0, 1)$ and $\gamma \geq \frac{s^{-7/2}}{2}$~\cite{Panteleev2022goodQLDPC}. Thus, the code distance can achieve a value of $\frac{s^{-7/2} (m-n)}{2(1-R)}$. One just needs to make sure
% \begin{equation}
% \frac{(m-n)}{(1-R)} > 2s^{15/2}.
% \end{equation}
% When $m-n$ is a non-constant, the above equation can always be satisfied. When $m-n$ is a constant, we just need to choose $R > 1 - s^{-15/2}/2$.

Below, we present the proof of Lemma~\ref{lemma:qlpdcpower}. Our target is to get the lower bound of
\begin{equation}
-\log F_{D,n}(\ket{\phi}\ket{0^{n-t}}) = \min_{V:\mathrm{depth}(V)=D}  -\log \abs{\bra{\phi}\bra{0^{n-t}} V \ket{0^{n}}}^2.
\end{equation}
For further elaboration, we denote $\ket{\phi_V} = V^{\dagger} \ket{\phi}\ket{0^{n-t}}$. Since $\ket{\phi}$ is a $s$-sparse QLDPC code state, and $V$ has depth $D$, $\ket{\phi_V}$ is a $sK^D$-sparse QLDPC code state. Also, $\abs{\bra{\phi}\bra{0^{n-t}} V \ket{0^{n}}}^2 = \abs{\braket{0^n}{\phi_V}}^2$ is a computational-basis measurement outcome of $\ket{\phi_V}$. Denote
\begin{equation}
\Pr_{\phi_V}(x) = \abs{\braket{x}{\phi_V}}^2
\end{equation}
as the distribution of the computational-basis measurement outcomes of $\ket{\phi_V}$. To give an upper bound of $-\log \abs{ \bra{\phi} \bra{0^{n-t}} V \ket{0^{n}} }^2$, we will use three facts about $\Pr_{\phi_V}(x)$ as shown in the following lemma, which follows from Lemma~2 in Ref.~\cite{bravyi2024entanglement}.

\begin{lemma}\label{lemma:qldpcdistribution}
Suppose $d > (sK^D)^4$. Then for probability distribution $\Pr_{\phi_V}(x) = \abs{\braket{x}{\phi_V}}^2$, we have
\begin{enumerate}
\item\label{item:independence}
Each bit of $x$ is independent of all but at most $J$ other bits,
\item\label{item:bound}
$\mathbb{E}_{\phi_V}(\abs{x})\leq -(J+1)\log \abs{\braket{0^n}{\phi_V}}^2$.
\item\label{item:chernoff}
There exists a function $c(s)$ such that $\Pr_{\phi_V}(\abs{x}\geq t)\leq e^{\mathbb{E}_{\phi_V}(\abs{x})-t}$ for any $t\geq c(sK^D)\mathbb{E}_{\phi_V}(\abs{x})$.
\end{enumerate}
Here, $J = (sK^D)^2+(sK^D)^4$ is a constant; $\abs{x} = \sum_{i=1}^n x_i$ is the Hamming weight of bit string $x$.
\end{lemma}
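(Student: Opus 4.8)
The plan is to prove Lemma~\ref{lemma:qldpcdistribution} by reduction to the measurement-distribution structure theorem for sparse stabilizer code states (Lemma~2 of Ref.~\cite{bravyi2024entanglement}), supplying two new ingredients: controlling the effective sparsity after conjugation by the depth-$D$ circuit $V$, and absorbing the ancilla register $\ket{0^{n-t}}$ into the code so that the $n$-qubit state $\ket{\phi_V}=V^{\dagger}(\ket{\phi}\otimes\ket{0^{n-t}})$ falls within the scope of that theorem. Parts~2 and~3 then follow from Part~1 by standard graph-colouring and concentration arguments.

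First I would establish that the defining checks of $\ket{\phi_V}$ are $\sigma$-sparse with $\sigma=sK^D$. The pre-circuit state $\ket{\phi}\otimes\ket{0^{n-t}}$ is stabilized by the original $s$-sparse code checks on the first $t$ qubits together with the single-qubit operators $Z_j$ on each ancilla $j>t$; every such operator has weight at most $s$ and every qubit lies in at most $s$ of them. Conjugating a check $C$ by $V$, Lemma~\ref{lemma:operatorweightgrowth} gives $\wt(V^{\dagger}CV)\leq K^{D}\wt(C)\leq sK^{D}$. For the per-qubit degree, $V^{\dagger}CV$ can act on qubit $i$ only if the forward lightcone of $i$ under $V$, of size at most $K^{D}$, meets $\mathrm{supp}(C)$; since each qubit lies in at most $s$ original checks, at most $sK^{D}$ transformed checks touch $i$. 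Hence the checks of $\ket{\phi_V}$ are $\sigma$-sparse on $n$ qubits. The dense logical generators that fix the particular state $\ket{\phi}$ are immaterial here: they only select the affine shift (coset) of the $Z$-basis distribution, not its dependency structure.

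With $\sigma$-sparsity in hand I would invoke, for Part~1, the fact that the Born distribution $\Pr_{\phi_V}$ of $Z$-basis outcomes of a $\sigma$-sparse code state is a Markov random field whose interaction graph $G$ joins two qubits whenever a sparse check contains both, so that conditional independence holds for qubits farther than graph-distance two. Counting, each qubit shares a check with at most $\sigma^2$ others (at most $\sigma$ checks, each of weight at most $\sigma$), and iterating once more contributes at most $\sigma^4$, so the dependency neighborhood has size at most $J=\sigma^2+\sigma^4$. Part~2 I would then obtain by a colouring argument: since $G$ has maximum degree $J$ it is $(J+1)$-colourable, and within each colour class $C$ the bits $\{x_i\}_{i\in C}$ are mutually independent. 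Using $\Pr(x_i=1)=1-\Pr(x_i=0)\leq-\log\Pr(x_i=0)$ together with independence within $C$,
\begin{equation}
\mathbb{E}\Big[\sum_{i\in C}x_i\Big]\leq-\sum_{i\in C}\log\Pr(x_i=0)=-\log\Pr(x_C=0)\leq-\log\Pr(0^n),
\end{equation}
where the final step is monotonicity of the all-zero marginal. Summing over the $J+1$ colour classes yields $\mathbb{E}_{\phi_V}(\abs{x})\leq-(J+1)\log\abs{\braket{0^n}{\phi_V}}^2$. For Part~3 I would reuse the same colouring to write $\abs{x}=\sum_C\abs{x}_C$ with each $\abs{x}_C$ a sum of independent Bernoullis, apply a Chernoff estimate per class, and combine across the $J+1$ classes by Hölder, which yields $\Pr(\abs{x}\geq t)\leq e^{\mathbb{E}(\abs{x})-t}$ once $t\geq c(\sigma)\mathbb{E}(\abs{x})$, the overhead $c(\sigma)$ absorbing the number of colours.

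The hard part will be Part~1: rigorously certifying the conditional-independence structure of the $Z$-basis distribution for a general, possibly non-CSS, sparse-check code state and pinning the neighborhood to $\sigma^2+\sigma^4$. The delicacy is that the diagonal ($Z$-type) subgroup of the stabilizer group, which fixes the affine support of $\Pr_{\phi_V}$, need not have sparse generators, so locality cannot be read directly off sparse $Z$-checks; one must instead argue at the level of the interaction graph of the full sparse check set, which is precisely where the second, $\sigma^4$-order term appears. This is exactly the content imported from Lemma~2 of Ref.~\cite{bravyi2024entanglement}; the residual work is verifying that its hypotheses survive the substitution $\sigma=sK^D$ and the ancilla padding, and that the stated threshold $d>(sK^D)^4$ does its job, namely it forces every dependency neighborhood (of size at most $J\approx\sigma^4$) to be smaller than the code distance, so that no nontrivial logical operator is supported within a neighborhood and the local Markov structure is exact.
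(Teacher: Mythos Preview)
Your approach is correct and coincides with the paper's: the paper does not give an independent proof of this lemma but simply notes that it ``follows from Lemma~2 in Ref.~\cite{bravyi2024entanglement},'' and your proposal spells out precisely that reduction---padding by the ancilla checks $Z_j$, conjugating by $V$ to obtain an $sK^D$-sparse check family, and then invoking the dependency-graph/colouring/Chernoff machinery from that reference. The additional detail you supply (the lightcone counting for per-qubit degree, the $(J+1)$-colouring bound on $\mathbb{E}|x|$, and the role of the threshold $d>(sK^D)^4$ in keeping dependency neighborhoods below the code distance) is exactly the content of Bravyi et al.'s argument specialized to sparsity $\sigma=sK^D$, so there is no divergence in method.
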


With Lemma~\ref{lemma:qldpcdistribution}, we prove Lemma~\ref{lemma:qlpdcpower}. Denote $A = -(J+1)\log \abs{\braket{0^n}{\phi_V}}^2$. If $d < c(sK^D) AK^D$, then $-\log \abs{\braket{0^n}{\phi_V}}^2 > \frac{d}{(J+1)K^Dc(sK^D)}$, and the results hold. So, we only need to consider the case of $d \geq c(sK^D)AK^D$. Define a quantity of $S_i = \Pr[\abs{x}=i]$,
\begin{equation}
S_i = \sum_{1\leq p_1 < p_2 < \cdots < p_i\leq n} \tr{ V\ketbra{1_{p_1}1_{p_2}\cdots 1_{p_i}}V^{\dagger} (\ketbra{\phi}\otimes\ketbra{0^{n-t}}) },
\end{equation}
which is the probability of the weight of the measurement result being $i$. Note that $S_i$ is the same for all logical states $\ket{\phi}\in \mathcal{C}$ as long as $i < \frac{d}{K^D}$. This can be derived from the following expression of $S_i$.
\begin{align}
S_i &= \sum_{1\leq p_1 < p_2 < \cdots < p_i\leq n} \tr_{\phi}(O_{p_1,p_2,\cdots, p_i} \ketbra{\phi});\\
O_{p_1,p_2,\cdots, p_i} &= \tr_{\Bar{\phi}}(V\ketbra{1_{p_1}1_{p_2}\cdots 1_{p_i}}V^{\dagger} \ketbra{0^{n-t}}),
\end{align}
where $\tr_{\phi}$ and $\tr_{\Bar{\phi}}$ means the trace over the space of the support of $\ket{\phi}$ and the trace over the space outside the support of $\ket{\phi}$, respectively. Since the depth of $V$ is $D$, based on Lemma~\ref{lemma:operatorweightgrowth}, we can deduce $\wt(O_{p_1,p_2,\cdots, p_i})\leq iK^D$. Thus, $i < \frac{d}{K^D}$ implies that $\wt(O_{p_1,p_2,\cdots, p_i}) < d$. Note that due to the local indistinguishability property of logical states, all $d-1$-local density matrices of the logical states within code space $\mathcal{C}$ of code distance $d$ are the same~\cite{bravyi2024entanglement}. As a result, $S_i$ would be the same for all logical states $\ket{\phi}\in \mathcal{C}$ for $i < \frac{d}{K^D}$.

Denote $d_K^D=\frac{d}{K^D}$. Note that based on Bonferroni's inequality, we have~\cite{bravyi2024entanglement}
\begin{equation}
\begin{split}
&(-1)^{d_K^D-1}(\Pr[\abs{x} > 0] - \sum_{i=1}^{d_K^D-1}S_i)\\
=&\sum_{t = d_K^D}^n \binom{t-1}{d_K^D-1}\Pr[\abs{x}=t]\\
=&\sum_{t = d_K^D}^n \binom{t-2}{d_K^D-2}\Pr[\abs{x}\geq t]\\
\leq &\sum_{t = d_K^D}^{\infty} \binom{t-2}{d_K^D-2}e^{A-t}\\
=&e^{A-1}(e-1)^{1-d_K^D}.
\end{split}
\end{equation}
The inequality in the fourth line utilizes conditions~\ref{item:bound} and~\ref{item:chernoff}. The requirement $t\geq c(sK^D)\mathbb{E}(\abs{x})$ in condition~\ref{item:chernoff} satisfies as $t\geq d_K^D = \frac{d}{K^D}\geq c(sK^D)A$ and $A\geq \mathbb{E}(\abs{x})$. Substituting $\abs{\braket{0^n}{\phi_V}}^2 = \Pr[\abs{x}=0] = 1-\Pr[\abs{x} > 0]$ to the above equation, we get
\begin{equation}
1-\sum_{i=1}^{d_K^D-1}S_i-e^{A-1}(e-1)^{1-d_K^D}\leq \abs{\braket{0^n}{\phi_V}}^2\leq 1-\sum_{i=1}^{d_K^D-1}S_i+e^{A-1}(e-1)^{1-d_K^D}.
\end{equation}
Note that the above inequality holds for all logical states $\ket{\phi}$, and the left side and the right side are independent of the specific logical state. Since $\dim \mathcal{C} = 2^{t} \geq 2$, there exists logical state $\ket{\phi'}$ such that $\abs{\braket{0^n}{\phi'_V}}^2 = 0$. As a result, we obtain
\begin{equation}
-e^{A-1}(e-1)^{1-d_K^D}\leq 1-\sum_{i=1}^{d_K^D-1}S_i\leq e^{A-1}(e-1)^{1-d_K^D}.
\end{equation}
Thus,
\begin{equation}
\abs{\braket{0^n}{\phi_V}}^2\leq 2e^{A-1}(e-1)^{1-d_K^D}.
\end{equation}
Equivalently,
\begin{equation}
-\log \abs{\braket{0^n}{\phi_V}}^2 \geq \frac{1-\log2+(d_K^D-1)\log(e-1)}{J+2}.
\end{equation}
In summary, we get
\begin{equation}
\begin{split}
-\log \abs{\braket{0^n}{\phi_V}}^2 &\geq \min(\frac{1-\log2+(\frac{d}{K^D}-1)\log(e-1)}{J+2}, \frac{d}{(J+1)K^Dc(s)})\\
&= \min(\frac{1-\log2+(\frac{d}{K^D}-1)\log(e-1)}{(sK^D)^2+(sK^D)^4+2}, \frac{d}{((sK^D)^2+(sK^D)^4+1)K^Dc(s)}).
\end{split}
\end{equation}
Since $s=O(1)$, we obtain the result of Lemma~\ref{lemma:qlpdcpower}, $-\log F_{D,n}(\ket{\phi}\ket{0^{n-t}}) = \Omega(\frac{d}{K^{5D}})$.

Note that $d = \Theta(n)$ for good QLDPC codes. Based on the above derivation, we obtain that there exists a small constant $c$ such that if $D\leq c\log n$, the term $-\log \abs{\braket{0^n}{\phi_V}}^2$ of the $n$-qubit good QLDPC code state will be larger than a positive constant. This implies that preparing the $n$-qubit good QLDPC code state above a certain constant fidelity threshold requires a circuit lower bound $\Omega(\log n)$.

% \begin{equation}
% F_{D,n}(\psi) = \max_{V:\mathrm{depth}(V)=D} \abs{\bra{\psi}V\ket{0^n}}^2,
% \end{equation}
% where $V$ is an $n$-qubit unitary realized in a $D$-depth circuit. Here, we consider $D$ to be a constant and investigate the distance of $\ket{\psi}$ to all trivial states, the states preparable in constant-depth non-adaptive circuits. Clearly, not all states prepared by shallow adaptive circuits are non-trivial states. To see the power of the adaptive circuit, one can further evaluate the minimum value of $F_{D,n}(\psi)$ over all quantum state $\ket{\psi}$ prepared by the depth-$L$ adaptive quantum circuit with $m$ initial qubits. That is,
% \begin{equation}
% F_{D,n}^{L,m} = \min_{\substack{\ket{\psi}=(\bra{\mathbf{s}}\otimes \id_n) V\ket{0^m}\\ \mathrm{depth}(\bra{\mathbf{s}}\otimes \id_n) V)=L,\mathbf{s}\in \{0,1\}^{m-n}}} F_{D,n}(\psi).
% \end{equation}

\subsection{Proof of Lemma~\ref{lemma:correlation}}\label{appendssc:thmpf:correlation}
Suppose the anti-shallowness of $\ket{\psi}$ is given by $-\log f$, meaning that the maximal fidelity between $\ket{\psi}$ and any state $\ket{\phi}$ prepared by a shallow non-adaptive circuit is $f$. The difference in their correlation functions satisfies
\begin{equation}
\begin{aligned}
&\left|\mathrm{Cor}(O_i,O_j,\ket{\psi})-\mathrm{Cor}(O_i,O_j,\ket{\phi})\right|\\
\leq&\left|\bra{\psi}O_1O_2\ket{\psi}-\bra{\phi}O_1O_2\ket{\phi}\right|+\left|\bra{\psi}O_1\ket{\psi}\bra{\psi}O_2\ket{\psi}-\bra{\phi}O_1\ket{\phi}\bra{\phi}O_2\ket{\phi}\right|\\
\leq&\left|\bra{\psi}O_1O_2\ket{\psi}-\bra{\phi}O_1O_2\ket{\phi}\right|+\left|\bra{\psi}O_1\ket{\psi}\right|\left|\bra{\psi}O_2\ket{\psi}-\bra{\phi}O_2\ket{\phi}\right|+\left|\bra{\phi}O_2\ket{\phi}\right|\left|\bra{\psi}O_1\ket{\psi}-\bra{\phi}O_1\ket{\phi}\right|\\
\leq&2\sqrt{1-f}+4\sqrt{1-f}=6\sqrt{1-f}.
\end{aligned}
\end{equation}

In the final step, we use the following bound, derived via H$\ddot{\text{o}}$lder's inequality:
\begin{equation}\label{eq:Holdereq}
\begin{aligned}
\left|\bra{\psi}O\ket{\psi}-\bra{\phi}O\ket{\phi}\right|&\leq\tr(\left|O\left(\ketbra{\psi}-\ketbra{\phi}\right)\right|)\\
&\leq \norm{O}_{\infty}\cdot\norm{\ketbra{\psi}-\ketbra{\phi}}_1\\
&\leq \norm{O}_{\infty}\cdot 2\sqrt{1-f},
\end{aligned}
\end{equation}
for any operator $O$, where $\norm{\cdot}_{\infty}$ and $\norm{\cdot}_{1}$ are the Schatten infinity norm and trace norm, respectively. Note that we have $\norm{O_i}_{\infty}, \norm{O_j}_{\infty} \leq 1$.

Since $\ket{\phi}$ can be prepared by a shallow non-adaptive circuit, $(m-n+1)K^{2L-1}$ is a constant. Thus, as shown in the proof of Theorem~\ref{thm:adalowerboundgenericcorr} in Appendix~\ref{appendssc:prop2pf:permutation-invariant}, there must be two qubits labeled by $i$ and $j$ such that $\mathrm{Cor}(O_i,O_j,\ket{\phi})=0$ for any $O_i$ acting on $i$ and $O_j$ acting on $j$. Thus, the corresponding correlation function on $\ket{\psi}$ can be bounded: $\left|\mathrm{Cor}(O_i,O_j,\ket{\psi})\right|\leq 6\sqrt{1-f}$. By definition, the global correlation should also satisfies $\mathrm{Cor}(\ket{\psi})\leq 6\sqrt{1-f}$, which solves $f\leq 1-\frac{1}{36}\mathrm{Cor}(\ket{\psi})^2$ and
\begin{equation}
-\log F_{D,n}(\psi)\geq -\log \left(1-\frac{1}{36}\mathrm{Cor}(\ket{\psi})^2\right).
\end{equation}
As a remark, substituting global correlation $\mathrm{Cor}(\ket{\psi})$ with correlation strength $\mathrm{Cor}_w^{[n]}(\ket{\psi})$ where $w = o(n)$ still makes the above inequality hold. For simplicity, we

\section{Adaptive QLDPC code state preparation}\label{appendsc:qlpdc}

\subsection{Proof of Lemma~\ref{lemma:qldpcmeasure}}\label{appendsc:lemmapf:qldpcmeasure}
The proof of Lemma~\ref{lemma:qldpcmeasure} is given by the construction of the measurement circuit. Recall that the stabilizer $S_j$ is a tensor product of Pauli operators, $S_j = \bigotimes_{i=1}^n P_i$, where $P_i\in \{\id, X, Y, Z\}$. One can use an ancillary qubit and controlled Pauli operations between the ancillary qubit and the data qubits to measure observable $S_j$, as shown in Fig.~\ref{fig:Paulimeasure}. The depth of the measurement circuit equals $\wt(S_j)+1$, including the contribution of both gates and measurements. Note that the depth of the Hadamard gate can be ignored by merging it into the controlled Pauli operation. Also, the order of the controlled Pauli operations does not matter in this case.

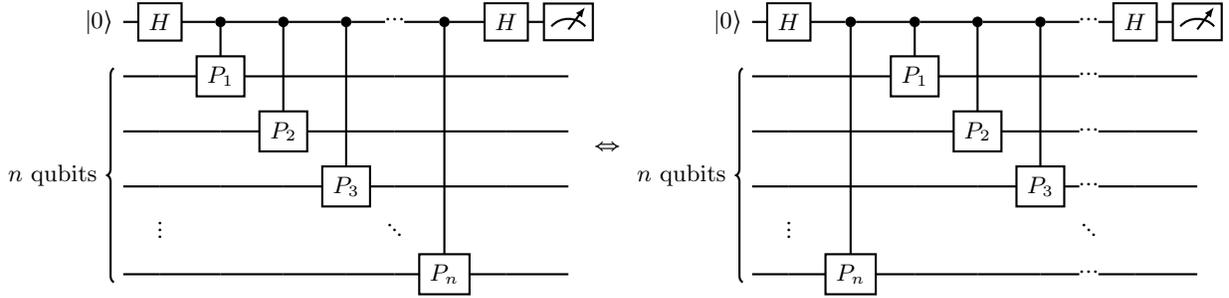
\begin{figure}[!htbp]
 \centering
  \begin{quantikz}[row sep=0.2cm, column sep=0.2cm]
  \lstick{$\ket{0}$} & \gate{H} & \ctrl{1} & \ctrl{2}& \ctrl{3} & \qw \cdots & \ctrl{5}  & \gate{H} & \meter{} \\
  \lstick[wires=5]{$n$ qubits} & \qw & \gate{P_1} & \qw & \qw & \qw & \qw  & \qw & \qw \\
  \lstick{} & \qw & \qw  & \gate{P_2} & \qw & \qw & \qw & \qw & \qw \\
  \lstick{} & \qw & \qw  & \qw & \gate{P_3} & \qw & \qw & \qw & \qw \\
  \lstick{} & \vdots &  &  &  & \ddots &  &  & \\
  \lstick{} & \qw & \qw & \qw & \qw & \qw & \gate{P_n} & \qw & \qw \\
  \end{quantikz}
  $\Leftrightarrow$
  \begin{quantikz}[row sep=0.2cm, column sep=0.2cm]
  \lstick{$\ket{0}$} & \gate{H} & \ctrl{5} & \ctrl{1}& \ctrl{2} & \ctrl{3} & \qw \cdots & \gate{H} & \meter{} \\
  \lstick[wires=5]{$n$ qubits} & \qw & \qw & \gate{P_1} & \qw & \qw & \qw  \cdots & \qw & \qw \\
  \lstick{} & \qw & \qw & \qw  & \gate{P_2} & \qw & \qw \cdots & \qw & \qw \\
  \lstick{} & \qw & \qw & \qw  & \qw & \gate{P_3} & \qw \cdots & \qw & \qw \\
  \lstick{} & \vdots &  &  &  &  & \ddots &  & \\
  \lstick{} & \qw & \gate{P_n} & \qw & \qw & \qw & \qw \cdots & \qw & \qw \\
  \end{quantikz}
\caption{The circuit to measure a Pauli observable. One first prepares state $\ket{+}$ by adding a Hadamard gate, $H$, to the ancillary qubit. Then, a sequence of controlled Pauli operations between the ancillary qubit and data qubits is applied. Finally, one measures on $X$ basis with a Hadamard gate and computational-basis measurement. The final measurement result can be viewed as the outcome of measuring the $n$-qubit state with observable $\bigotimes_{i=1}^n P_i$. Note that if $P_i$ equals $\id$, then the controlled-$P_i$ operation is just identity and does not need to be realized. Note that when measuring a single Pauli operator, the order of the controlled Pauli operations does not matter, as shown by the two equivalent circuits.}
\label{fig:Paulimeasure}
\end{figure}

To measure two Pauli observables, we can use two ancillary qubits and implement two circuits like Fig.~\ref{fig:Paulimeasure} sequentially. Nonetheless, in this sense, the circuit depth will be proportional to the number of measured Pauli observables. If we try to measure them simultaneously at a short depth, a strategy is to combine the two measurement circuits and arrange the controlled Pauli operations parallelly. We depict an example of measuring $XXXX$ and $ZZZZ$ simultaneously within depth $5$ in Fig.~\ref{fig:simulPaulimeasure}.

\begin{figure}[!htbp]
 \centering
  \begin{quantikz}[row sep=0.2cm, column sep=0.2cm]
  \lstick{$\ket{0}$} & \gate{H} & \ctrl{2} & \qw  & \ctrl{3} & \qw  & \ctrl{4} & \qw & \ctrl{5} & \qw & \gate{H} & \meter{} \\
  \lstick{$\ket{0}$} & \gate{H} & \qw  & \ctrl{3} & \qw  & \ctrl{4} & \qw  & \ctrl{1} & \qw & \ctrl{2} & \gate{H} & \meter{} \\
  \lstick{} & \qw & \targ{} & \qw & \qw & \qw & \qw  & \ctrl{} & \qw & \qw & \qw & \qw \\
  \lstick{} & \qw & \qw  & \qw & \targ{} & \qw & \qw & \qw & \qw & \ctrl{} & \qw & \qw \\
  \lstick{} & \qw & \qw  & \ctrl{} & \qw & \qw & \targ{} & \qw & \qw & \qw & \qw & \qw \\
  \lstick{} & \qw & \qw & \qw & \qw & \ctrl{} & \qw & \qw & \targ{} & \qw & \qw & \qw \\
  \end{quantikz}
  \begin{tikzpicture}[overlay]
  \node[fit={(-4.79,1.6) (-4.39,-1.6)}, draw=red, dashed, line width=0.4mm, rounded corners, inner sep=0.2cm, label=above:{\textbf{\red{One depth}}}] {};
  \end{tikzpicture}
  $\Leftrightarrow$
  \begin{quantikz}[row sep=0.2cm, column sep=0.2cm]
  \lstick{$\ket{0}$} & \gate{H} & \ctrl{2} & \ctrl{3} & \ctrl{4} & \ctrl{5} & \qw & \qw & \qw & \qw & \gate{H} & \meter{} \\
  \lstick{$\ket{0}$} & \gate{H} & \qw & \qw & \qw & \qw & \ctrl{1} & \ctrl{2} & \ctrl{3} & \ctrl{4} & \gate{H} & \meter{} \\
  \lstick{} & \qw & \targ{} & \qw & \qw & \qw & \ctrl{}  & \qw & \qw & \qw & \qw & \qw \\
  \lstick{} & \qw & \qw  & \targ{} & \qw & \qw & \qw & \ctrl{} & \qw & \qw & \qw & \qw \\
  \lstick{} & \qw & \qw  & \qw & \targ{} & \qw & \qw & \qw & \ctrl{} & \qw & \qw & \qw \\
  \lstick{} & \qw & \qw & \qw & \qw & \targ{} & \qw & \qw & \qw & \ctrl{} & \qw & \qw \\
  \end{quantikz}
\caption{A depth-$5$ circuit to measure Pauli observable $XXXX$ and $ZZZZ$ simultaneously. One initializes $\ket{+}$ for each ancillary qubit. Then, the controlled Pauli operations between the ancillary qubits and data qubits are applied parallelly. For instance, the CNOT gate between the first ancillary qubit and the first data qubit and the CZ gate between the second ancillary qubit and the third data qubit can be implemented in one depth. Finally, one measures both ancillary qubits on $X$ basis.}
\label{fig:simulPaulimeasure}
\end{figure}
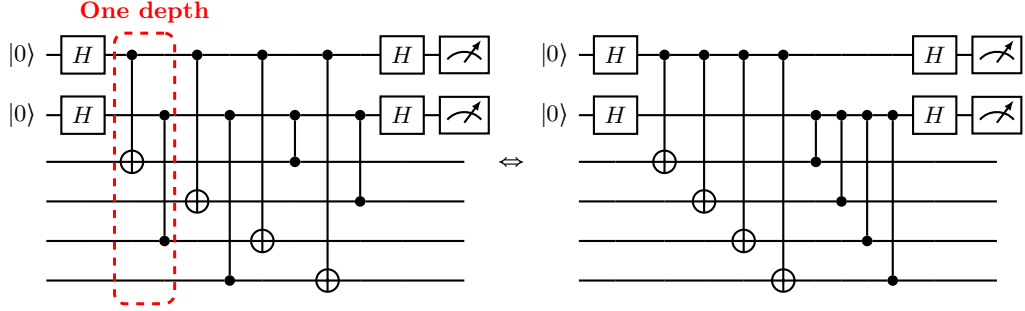

Note that there are different ways to arrange the controlled Pauli operations. To ensure that in one step, different ancillary qubits are all applied with controlled Pauli operations, the only requirement is that the data qubits associated with different ancillary qubits are different. Nonetheless, not all arrangements would make the circuit equal to the sequential measurement of two Pauli observables~\cite{Geher2024Tangling}, as demonstrated in Fig.~\ref{fig:simulPaulimeasureNocommute}.

\begin{figure}[!htbp]
 \centering
  \begin{quantikz}[row sep=0.2cm, column sep=0.2cm]
  \lstick{$\ket{0}$} & \gate{H} & \qw & \ctrl{2} & \qw  & \ctrl{3} & \qw  & \ctrl{4} & \qw & \ctrl{5} & \gate{H} & \meter{} \\
  \lstick{$\ket{0}$} & \gate{H} & \ctrl{4} & \qw  & \ctrl{1} & \qw  & \ctrl{2} & \qw  & \ctrl{3} & \qw & \gate{H} & \meter{} \\
  \lstick{} & \qw & \qw & \targ{} & \ctrl{} & \qw & \qw & \qw  & \qw & \qw & \qw & \qw \\
  \lstick{} & \qw & \qw & \qw  & \qw & \targ{} & \ctrl{} & \qw & \qw & \qw & \qw & \qw \\
  \lstick{} & \qw & \qw & \qw  & \qw & \qw & \qw & \targ{} & \ctrl{} & \qw & \qw & \qw \\
  \lstick{} & \qw & \ctrl{} & \qw & \qw & \qw & \qw & \qw & \qw & \targ{} & \qw & \qw \\
  \end{quantikz}
  \begin{tikzpicture}[overlay]
  \node[fit={(-4.79,1.6) (-4.39,-1.6)}, draw=red, dashed, line width=0.4mm, rounded corners, inner sep=0.2cm, label=above:{\textbf{\red{One depth}}}] {};
  \end{tikzpicture}
  $\Leftrightarrow$
  \begin{quantikz}[row sep=0.2cm, column sep=0.2cm]
  \lstick{$\ket{0}$} & \gate{H} & \ctrl{2} & \ctrl{3} & \ctrl{4} & \ctrl{5} & \qw & \qw & \qw & \qw & \ctrl{1} & \gate{H} & \meter{} \\
  \lstick{$\ket{0}$} & \gate{H} & \qw & \qw & \qw & \qw & \ctrl{1} & \ctrl{2} & \ctrl{3} & \ctrl{4} & \ctrl{} & \gate{H} & \meter{} \\
  \lstick{} & \qw & \targ{} & \qw & \qw & \qw & \ctrl{}  & \qw & \qw & \qw & \qw & \qw \\
  \lstick{} & \qw & \qw  & \targ{} & \qw & \qw & \qw & \ctrl{} & \qw & \qw & \qw & \qw \\
  \lstick{} & \qw & \qw  & \qw & \targ{} & \qw & \qw & \qw & \ctrl{} & \qw & \qw & \qw \\
  \lstick{} & \qw & \qw & \qw & \qw & \targ{} & \qw & \qw & \qw & \ctrl{} & \qw & \qw \\
  \end{quantikz}
  \begin{tikzpicture}[overlay]
  \node[fit={(-1.73,1.6) (-1.715,0.7)}, draw=blue, dashed, line width=0.4mm, rounded corners, inner sep=0.2cm, label=above:{\textbf{\blue{Additional operation}}}] {};
  \end{tikzpicture}
\caption{A circuit equivalence to depict the influence of the gate order on the total operation of the circuit. One has to implement an additional CZ gate to make the circuit measure the two Pauli observables simultaneously.}
\label{fig:simulPaulimeasureNocommute}
\end{figure}
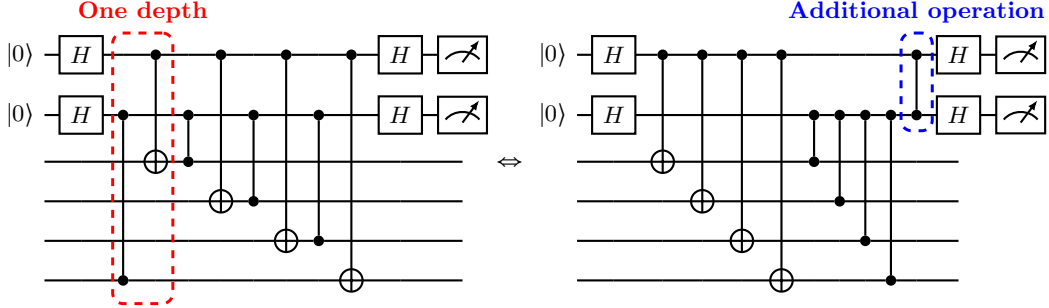

Denote $C_jP_i$ as the controlled-$P$ operation between the $i$-th ancillary qubit and $j$-th data qubit. Notice that $C_1X_4$ and $C_2Z_4$ both need to be implemented in the circuit, and the two gates do not commute. The order in which they are implemented influences the total operation of the circuit. The same is true for other non-commutative $C_jP_i$ operations. For ancillary qubits 1 and 2, one can find all the non-commutative controlled Pauli operation pairs $\{(C_1P_{i1}, C_2P_{i2})\}$. If the number of non-commutative pairs in which $C_1P_{i1}$ are implemented before $C_2P_{i2}$ is even, then the total operation of the circuit equals the sequential measurement of two Pauli observables. Otherwise, there will be an additional CZ gate acting on the two ancillary qubits. For further elaboration, we call the two ancillary qubits to be tangled. In this case, we have to apply an additional CZ gate as a correction before the $X$-basis measurement. The circuit depth will be 6 in total. Note that the additional operation acting on the ancillary qubit will always be identity or CZ gate due to the following commutation rule~\cite{Geher2024Tangling}.
\begin{equation}
[C_iX_k, C_jY_k] = [C_iX_k, C_jZ_k] = [C_iY_k, C_jZ_k] = C_iZ_j.
\end{equation}

Now, we turn to the case of measuring observables in $\mathcal{S} = \{S_1, S_2, \cdots, S_t\}$ simultaneously. We also use the idea of parallelizing the measurement circuit of each stabilizer and implementing additional CZ gates. We will use the definition of the Tanner graph of a set of $n$-qubit operators $\mathcal{O} = \{O_1, O_2, \cdots, O_t\}$ shown below.

\begin{definition}
The Tanner graph $G_{\mathcal{O}}$ of a set of $n$-qubit operators $\mathcal{O} = \{O_1, O_2, \cdots, O_t\}$ is a bipartite graph $G_{\mathcal{C}}=(U, V, E)$ defined as follows: The vertex sets $U=\{u_i\}_{i=1}^n$, $V=\{v_j\}_{j=1}^{t}$ represent the qubits and the operators $\{O_1, O_2, \cdots, O_{t}\}$ respectively, and the edge set $E=\{e_{ij}\}$ contains an edge $e_{ij}=(u_i,v_j)$ iff $i\in \mathrm{supp}(O_j)$.
\end{definition}

With the definition of the Tanner graph, each stabilizer $S_j$ has an expression, $S_j = \bigotimes_{e_{ij}\in E} S_j(i)$. Note that $\wt(S_i)\leq s$. For $\mathcal{S} = \{S_1, S_2, \cdots, S_t\}$, the maximum degree of Tanner graph $G_{\mathcal{S}}$, denoted by $\Delta(G_{\mathcal{S}})$, is no larger than the sparsity $s$. Since $G_{\mathcal{S}}$ is a bipartite graph, its edge chromatic number $\chi_E(G_{\mathcal{S}})=\Delta(G_{\mathcal{S}})=s$. Therefore, one can divide the edge set $E$ into $\chi_E(G_{\mathcal{S}})$ mutually disjoint matchings, $E=\sum_{l=1}^{\chi_E(G_{\mathcal{S}})}E_l$. The measurement circuit is defined as follows.

\begin{enumerate}
\item Denote the input $n$-qubit state as $\ket{\phi}$. One first initializes $t$-qubit state as $\ket{0^t}$ on the ancillary system. Adopt a layer of $H$ gates on each ancillary qubit. The index of the ancillary qubits is labeled as $\{1, 2, \cdots, t\}$ corresponding to the stabilizer set $\{S_1, S_2, \cdots, S_t\}$.
\item Recall that $E=\sum_{l=1}^{\chi_E(G_{\mathcal{S}})}E_l$. In depth $l$, apply a controlled-$S_j(i)$ gate on the $j$-th ancillary qubit and the $i$-th qubit for all $e_{ij}\in E_l$. Since $E_l$ is a matching of $G_{\mathcal{C}}$, these gates will not intersect with each other and can be implemented simultaneously in one depth. This procedure is realized from depth 1 to $\chi_E(G_{\mathcal{C}})=s$.
\item Define a new graph $G_T = (U, E')$ with the vertex set $U$ containing all ancillary qubits. For each pair of ancillary qubits $(i,j)$, determine whether the two qubits are tangled based on the circuit in the previous step. If they are tangled, then the edge $e'_{ij} = (u_i, u_j)$ is contained in the edge set $E'$. Note that the edge set $E'$ of $G_T$ represents all additional CZ gates that have to be implemented to make the whole circuit measure $\mathcal{S} = \{S_1, S_2, \cdots, S_t\}$ simultaneously. Since $\wt(S_j)\leq s$ and $\# S^i\leq s$, the maximum degree of $G_T$, $\Delta(G_T)\leq s^2$. The chromatic number $\chi_{E}(G_{T})\leq \Delta(G_{\mathcal{S}})+1\leq s^2+1$. Therefore, one can divide the edge set $E'$ into $\chi_E(G_{T})$ mutually disjoint matching, $E' = \sum_{l'=1}^{\chi_E(G_{T})} E'_{l'}$.
\item Applying additional CZ gates among ancillary qubits according to the mutually disjoint matching, $E' = \sum_{l'=1}^{\chi_E(G_{T})} E'_{l'}$. In depth $l'$, apply CZ gates between $i$-th and $j$-th ancillary qubits for all $e'_{ij}\in E'_{l'}$. This procedure is realized from depth $s+1$ to $s+\chi_E(G_{T})$.
\item Adopt a layer of $H$ gates on the $t$ ancillary qubits and measure them on the $Z$ basis.
\end{enumerate}
Considering the contribution of all gates and measurements, the total depth of the circuit is no larger than $2+s+s^2$, which finishes the proof.

\subsection{Complexity of decoding Pauli correction operations}\label{appendsc:decode}
Provided with $n$ independent and mutually commutative Pauli operators, $\{S_1, S_2, \cdots, S_n\}$ and divide them into two sets $\mathcal{S}^+$ and $\mathcal{S}^-$. Our first task is to find a Pauli operator $P_c$ satisfying the following conditions,
\begin{equation}\label{eq:commutation}
\begin{split}
&\forall S^+\in \mathcal{S}^+, [P_c, S^+] = 0 \Leftrightarrow P_cS^+P_cS^+=\id;\\
&\forall S^-\in \mathcal{S}^-, \{P_c, S^-\} = 0 \Leftrightarrow P_cS^-P_cS^-=-\id.
\end{split}
\end{equation}
For simplicity, we denote each stabilizer $S_i = a_i\prod_{j}X_j^{x^i_j}Z_j^{z^i_j}$ where $a_i\in \{\pm 1, \pm i\}$, $x^i_j, z^i_j\in \{0, 1\}$, and $X_j$ and $Z_j$ are Pauli $X$ and $Z$ operators on qubit $j$, respectively. Similarly, we denote $P_c = a \prod_{j}X_j^{x_j}Z_j^{z_j}$ where $a\in \{\pm 1, \pm i\}$, $x_j, z_j\in \{0, 1\}$. Then,
\begin{equation}
\begin{split}
[P_c, S_i] &= 0 \Leftrightarrow \bigoplus_{j} (z_j^i x_j\oplus x_j^i z_j) = 0;\\
\{P_c, S_i\} &= 0 \Leftrightarrow \bigoplus_{j} (z_j^i x_j\oplus x_j^i z_j) = 1.
\end{split}
\end{equation}
Thus, Eq.~\eqref{eq:commutation} is equivalent to a binary linear equation of $(x_j, z_j)$ over $\{0, 1\}^{2n}$:
\begin{equation}
\begin{pmatrix}
z^1_1 & z^1_2 & \cdots & z^1_n & x^1_1 & x^1_2 & \cdots & x^1_n\\
z^2_1 & z^2_2 & \cdots & z^2_n & x^2_1 & x^2_2 & \cdots & x^2_n\\
\vdots & \vdots & \ddots & \vdots & \vdots & \vdots & \ddots & \vdots\\
z^n_1 & z^n_2 & \cdots & z^n_n & x^n_1 & x^n_2 & \cdots & x^n_n\\
\end{pmatrix}
\begin{pmatrix}
x_1\\
x_2\\
\vdots\\
x_n\\
z_1\\
z_2\\
\vdots\\
z_n
\end{pmatrix}=b,
\end{equation}
where $b = (b_1, b_2, \cdots, b_n)^T$ is a bit string in $\{0, 1\}^n$ representing the commutation relation between $S_i$ and $P_c$, and $P_cS_iP_cS_i=b_i\id$. Note that the coefficient matrix is full row-rank due to the independence of the stabilizer generators $\{S_1, S_2, \cdots, S_n\}$. Row $i$ equaling the linear combination of other rows is equivalent to $S_i$ equaling the multiplication of other stabilizer generators. Thus, the solution exists, and a Pauli operator solution can be found with an algorithm of complexity $O(n^2)$ by solving the linear equation over $\mathbb{GF}(2)$~\cite{Cetin1991GF2}.

Since the dimension of the coefficient matrix is $n\times 2n$, the solution of $(x_j, z_j)$ is not unique. Any solution plus a vector in the null space of the coefficient matrix is still a solution. The null space of the coefficient matrix corresponds to all Pauli operators commuting with all stabilizer generators, which is just the stabilizer group formed by the stabilizer generators. Thus, the Pauli operator solution is unique in the multiplication of stabilizer generators.

\subsection{Finding $X$-type logical operations for a stabilizer code}\label{appendsc:xtypelogical}
Given a stabilizer code with generators ${S_1, S_2, \cdots, S_{n-k}}$, our second task is to find $k$ independent logical operators $L_1, L_2, \cdots, L_k$ from $\{I, X\}^{\otimes n}$. The stabilizer code can be represented by a stabilizer table, where each generator $S = \prod_j X_j^{x_j} Z_j^{z_j}$ corresponds to a row
$\begin{pmatrix}
x_1 & x_2 & \cdots & x_n & z_1 & z_2 & \cdots & z_n
\end{pmatrix}$. This table can be brought into the canonical form~\cite{Aaronson2004stabilizer}.
\begin{equation}
\begin{pmatrix}
A & B\\
C & 0
\end{pmatrix},
\end{equation}
where $B$ is an $r \times n$ full-rank matrix, $A$ is $r \times n$, and $C$ is $(n-k-r) \times n$ with rank $n-k-r$. Since $B$ has rank $r$, its row space has an orthogonal complement of dimension $n - r$. The rows of $C$, satisfying $B^T C = 0$, occupy $n-k-r$ dimensions of this orthogonal space, leaving $k$ dimensions. Let $D$ be a $k \times n$ matrix whose rows span the remaining orthogonal subspace so that $B^T D = 0$, and the matrix $\begin{pmatrix}
A & B\\
C & 0 \\
D & 0
\end{pmatrix}$ has full rank $n$. Then, the $X$-type logical operators can be defined as
\begin{equation}
L_i = \prod_{j=1}^n X_j^{D_{ij}}.
\end{equation}
The stabilizer table $\begin{pmatrix}
A & B\\
C & 0 \\
D & 0
\end{pmatrix}$ naturally defines the target stabilizer state within the given stabilizer code.

%%%%%%%%%%%%%%%%%%%%%%%%%%%%%%%%%%%%%%%%
% choose a style
%\bibliographystyle{ieeetr}
%\bibliographystyle{unsrt}
\bibliographystyle{apsrev4-1}
%%%%%%%%%%%%%%%%%%%%%%%%%%%%%%%%%%%%%%%%

%%%%%%%%%%%%%%%%%%%%%%%%%%%%%%%%%%%%%%%%
% choose a .bib file
\bibliography{bibQLDPC}
%%%%%%%%%%%%%%%%%%%%%%%%%%%%%%%%%%%%%%%%

%\nocite{*}
%\bibliography{apssamp}% Produces the bibliography via BibTeX.
\end{document}